\documentclass[11pt]{amsart}
\usepackage{amsmath,amssymb, amsfonts, amsthm, verbatim, color}
\usepackage{mathrsfs}  
\pagestyle{headings}
\usepackage[T1]{fontenc}
\usepackage{mathabx}
\usepackage{hyperref}
\usepackage[numbers]{natbib}
\usepackage{enumerate}

\usepackage{graphicx}

\headheight=8pt
\topmargin=0pt
\textheight=610pt
\textwidth=464pt
\oddsidemargin=3pt
\evensidemargin=3pt
\footskip=25pt

\setlength{\parindent}{0ex}

\numberwithin{equation}{section}
\numberwithin{figure}{section}
\newtheorem{theorem}{Theorem}
\newtheorem{proposition}[theorem]{Proposition} 

\newtheorem{lemma}[theorem]{Lemma}
\newtheorem{definition}[theorem]{Definition}
\newtheorem{remark}[theorem]{Remark}
\newtheorem{assumption}[theorem]{Assumption}
\numberwithin{theorem}{section}

\usepackage[dvipsnames]{xcolor}
\usepackage{tikz}
\usetikzlibrary{shapes,snakes}
\usetikzlibrary{calc}

\usetikzlibrary{shapes.misc}
\usetikzlibrary{math}

\newcommand{\bluesquare}[2]{
\draw[blue,line width=0.3mm] ($#1-(#2,#2)$) rectangle  ($#1+(#2,#2)$)
}
\newcommand{\redsquare}[2]{
\draw[red,line width=0.3mm] ($#1-(#2,#2)$) rectangle  ($#1+(#2,#2)$)
}

\newcommand*{\medcap}{\mathbin{\raisebox{1pt}{\scalebox{0.8}{\ensuremath{\bigcap}}}}}
\newcommand*{\medcup}{\mathbin{\raisebox{1pt}{\scalebox{0.8}{\ensuremath{\bigcup}}}}}

\DeclareMathOperator{\diam}{diam}

\newcommand{\ttau}{{\widetilde{\tau}}}
\newcommand{\tspec}{\widetilde{\sigma}(H_{\Lambda_l})}

\newcommand{\PI}{\text{PI}}
\newcommand{\FI}{\text{FI}}
\newcommand{\NR}{\text{NR}}

\DeclareMathOperator{\supp}{supp}

\DeclareMathOperator{\dist}{dist}
\newcommand{\ex}{\text{ex}}

\newcommand{\bR}{{\mathbb R}}

\newcommand{\bC}{{\mathbb C}}
\newcommand{\bZ}{{\mathbb Z}}
\newcommand{\bE}{{\mathbb E}}
\newcommand{\bP}{{\mathbb P}}

\newcommand{\bDx}{{\mathbb{D}_\bx}}

\newcommand{\Cgb}{\mathfrak{C}_{L,\ell}} 

\newcommand{\cV}{\mathcal{V}}

\newcommand{\boxx}{\Lambda_L(\bx)}

\newcommand{\cU}{{\mathcal{U}}}

\newcommand{\cE}{{\mathcal{E}}}

\newcommand{\cG}{\mathcal{G}}
\newcommand{\cS}{\mathcal{S}}
\newcommand{\cJ}{\mathcal{J}}
\newcommand{\cF}{\mathscr{F}}

\newcommand\Ups{\Upsilon}

\newcommand{\bx}{{\mathbf x}}
\newcommand{\by}{{\mathbf y}}
\newcommand{\bz}{{\mathbf z}}
\newcommand{\ba}{{\mathbf a}}
\newcommand{\bv}{{\mathbf v}}
\newcommand{\bu}{{\mathbf u}}
\newcommand{\bb}{{\mathbf b}}
\usepackage{xcolor}

\begin{document}
\title{An eigensystem approach to Anderson localization for multi-particle systems}
\author[B. Bringmann]{Bjoern Bringmann}
\author[D. Mendelson]{Dana Mendelson}
\thanks{DM was partially supported by NSF DMS-180069. }
\address{Department of Mathematics\\ University of California, Los Angeles\\ Los Angeles, CA, 90095}
\email{bringmann@math.ucla.edu}
\address{Department of Mathematics\\University of Chicago\\Chicago, IL, 60637 }
\email{dana@math.uchicago.edu}

\maketitle 

\begin{abstract}
This paper revisits the proof of Anderson localization for multi-particle systems. We introduce a multi-particle version of the eigensystem multi-scale analysis by Elgart and Klein, which had previously been used for single-particle systems.
\end{abstract}

\section{Introduction}

\emph{Anderson localization} \cite{Anderson58} is the absence of diffusion or transport in certain random media. Both experimental and theoretical aspects of Anderson localization do not only depend on the nature of the randomness, but also on other properties of the physical system, such as the number of particles and the strength of their interactions. In the following, the number of particles is denoted by $N$ and we distinguish three categories of physical systems:

\begin{enumerate}[(i)]
\item single-particle systems ($N=1$), 
\item multi-particle systems ($1<N<\infty$),
\item and infinite-particle systems ($N\rightarrow \infty$). 
\end{enumerate}

\subsection{Single-particle systems} 
The standard Anderson model for a single-particle system is given by 
\begin{equation}\label{intro:eq_single}
H= - \Delta + \lambda V \qquad \text{on } \ell^2(\mathbb{Z}^d). 
\end{equation}
Here, $d\geq 1$ is the spatial dimension, $\Delta$ is the centered, discrete Laplacian, $\lambda>0$ is a (large) disorder parameter, and $V= \{ V(x) \}_{x \in \mathbb{Z}^d}$ is a random potential. The assumptions on the random potential $V= \{ V(x) \}_{x \in \mathbb{Z}^d}$ commonly include the probabilistic independence of the potential at different sites and the absolute continuity of the single-site distribution with a bounded density, but may be  less or more restrictive in individual works. In the standard model \eqref{intro:eq_single}, Anderson localization may refer to either spectral or dynamical properties of $H$. The Hamiltonian $H$ is called spectrally localized if it has a pure-point spectrum and the eigenfunctions are exponentially localized. The Hamiltonian $H$ is called dynamically localized if,  for any $x\in \mathbb{Z}^d$,  the evolution $e^{-itH} \delta_x$ decays exponentially in space uniformly in time.  \\

The first proofs of Anderson localization \cite{DKS83,GMP77,KS78} were restricted to one-dimensional models. Later proofs of Anderson localization, which also treated higher-dimensional cases, are mostly based Aizenman and Molchanov's \emph{fractional moment method} \cite{AM93,AENSG06} or Fr\"{o}hlich and Spencer's \emph{multi-scale analysis} \cite{FS83,FMS85,GK01,DK89}. Both methods are based on estimates of the Green's function $G$ corresponding to the Hamiltonian $H$. The Green's function is defined by
\begin{equation}\label{intro:eq_single_Green}
G(x,y;z) := \langle \delta_x , (H-z)^{-1} \delta_y \rangle_{\ell^2(\mathbb{Z}^d)},
\end{equation}
where $x,y\in \mathbb{Z}^d$ and $z\in \mathbb{C}\backslash \sigma(H) \supseteq \mathbb{C}\backslash \mathbb{R}$  lies outside the spectrum of $H$. The Green's function $G$ is used because of its desirable algebraic properties, such as the resolvent identity. After the exponential decay of the Green's function has been proven, the spectral and dynamical localization of $H$ follow from the relationship between the Green's function and the eigenfunction correlator (cf. \cite[Chapter 7]{AW15}). The fractional moment method proceeds through direct estimates of $\mathbb{E}[ |G(x,y;z)|^s]$, where $0<s<1$. A detailed introduction can be found in the textbook \cite{AW15} or in the survey article \cite{Stolz11}. The multi-scale analysis (MSA) is more involved but often works under less restrictive conditions on the random potential. We note in particular the work of Bourgain and Kenig \cite{BK05} on localization near the edge of the spectrum for the Anderson-Bernoulli model on $\mathbb{R}^d$ for $d \geq 2$, the work of Ding and Smart \cite{DS20} which establishes the analogous result on $\mathbb{Z}^2$, and \cite{LZ19} for the extension of the previous result to $\mathbb{Z}^3$. Instead of estimating the Green's function directly, the (MSA) proceeds through an \emph{induction on scales}. To outline the induction procedure, we let $\Lambda_L := [-L,L]^d \medcap \mathbb{Z}^d$ be the discrete cube of width $2L$. The corresponding (finite-dimensional) Hamiltonian $H_{\Lambda_L}$ is defined by $H_{\Lambda_L}:= 1_{\Lambda_L} H 1_{\Lambda_L}$ and the corresponding Green's function $G_{\Lambda_L}$ is defined as in \eqref{intro:eq_single_Green} with $H$ replaced by $H_{\Lambda_L}$.  By using the second resolvent identity, estimates of the Green's function $G_{\Lambda_L}$ at scale $L$ can be reduced to estimates of the Green's function $G_{\Lambda_\ell}$ at a smaller scale $\ell$, where  $L=\ell^\gamma$ and $\gamma>1$ is fixed. Together with an initial scale estimate, which relies on Wegner estimates, this ultimately yields uniform estimates of $G_{\Lambda_L}$ in the length scale $L$. For an excellent introduction to the (MSA), we refer the reader to the lecture notes \cite{Kirsch07}. \\

More recently, Elgart and Klein \cite{EK16,EK19a,EK19} developed an \emph{eigensystem multi-scale analysis} (EMSA), which provides an alternative proof of Anderson localization.  Similar to the classical (MSA), the (EMSA) uses an induction on scales. Instead of working with the Green's function, however, the (EMSA) directly analyzes the eigensystem of the Hamiltonian on the discrete cube $\Lambda_L$. We anticipate that these methods may be more adaptable to the many-body setting, discussed further in Section \ref{sec:inf_part} below, since many statements for such systems are posed in terms of the eigenfunctions themselves.  It is perhaps of interest to note that certain aspects of the (EMSA) seem to have some similarities to techniques in nonlinear PDEs: for example, the mapping of (localized) eigenfunctions on smaller scales to (localized) eigenfunctions on larger scales (cf. \cite[Lemma 3.6]{EK16}) is similar to the stability theory for partial differential equations (see e.g. \cite[Section 3.7]{Tao06}). In contrast, the resolvent identity for the Green's function, which plays a similar role in the (MSA), seems to have no such simple analogue.

\subsection{Multi-particle systems}
We define the $N$-particle Anderson model by 
\begin{equation}\label{intro:eq_multi}
H^{(N)} = - \Delta^{(N)} + \lambda V + U. 
\end{equation}
Here, $\lambda>0$ is the disorder parameter. We define the $N$-particle Laplacian $\Delta^{(N)}$, the random potential $V\colon \bZ^{Nd} \rightarrow \bR$, and the interaction potential $U \colon \bZ^{Nd} \rightarrow \bR$ as follows: 
\begin{itemize}
\item  The $N$-particle Laplacian $\Delta^{(N)}$ is given by 
\begin{equation*}
\Delta^{(N)} \varphi( \bx) = \sum_{\substack{ \by \in \bZ^{Nd} \colon \| \by - \bx \|_{1}=1 }} \varphi(\by),
\end{equation*}
where $\varphi \colon \bZ^{Nd} \rightarrow \mathbb{C}$ and $\bx \in \bZ^{Nd}$. 
\item The (multi-particle) random potential $V \colon \bZ^{Nd} \rightarrow \bR$ is given by 
\begin{equation*}
V(\bx) = \sum_{j=1}^{N} \cV(x_j), 
\end{equation*}
where $\cV\colon \bZ^{d} \rightarrow \bR$ and $\bx =(x_1,\hdots,x_N) \in \bZ^{Nd}$. The conditions on the (single-particle) random potential $\cV = \{ \cV(u) \}_{u \in \bZ^d}$ will be detailed in Assumption \ref{assumption:V} below. 
\item Finally, 
 the interaction $U\colon \bZ^{Nd} \rightarrow \bR$ is given by 
\begin{equation*}
U(\bx) = \sum_{1\leq i < j \leq N} \cU(x_i - x_j), 
\end{equation*}
where $\mathcal{U}\colon \bZ^{d} \rightarrow \bR$ has finite support and $\bx =(x_1,\hdots,x_N) \in \bZ^{Nd}$. 
\end{itemize}
The most essential difference between the single- and multi-particle Anderson models \eqref{intro:eq_single} and \eqref{intro:eq_multi} lies in the probabilistic dependencies of the random potential. In the single-particle setting, $V(x)$ and $V(y)$ are probabilistically independent for any $x,y \in \bZ^d$ satisfying $x \neq y$. In the multi-particle setting, however, $V(\bx)$ and $V(\by)$ can be probabilistically dependent as soon as $\{ x_i \}_{i=1}^N$ and $\{ y_j \}_{j=1}^{N}$ are not disjoint, which occurs infinitely often (even for a fixed $\bx\in \bZ^{Nd}$). In fact, $V(\bx)$ and $V(\by)$ are not only probabilistically dependent but identical if $\by$ is a permutation of $\bx$. To be precise, we let $S_N$ be the symmetric group on $\{1,\hdots,N\}$. For any $\pi \in S_N$ and $\bx \in \bZ^{Nd}$, we define
\begin{equation*}
\pi \bx = ( x_{\pi(1)}, x_{\pi(2)},\hdots, x_{\pi(N)}). 
\end{equation*}
Then, it is clear from the definition of the random potential that $V(\bx)= V(\pi \bx)$ for all $\bx\in \bZ^{Nd}$ and $\pi \in S_N$. The permutation invariance is not only exhibited by the random potential $V$, but also by the $N$-particle Laplacian $\Delta^{(N)}$ and the interaction potential $U$. As a consequence of the permutation invariance,  it is natural to measure decay not in the standard  $\ell^\infty$-norm,  but instead in the symmetrized distance 
\begin{equation}\label{intro:eq_symmetrized_distance}
d_S(\bx,\by) := \min_{\bx,\by\in S_N} \| \bx - \pi \by \|_{\infty}. 
\end{equation}
The symmetrized distance $d_S$ can be viewed as a semi-metric on $\bZ^{Nd}$ or as a metric on the quotient space $\bZ^{Nd}/ S_N$. In some previous works, the symmetrized distance was replaced by the weaker Hausdorff-distance (cf. \cite{AW09}). 
Except for this change in the metric, the spectral and dynamical localization of the multi-particle Hamiltonian \eqref{intro:eq_multi} can be defined exactly as for single-particle systems. \\

Anderson localization for the multi-particle Hamiltonian \eqref{intro:eq_multi} has been proven using multi-particle versions of both the fractional moment method \cite{AW09,FW15} and multi-scale analysis \cite{CBS11,C12,C14,C15,C16,C17,C19,CS09a,CS09b,CS14}. The main idea, which was first used in \cite{AW09,CS09b}, is to prove Anderson localization for the multi-particle Hamiltonian \eqref{intro:eq_multi} through an induction on the number of particles. Informally speaking, the bounds on the Green's function or eigenfunction correlator in \cite{AW09,CS09b} distinguish the following two scenarios:  If the particle-configuration can be separated into two sub-configurations with distant particles, the desired conclusion follows from the induction hypothesis. If the particle-configuration cannot be further decomposed, then all particles need to be localized near a single site and form a ``quasi-particle'' (cf. \cite[Section 1]{AW09}). This case is then treated using single-particle methods. Of course, this informal sketch is overly simplistic and the probabilistic dependencies in the random potential turn out to be a challenge. \\ 

In this work, we present an alternative proof of Anderson localization for the $N$-particle Hamiltonian \eqref{intro:eq_multi}, which is based on the eigensystem multi-scale analysis of Elgart and Klein. Instead of striving for optimal or the most general results, we focus on accessible and simple arguments. We first describe the assumptions on the random potential which will be used in this paper.

\begin{assumption}\label{assumption:V}
Let $V= \{ V(x;\omega) \}_{x\in \bZ^d}$  be a sequence of independent, identically-distributed random variables. We assume that the single-site distribution is given by $\rho(v) \mathrm{d}v$, where the density $\rho \colon \bR \rightarrow \bR_{\geq 0}$ satisfies the following conditions:
\begin{enumerate}[(i)]
\item (Compact support) There exists a $v_{\max} \in \bR_{>0}$ such that $\operatorname{supp}(\rho) \subseteq [0,v_{\max}]$. 
\item \label{item:upper_lower_bound} (Upper/lower bound) There exist $\rho_{\min},\rho_{\max} \in \bR_{>0}$ such that 
\begin{equation*}
\rho_{\min}\leq \rho(v) \leq \rho_{\max} \qquad \text{for all } v \in [0,v_{\max}]. 
\end{equation*}
In particular, the density $\rho$ is strictly positive on its support. 
\item (Smoothness) \label{item:smoothness} The density $\rho$ is continuously differentiable on $(0,v_{\max})$ and 
\begin{equation*}
\| \rho^\prime(v) 1_{(0,v_{\max})}(v)\|_{L_v^\infty(\bR)} < \infty. 
\end{equation*}
\end{enumerate}
\end{assumption}

The assumptions on the random potential are physically reasonable. In particular, they include the uniform distribution on the compact interval $[0,v_{\max}]$. We note that Assumption \ref{assumption:V} excludes the case of Bernoulli random variables, which remains open. The full strength of Assumption \ref{assumption:V} is only used to obtain a regular conditional mean (see Definition \ref{definition:rcm} below), which may replace the lower bound in \eqref{item:upper_lower_bound} and \eqref{item:smoothness}. However, we found the current set of assumptions more accessible.  In comparison with other works on multi-particle localization, our assumptions are essentially identical to \cite{C12} but more restrictive than \cite{AW09,CS09a,CS09b}.

 In order to state our main result, we also need to introduce some additional notation. For any $\ba\in \bZ^{Nd}$ and $L>0$, we define the symmetrized $N$-particle cube by 
\begin{equation}
\Lambda_L(\ba) \equiv \Lambda_L^{(N)}(\ba) := \{ \bx \in \bZ^{Nd} \colon d_S(\bx,\ba)\leq L\big\}. 
\end{equation}
We emphasize that the side-length $L>0$ is not required to be an integer. While the cube $\Lambda_L(\ba)$ only depends on the integer part of $L$, this has notational advantages once we take fractional powers of $L$. In many statements below, where the center $\ba\in \bZ^{Nd}$ is not essential, we simply write $\Lambda_L$. Similar as for single-particle systems, we define the restriction of the Hamiltonian $H$ to the cube $\Lambda_L$ by $H_{\Lambda_L} = 1_{\Lambda_L} H 1_{\Lambda_L}$. In order to state the next definition, we let $\tau \in (0,1)$ be a parameter as in \eqref{intro:eq_parameter} below. We also denote by $\widetilde{\sigma}(H_{\Lambda_L})$ the multi-set containing the eigenvalues of $H_{\Lambda_L}$ repeated according to their multiplicities.

\begin{definition}[$m$-localizing]\label{definition:localizing}
Let $N\geq 1$, let $\Lambda_L = \Lambda_L^{(N)}(\ba)$, where $\ba \in \bZ^{Nd}$, $\bx\in \Lambda_L$, and $m>0$. Then, $\varphi \in \ell^2(\Lambda_L)$ is called $(\bx,m)$-localizing if $\| \varphi \|_{\ell^2}=1$ and 
\begin{equation*}
|\varphi(\by)| \leq \exp( - m \cdot d_S(\by,\bx) )  \qquad \forall  \, \by \in \Lambda_L\colon  d_S(\by,\bx)\geq L^\tau. 
\end{equation*}
The function $\varphi$ is called $m$-localizing if there exists an $\bx \in \Lambda_L$ such that  $\varphi$ is $(\bx,m)$-localizing. \\
Furthermore, the cube $\Lambda_L$ is called $m$-localizing for $H$ if there exists an orthonormal eigenbasis
\begin{equation*}
 (\varphi_\theta, \theta)_{\theta \in \widetilde{\sigma}(H_{\Lambda_L})} 
\end{equation*}
such that $\varphi_\theta$ is $m$-localizing for all $\theta \in \widetilde{\sigma}(H_{\Lambda_L})$. Instead of writing that $\Lambda_L$ is $m$-localizing for $H$, we sometimes simply write that $H_{\Lambda_L}$ is $m$-localizing, which places more emphasis on the Hamiltonian.
\end{definition}

Definition \ref{definition:localizing} is the multi-particle version of \cite[Definition 1.3 and Definition 1.5]{EK16}. Equipped with this definition, we can now state our main result. 

\begin{theorem}[Anderson localization]\label{theorem:AL}
Let $d=1$, let $N\geq 1$, and let $H^{(N)}$ be the $N$-particle Anderson Hamiltonian as in \eqref{intro:eq_multi}. Assume that  the localization length $m>0$, the decay parameter $p>0$, the initial length scale $L_0 \geq 1$, and the disorder parameter $\lambda>0$ satisfy 
\begin{equation*}
L_0 \geq C(m, N, p)
\qquad \text{and} \qquad 
\lambda \geq C(m, N, p, \| \rho\|_\infty, L_0), 
\end{equation*}
for some explicit constants. Then, we have for all $L\geq L_0$ that 
\begin{equation}\label{eq:AL}
\inf_{\ba \in \bZ^{Nd}} \mathbb{P} \big( \Lambda_L^{(N)}(\ba) \text{ is } m\text{-localizing for } H^{(N)} \big) \geq 1 - L^{-p}. 
\end{equation}
\end{theorem}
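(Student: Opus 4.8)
The plan is to establish Theorem~\ref{theorem:AL} by a double induction: an outer induction on the number of particles $N$ and, for each fixed $N$, an inner induction on the length scale along a sequence $L_{k+1}=L_k^{\gamma}$ with a fixed $\gamma>1$. The statement propagated through the scale induction is precisely the probabilistic bound \eqref{eq:AL} at scale $L_k$, with the localization length $m$ and the exponent $p$ kept essentially fixed (minor losses reabsorbed into the constants $C(m,N,p)$). The outer induction on $N$ makes rigorous the dichotomy recalled in the introduction: an $N$-particle configuration that separates into two distant sub-configurations is governed by lower-particle Hamiltonians, whereas a configuration that cannot be separated is confined near a single site and behaves like a single quasi-particle. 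For the base of the outer induction, $N=1$, the conclusion \eqref{eq:AL} is exactly the output of the single-particle eigensystem multi-scale analysis of Elgart and Klein \cite{EK16}. For the base of the scale induction, the initial-scale estimate at $L_0$, I would exploit the large-disorder hypothesis $\lambda\geq C(m,N,p,\|\rho\|_\infty,L_0)$: at large disorder $H^{(N)}_{\Lambda_{L_0}}$ is, with probability at least $1-L_0^{-p}$, a small perturbation of the diagonal operator $\lambda V$, and a perturbative eigenvector estimate shows every eigenfunction is sharply concentrated near a single configuration, hence $m$-localizing; the exceptional probability is controlled by a Wegner-type estimate following from Assumption~\ref{assumption:V}.

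The heart of the argument is the scale-induction step for fixed $N$: assuming \eqref{eq:AL} at scale $\ell$ for this $N$, together with the full conclusion for all $N'<N$, derive \eqref{eq:AL} at scale $L=\ell^{\gamma}$. Following the EMSA template, I would cover $\Lambda_L(\ba)$ by sub-cubes at scale $\ell$ and classify each such sub-cube. A \emph{partially interactive} sub-cube (PI) is one whose particle set splits into two groups separated by more than the range of $\mathcal{U}$, so that the restricted Hamiltonian decouples, up to an exponentially small error, into a sum of two Hamiltonians with fewer particles; the $N'<N$ induction hypothesis then makes such a sub-cube $m$-localizing off a small exceptional set, and crucially the two groups occupy disjoint single-particle sites, which restores probabilistic independence between them. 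A \emph{fully interactive} sub-cube (FI) has all particles confined to a ball of controlled radius, and is handled by a single-particle-style analysis of the resulting quasi-particle together with a Wegner estimate. Declaring a sub-cube ``bad'' if it is non-localizing or if a level-spacing (non-resonance) condition fails, one shows using these inputs that the probability that $\Lambda_L(\ba)$ contains two \emph{separated} bad sub-cubes is at most $L^{-p}$.

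On the complementary event every configuration $\bx\in\Lambda_L(\ba)$ lies in a good sub-cube. One then promotes an $m$-localizing eigenfunction $\varphi$ of a good sub-cube $\Lambda_\ell(\by)$ to an approximate eigenfunction of $H^{(N)}_{\Lambda_L(\ba)}$, the error being bounded by the exponential decay of $\varphi$ across the boundary, and converts the family of such approximate eigenfunctions into a genuine $m$-localizing orthonormal eigenbasis of $H^{(N)}_{\Lambda_L(\ba)}$ using the non-resonance condition and a spectral-perturbation argument in the spirit of \cite[Lemma~3.6]{EK16}. This yields that $\Lambda_L(\ba)$ is $m$-localizing with probability at least $1-L^{-p}$, closing the scale induction; running both inductions gives \eqref{eq:AL} uniformly in $\ba$ and $L\geq L_0$.

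The step I expect to be the main obstacle is the probabilistic bookkeeping forced by the dependence structure of $V$: unlike in the single-particle setting, two sub-cubes of $\Lambda_L(\ba)$ need not be independent, because $V(\bx)=\sum_j\cV(x_j)$ couples configurations that share a single-particle coordinate, and this coupling must be disentangled in the symmetrized distance after accounting for permutations. The resolution uses the regular conditional mean of Definition~\ref{definition:rcm} to obtain a conditional Wegner estimate that survives the dependencies, and one must verify that the geometric separation condition under which independence (or this conditional substitute) holds is met often enough for the union bound over bad sub-cubes to close. A secondary difficulty is the quantitative consistency of the two inductions: the constants $C(m,N,p)$ and $C(m,N,p,\|\rho\|_\infty,L_0)$ grow with $N$, so one must check that the reduction to $N'<N$ particles and the finite range of $\mathcal{U}$ do not erode the localization length $m$ or the exponent $p$ beyond what the scheme tolerates.
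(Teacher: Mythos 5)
Your outline captures the right machinery: the eigensystem multi-scale analysis, the cover of $\Lambda_L$ by $\ell$-cubes, the partially/fully interactive classification, a Wegner-type non-resonance condition, a large-disorder initial-scale estimate via diagonal dominance, and the reduction of partially interactive cubes to lower particle numbers on disjoint sites. Where you genuinely depart from the paper is the organization of the two inductions. You propose a true outer induction on the particle number, invoking the completed theorem for $N'<N$ (at all scales) before moving to $N$. The paper instead runs a \emph{single} scale induction that carries the statement for all $1\leq n\leq N$ simultaneously: the hypothesis at scale $L_k$ is the bound for every $n\leq N$ at that scale with a common degrading localization length $m_k$, and the particle-number reduction (Proposition~\ref{proposition:induction_on_n}) is invoked only inside the scale-induction step, applied to partially interactive $\ell$-cubes with $n<N$ particles using the scale-$\ell$ hypothesis.

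This difference is not cosmetic and hides a real pitfall in your route. In the paper's organization the PI sub-cubes at scale $\ell$ with $n<N$ particles are $m_{k-1}$-localizing with the \emph{same} intermediate length that appears for FI cubes, so the degradation $m_{k-1}\to m_k$ propagates uniformly across particle numbers. If instead you apply the finished $N'<N$ theorem at scale $\ell$, it only delivers the terminal target length $m$, which is strictly smaller than the intermediate $m_{k-1}$; feeding that weaker decay into the decay-iteration step of Theorem~\ref{thm:ind_scale} would drop the $N$-particle localization length below its target after one step and the scale induction would not close. This can be patched (apply the $N'<N$ theorem with an inflated target, say $2m$, noting the initial-scale constant in the paper can be taken decreasing in $m$), but you would have to do this explicitly; the paper's combined induction avoids the issue by construction. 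A smaller slip: the exceptional probability at the initial scale is controlled by a direct estimate on the potential-separation event (Proposition~\ref{prop:initial_scale}), not by a Wegner estimate; the Wegner bound (Proposition~\ref{prop:wegner}) enters only in the inductive step to control the non-resonance event $\cE_{\NR}$.
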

\begin{remark}
In particular, in the sequel, we will define parameters $\tau, \gamma, \beta$ in Section \ref{sec:not} to be used in the proof, and one may take the constants to be
\begin{equation*}
L_0 \geq C(\rho,m,\tau,\gamma,\beta,d) ~  p_\ast^{\frac{1}{\gamma\beta}+\frac{2}{1-\tau\gamma}}  
\qquad \text{and} \qquad 
\lambda \geq 2Nd \| \rho\|_\infty (1+e^{2m}) (N!)^2 (2L_0+1)^{2Nd} L_0^{p_\ast}, 
\end{equation*}
where 
\begin{equation*}
 p^\ast=  C(\gamma) \gamma^N \max(p,Nd). 
 \end{equation*}
\end{remark}

\begin{remark}
The main defect of Theorem \ref{theorem:AL} is the restriction to dimension $d=1$. This restriction is only used in the proof of the covering properties (Lemma \ref{lemma:cover}), but does not enter in other parts of the argument. The geometric difficulties occurring in dimension $d\geq 2$, which are a result of the symmetrization of the cubes, are further discussed in Section \ref{section:covers} and the appendix.  
\end{remark}

While Theorem \ref{theorem:AL} does not explicitly contain the spectral or dynamical localization of the $N$-particle Hamiltonian, both properties can be obtained as consequences of this result, as in \cite[Corollary 1.8]{EK16}. In Theorem \ref{theorem:AL}, we obtain polynomial tails in our probabilistic estimate. By working with polynomial  instead of exponential tails, we simplify the probabilistic aspects of the inductive step in the length scale (Theorem \ref{thm:ind_scale}). \\ 

The proof of Theorem \ref{theorem:AL} combines  the eigensystem multi-scale analysis of \cite{EK16,EK19a,EK19} with the induction on the number of particles from \cite{AW09,CS09b}. The main difficulties are tied to probabilistic dependencies in the random potential $V\colon \bZ^{Nd} \rightarrow \bR$, which affect the Wegner estimate (Proposition \ref{prop:wegner}) and the induction step (Theorem \ref{thm:ind_scale}).

\subsection{Infinite-particle systems} \label{sec:inf_part}

In recent years, there has been much activity and interest in Anderson localization for many-body systems ($N\rightarrow\infty$). Since a survey of this activity is well-beyond the scope of this introduction, we refer the reader to the review articles \cite{AP17,AADGHK17} in the physical literature and \cite{SS15,ANSS17} in the mathematical literature. We instead focus on the nonlinear Schr\"{o}dinger equation with a random potential given by
\begin{equation}\label{intro:RNLS}
i \partial_t q - \epsilon \Delta q + \cV q  \pm \delta |q|^2 q   = 0 \qquad (t,x) \in \mathbb{R} \times \mathbb{Z},
\end{equation}
where $\epsilon,\delta>0$. The nonlinear Schr\"{o}dinger equation describes the one-particle density of the $N$-particle Schr\"{o}dinger equation in the many-body limit (cf. \cite{Schlein13} and the references therein).  The random nonlinear Schr\"{o}dinger equation \eqref{intro:RNLS} was first studied by Bourgain and Wang \cite{BW07,BW08}, Fishman,  Krivolapov, and Soffer \cite{FKS08,FKS09}, Fr\"{o}hlich, Spencer, and Wayne \cite{FSW86}, and Wang and Zhang \cite{WZ09}. The previous results \cite{WZ09} imply that, with high probability, the evolution of any localized initial data remains localized up to the time $C_k (\epsilon+\delta)^{-k}$, where $k \geq 1$ is arbitrary. However, they do not provide detailed information on the asymptotic behavior as $t\rightarrow \infty$ for general localized initial data. During the preparation of this work, this problem was revisited by 
Cong, Shi, and Zhang \cite{CSZ20} and Cong and Shi \cite{CS20}. In \cite{CS20}, the authors show for any  localized initial data that the solutions to \eqref{intro:RNLS} satisfy
\begin{equation}
\sum_{j\in \bZ} j^2 |q_j(t)|^2 \leq C\big(\kappa, \big(q_j(0)\big)_{j\in \bZ} \big) |t|^\kappa
\end{equation}
for all $t\in \bR$, as long as $\delta$ and $\epsilon$ are sufficiently small depending on $\kappa$.  
 The proofs rely on certain normal form (or symplectic) transformations, which successively remove non-resonant terms inside the Hamiltonian. 
In particular, the methods are more closely related to the original form of the multi-scale analysis \cite{FS83} than its later implementations (see e.g. \cite{DK89,Kirsch07}). 

While we were initially motivated by the asymptotic behavior of \eqref{intro:RNLS}, we view the eigensystem multiscale analysis as a bridge between the dynamical system methods, as in \cite{CS20,CSZ20,BW07,BW08,FKS08,FKS09,FSW86,WZ09},   and more classical approaches to Anderson localization, as in
 \cite{AM93,AENSG06,FS83,FMS85,GK01,DK89}. We conclude this introduction with an interesting observation, which illustrates the differences between the two perspectives: in the multi-particle setting \cite{AW09,CS09b}, the worst dependence on the particle number $N$ in proofs of localization is a result of summing over all possible particle configurations inside the cube. In contrast, the worst-case in proofs of localization for \eqref{intro:RNLS} is a ``quasi-soliton'', which only has a few degrees of freedom (cf. \cite[p.~845]{FKS08}). 

\medskip 
\subsection{Notation} \label{sec:not}
We first choose the parameters that will be needed throughout this paper. We fix $\beta,\tau \in (0,1)$ and $\gamma>1$ such that 
\begin{equation}\label{intro:eq_parameter}
\beta < \gamma^{-1} <1 < \gamma < 2 \qquad \text{and} \qquad \max\Big( \gamma \beta, \gamma^{-1} \Big) < \tau <1. 
\end{equation}
Turning towards the geometric objects in this paper, we 
let $\Phi \subset \Theta \subset \bZ^{Nd}$ be symmetric sets and define the boundary 
\[
\partial^{\Theta} \Phi = \{ (u,v) \in \Phi \times (\Theta \setminus \Phi) \,:\, d_S(u,v) = 1\},
\]
the exterior boundary
\[
\partial_{ex}^{\Theta} \Phi = \{ v \in \Theta \setminus \Phi \,:\, \exists u \,\, s.t.\,\, (u,v) \in \partial^{\Theta} \Phi \},
\]
and the interior boundary
\[
\partial_{in}^{\Theta} \Phi = \{ u \in \Theta \setminus \Phi \,:\, \exists v \,\, s.t.\,\, (u,v) \in \partial^{\Theta} \Phi \}.
\]
For any $r \geq 1$, we also define
\begin{equation*}
\Phi^{\Theta,r} = \big \{ \bx \in \Phi \colon d_S(\bx, \Theta\backslash \Phi) \geq r \big\}.
\end{equation*}
Thus, $\Phi^{\Theta,r}$ contains all particle-configurations in $\Phi$ which have a symmetrized distance of at least $r$ to all particle configurations in $\Theta \backslash \Phi$. The sets $\partial_{ex}^{\Theta} \Phi$, $\partial_{\text{in}}^{\Theta} \Phi$, and $\Phi^{\Theta,r}$ are illustrated in Figure \ref{figure:geometric} below. \\
Furthermore, let $\bx \in \bZ^{Nd}$ and let $1\leq n \leq N$. We define the projection onto the $n$-th particle by
\begin{equation}
\Pi_n \bx = x_n. 
\end{equation}
We also denote the sites corresponding to the particle configurations in $\Theta \subset \bZ^{Nd}$ by 
\begin{equation}\label{eq:Pi}
\Pi \Theta = \bigcup_{n=1}^N \Pi_n\Theta.
\end{equation}
For any site $u \in \bZ^{d}$, we define the number operator $N_u \colon \bZ^{Nd} \rightarrow \mathbb{N}$ by
\begin{equation}\label{eq:number_operator}
N_u(\bx) = \# \{ 1 \leq n \leq N\colon x_n = u \}. 
\end{equation}
Finally, we recall the geometric decomposition of the Hamiltonian given by
\begin{equation}\label{eq:geometric_decomposition}
H_\Theta = ( H_\Phi \oplus H_{\Theta \backslash \Phi} ) + \Gamma_{\partial^\Theta \Phi}. 
\end{equation}
Here, $\oplus$ denotes the direct sum and the boundary term $\Gamma_{\partial^\Theta \Phi}$ is defined by 
\begin{equation*}
\big( \Gamma_{\partial^\Theta \Phi} \delta_\bx  \big) (\by) = 
\begin{cases} 
\begin{tabular}{ll}
$1$ &if $(\bx,\by)\in \partial^{\Theta} \Phi $ or $(\by,\bx)\in \partial^{\Theta} \Phi$, \\
$0$ & else.
\end{tabular}
\end{cases} 
\end{equation*}

\begin{figure}\label{figure:boxes}
\begin{tikzpicture}[scale=0.8]
\draw[<->, ultra thick] (0,6)--(0,0)--(6,0);

\node at (6,-0.5) {\small$x_1$};
\node at (-0.6,5.9) {\small$x_2$};

\draw [draw=white, fill=red, opacity=0.1] (1,1) rectangle (3,3);
\node at (2.65,2.65) {\Large \textcolor{red}{$\Phi$}};
\node at (4,2) {\Large \textcolor{red}{$\partial_{\text{in}}^{\Theta}\Phi$}};

\draw [draw=red, ultra thick] (3,1) -- (3,3) -- (1,3);

\draw [draw=OliveGreen, ultra thick] (3.1,1) -- (3.1,3.1) -- (1,3.1);
\node at (2,3.75)  {\Large \textcolor{OliveGreen}{$\partial_{\text{ex}}^{\Theta}\Phi$}};

\draw [draw=blue, ultra thick] (1,1) rectangle (5,5);
\node at (3,5.5) {\Large \textcolor{blue}{$\Theta$}};

\begin{scope}[xshift=9cm]
\draw[<->, ultra thick] (0,6)--(0,0)--(6,0);

\node at (6,-0.5) {\small$x_1$};
\node at (-0.6,5.9) {\small$x_2$};


\node at (2.65,2.65) {\Large \textcolor{red}{$\Phi$}};
\draw [draw=red, ultra thick] (3,1) -- (3,3) -- (1,3);

\draw [draw=violet, ultra thick, fill=violet, fill opacity=0.1] (1,1) rectangle (2.35,2.35);
\node at (1.7,1.5)  {\Large \textcolor{violet}{$\Phi^{\Theta,r}$}};

\draw [draw=violet, thick, <->] (2.35,1.75)--(3,1.75);
\node at (2.675,1.5) {\large \textcolor{violet}{$r$}};

\draw [draw=blue, ultra thick] (1,1) rectangle (5,5);
\node at (3,5.5) {\Large \textcolor{blue}{$\Theta$}};

\end{scope}

\end{tikzpicture}
\caption{\small{We choose $d=1$, $N=2$,  $\Theta=\Lambda_L^{(2)}(\ba)$, and $\Phi=\Lambda_{L/2}^{(2)}(\bb)$, where $L>0$ is unspecified. The centers  $\ba,\bb\in \bZ^2$ are chosen so that the lower-left corners of $\Theta$ and $\Phi$ coincide. The interior and exterior boundary are illustrated in the left diagram. The set $\Phi^{\Theta,r}$ with $r=L/4$ is illustrated in the right diagram. The boundary $\partial^{\Theta} \Phi$ is not illustrated here, since it is a subset of $\bZ^2 \times \bZ^2$ rather than $\bZ^2$.}}
\label{figure:geometric}
\end{figure}
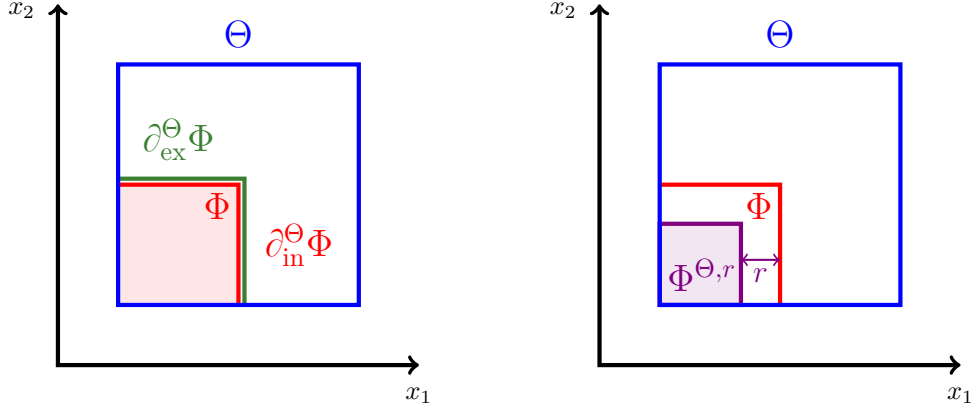

\textbf{Acknowledgements:} The authors thank Alexander Elgart, Fei Feng, Rupert Frank, Abel Klein, and Terence Tao for interesting and helpful discussions. D. M. also thanks Carlos Kenig, Charles Smart, and Wilhelm Schlag for prior conversations about Anderson localization.

\section{Multi-particle Wegner estimates}

In this section, we prove a minor variant of the multi-particle Wegner estimate from \cite[Theorem 3]{C14}. In order to state the result, we first recall the notion of spectral separation for a family of sets in $\bZ^{Nd}$. The proof of the multi-particle Wegner estimate, Proposition \ref{prop:wegner}, requires additional notation and is therefore placed in a separate subsection. 

\begin{definition}[Spectral separation]\label{definition:spectral_separation}
Let $L \geq 1$. We call two symmetric, finite sets \( \Theta_1,  \Theta_2 \in \bZ^{Nd} \) spectrally \( L\)-separated for \(H \) if 
\[
\dist(\sigma(H_{\Theta_1}),\sigma(H_{\Theta_2})) \geq 1/2 \cdot e^{-L^\beta}.
\]
 A family \( \{ \Theta_j \}_{j\in J} \) of finite, symmetric sets is called spectrally \( L\)-separated for \( H \) if \( \Theta_j \) and \( \Theta_{j^\prime} \) are spectrally \( L \)-separated for all \(j,j^\prime \in J \) such that 
 \begin{equation*}
 d_S(\Theta_j,\Theta_{j^\prime}) \geq 8N \max \big( \diam_S \Theta_j , \diam_S \Theta_{j^\prime}\big). 
 \end{equation*}
\end{definition}

We note that the spectral separation $\dist( \sigma(H_{\Theta_j}), \sigma(H_{\Theta_{j^\prime}})) \geq 1/2\, \cdot  \exp(-L^\beta)$ clearly fails if the two sets $\Theta_j$ and $\Theta_{j^\prime}$ coincide, which makes a condition on the distance between $\Theta_j$ and $\Theta_{j^\prime}$ natural. We will eventually consider families of cubes with side-lengths comparable to $\ell$, see e.g. the proof of Theorem \ref{thm:ind_scale}. 

\begin{proposition}[Multi-particle Wegner estimate]\label{prop:wegner}
Let $L,\ell \geq 1$ satisfy $L=\ell^\gamma$ and let the random potential satisfy Assumption \ref{assumption:V}. Let $\cJ$ be a finite index set and let $\{ \Theta_j \}_{j \in \cJ}$ be a family of finite, symmetric sets satisfying $\diam \Theta_j \leq 20 N \ell$ for all $j\in \cJ$. Then
\begin{equation}\label{eq:spectral_separation}
\bP \big( \{ \Theta_j \}_{j \in \cJ} \text{ is not spectrally } L \text{-separated}\big) \leq C(\rho)^{Nd} (\# \cJ)^2  N^{12Nd} \ell^{6Nd} e^{-\frac{1}{2} L^\beta}. 
\end{equation}
\end{proposition}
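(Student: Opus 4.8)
The plan is to reduce the statement to a single-pair Wegner estimate and then take a union bound over the $(\#\cJ)^2$ pairs. First I would fix two sets $\Theta_j, \Theta_{j'}$ with $d_S(\Theta_j,\Theta_{j'}) \geq 8N\max(\diam_S\Theta_j, \diam_S\Theta_{j'})$ — this separation is exactly what forces the two restricted Hamiltonians to involve \emph{disjoint} sets of random variables in a quantitative sense. Indeed, the separation hypothesis ensures that $\Pi\Theta_j$ and $\Pi\Theta_{j'}$ are far apart in $\bZ^d$, so no single-site potential value $\cV(u)$ enters both $H_{\Theta_j}$ and $H_{\Theta_{j'}}$; this is the key geometric observation that decouples the probabilistic dependencies which plague the multi-particle setting. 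The number of sites involved is $|\Pi\Theta_j| \leq N(\diam_S\Theta_j + 1) \leq C N^2 \ell$, which is where the polynomial-in-$\ell$ and polynomial-in-$N$ factors will come from.

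Next I would set up the conditioning argument: condition on all randomness outside $\Pi\Theta_{j'}$ (in particular, on $H_{\Theta_j}$ entirely), so $\sigma(H_{\Theta_j})$ becomes a fixed finite set of at most $|\Theta_j| \leq (20N\ell+1)^{Nd}$ points. It then suffices to bound, for each fixed eigenvalue $E$ of $H_{\Theta_j}$, the probability that $\dist(E, \sigma(H_{\Theta_{j'}})) < \tfrac12 e^{-L^\beta}$, i.e.\ that $H_{\Theta_{j'}}$ has an eigenvalue in an interval of length $e^{-L^\beta}$. This is a standard one-set Wegner estimate: write $H_{\Theta_{j'}} = H_0 + \lambda V_{\Theta_{j'}}$ where $V_{\Theta_{j'}}(\bx) = \sum_n \cV(\Pi_n\bx)$, and use that $\partial_{t} \operatorname{tr} 1_{(-\infty,E]}(H_0 + \lambda V + t\sum_u c_u \cV(u))$ can be controlled via the spectral averaging / Stollmann-type argument, exploiting the boundedness of the density $\rho$ from Assumption \ref{assumption:V}(ii)–(iii). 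The bounded-density hypothesis gives that the probability of an eigenvalue landing in an interval of length $\eta$ is $\leq C(\rho)\,|\Pi\Theta_{j'}|\,\|\rho\|_\infty\,\dim\ell^2(\Theta_{j'})\,\eta$; feeding in $\eta = e^{-L^\beta}$, $|\Pi\Theta_{j'}| \leq CN^2\ell$, and $\dim \leq (20N\ell+1)^{Nd}$ produces the $N^{12Nd}\ell^{6Nd}$ factor (with room to spare, absorbing the $(20N\ell+1)^{Nd}$ from summing over eigenvalues $E$ of $H_{\Theta_j}$ as well). Summing the bound $C(\rho)^{Nd} N^{12Nd}\ell^{6Nd} e^{-L^\beta}$ over all $(\#\cJ)^2$ pairs, and being slightly generous with constants to turn $e^{-L^\beta}$ into $e^{-\frac12 L^\beta}$, gives \eqref{eq:spectral_separation}.

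The main obstacle is the precise handling of the probabilistic \emph{dependence structure}: because $V(\bx)$ for $\bx\in\Theta_{j'}$ is a sum of single-site potentials $\cV(u)$ with $u$ ranging over $\Pi\Theta_{j'}$, one cannot simply vary one coordinate — one must run the Stollmann spectral-averaging argument over the multi-dimensional parameter $(\cV(u))_{u\in\Pi\Theta_{j'}}$ and verify the monotonicity needed so that a uniform shift of all these variables moves all eigenvalues of $H_{\Theta_{j'}}$ rigidly, after which the one-parameter averaging applies. This is where Assumption \ref{assumption:V} is genuinely used — the lower bound on $\rho$ and its differentiability underlie the existence of a regular conditional mean (Definition \ref{definition:rcm}), which is the clean way to extract a direction in potential-space along which the distribution has a bounded density. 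I expect this to be essentially the argument of \cite[Theorem 3]{C14} adapted to our normalization of the separation constant and the $e^{-L^\beta}$ scale, so the bookkeeping of the exponents $12Nd$ and $6Nd$ (chosen generously) is the only delicate point; the rest is a union bound.
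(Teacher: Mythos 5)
Your proposal breaks at the very first geometric step: it is \emph{not} true that the separation hypothesis $d_S(\Theta_j,\Theta_{j'}) \geq 8N\max(\diam_S\Theta_j,\diam_S\Theta_{j'})$ forces $\Pi\Theta_j$ and $\Pi\Theta_{j'}$ to be disjoint. Consider $N=2$, $d=1$, $\Theta_1 = \Lambda_1^{(2)}\big((0,0)\big)$, and $\Theta_2 = \Lambda_1^{(2)}\big((0,M)\big)$ for $M$ large. Then $\diam_S\Theta_1, \diam_S\Theta_2 \leq 2$ while $d_S(\Theta_1,\Theta_2) \geq M-2$, so the separation hypothesis holds with room to spare, yet $\Pi\Theta_1$ and $\Pi\Theta_2$ both contain sites near the origin, and $\cV(0)$ enters both $H_{\Theta_1}$ and $H_{\Theta_2}$. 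This is precisely what the paper emphasizes in the discussion following the statement: in the multi-particle setting, $H_{\Theta_1}$ and $H_{\Theta_2}$ can remain probabilistically dependent even at arbitrarily large symmetrized distance, which is what makes the multi-particle Wegner estimate genuinely harder than the single-particle version. (Disjointness of projections does hold for \emph{fully interactive} cubes --- that is Lemma \ref{lemma:fully_interactive} --- but the $\Theta_j$ in Proposition \ref{prop:wegner} are not assumed to be fully interactive; in particular, the proposition is later applied to buffered cubes $\Lambda_{10N\ell}(\bb)\medcap\Lambda_L$, which need not be.)

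Because of this, your conditioning step --- ``condition on all randomness outside $\Pi\Theta_{j'}$ so that $\sigma(H_{\Theta_j})$ becomes a fixed set'' --- is not available: $H_{\Theta_j}$ may itself depend on $\cV(u)$ for $u\in\Pi\Theta_{j'}$. The correct mechanism, which the paper uses, is \emph{weak separability} (Definition \ref{definition:weakly_separable}): one finds a region $\cS\subseteq\bZ^d$ in which the particle counts $N_1, N_2$ are constant on $\Theta_j$ and $\Theta_{j'}$ respectively and satisfy $N_1 \neq N_2$. Lemma \ref{lemma:distance_weakly_separable} shows the separation hypothesis implies weak separability. Then, conditioning on the fluctuations $\cF_\cS$ and averaging over the sample mean $\langle\cV\rangle_\cS$, a shift of $\langle\cV\rangle_\cS$ by $t$ moves $\sigma(H_{\Theta_j})$ rigidly by $\lambda N_1 t$ and $\sigma(H_{\Theta_{j'}})$ rigidly by $\lambda N_2 t$; since $N_1\neq N_2$ there is a nonzero \emph{relative} shift, and the regular conditional mean (Definition \ref{definition:rcm}, supplied by Assumption \ref{assumption:V} via Lemma \ref{lemma:assumption_implies_rcm}) controls the probability of a near-resonance. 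Your closing paragraph gestures at the RCM and at \cite[Theorem 3]{C14}, so you are in the right neighborhood, but you present it as a technical detour when it is in fact the entire content of the estimate --- the single-set Wegner bound you describe, applied to $H_{\Theta_{j'}}$ alone, is not the right object because the eigenvalues of $H_{\Theta_j}$ are not conditionally fixed.
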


Before we start with the proof of Proposition \ref{prop:wegner}, we include a short discussion of \eqref{eq:spectral_separation}. If the family $\{ \Theta_j \}_{j \in \cJ}$ consists only of two symmetric sets $\Theta_1$ and $\Theta_2$ satisfying $d_S(\Theta_1,\Theta_2) \geq 80 N^2 \ell$, then \eqref{eq:spectral_separation} estimates the probability
\begin{equation}\label{eq:disc_wegner_1}
\bP\Big( \dist(\sigma(H_{\Theta_1}), \sigma(H_{\Theta_2})) \geq 1/2 \cdot \exp(-L^\beta) \Big). 
\end{equation}
However, the standard Wegner estimate (as in \cite{Kirsch07,Kirsch08,Stollmann00,Wegner81}) only controls
\begin{equation}\label{eq:disc_wegner_2}
\sup_{E \in \bR} \bP\Big( \dist(E, \sigma(H_{\Theta})) \geq 1/2 \cdot \exp(-L^\beta) \Big). 
\end{equation}
In the single-particle case, estimates of \eqref{eq:disc_wegner_2} directly lead to estimates of \eqref{eq:disc_wegner_1}, since the condition $d(\Theta_1,\Theta_2)\geq 80\ell$ implies the probabilistic independence of $H_{\Theta_1}$ and $H_{\Theta_2}$. In the multi-particle case, however, the Hamiltonians $H_{\Theta_1}$ and $H_{\Theta_2}$ can be probabilistically dependent even for an arbitrarily large symmetric distance of $\Theta_1$ and $\Theta_2$. As a result, estimates of \eqref{eq:disc_wegner_1} are more difficult than estimates of \eqref{eq:disc_wegner_2}.

\subsection{Regular conditional means, weak separability, and applications}

In this subsection, we prove Proposition \ref{prop:wegner}. Since the methods here will not be used in the rest of the paper, we encourage the reader to skip this subsection during the first reading.
Given a finite set $\cS\subseteq \bZ^d$, we denote the sample mean of the random potential on $\cS$ by 
\begin{equation}
\langle \cV \rangle_\cS := \frac{1}{\# \cS} \sum_{u\in \cS} \cV(u). 
\end{equation}
For any site \( u \in \cS \), we further denote the fluctuations of \( \cV \) relative to the sample mean by 
\begin{equation}
\eta_{u,\cS} := \cV(u)- \langle  \cV \rangle_\cS.
\end{equation}
We denote the \( \sigma\)-algebra generated by the fluctuations \( \{ \eta_{u,\cS} \}_{u\in \cS} \) by \( \mathscr{F}_\cS\). Finally, we denote by \(  F_S( \cdot| \mathscr{F}_\cS) \) the conditional distribution function of \( \langle \cV \rangle_\cS \) given \( \mathscr{F}_\cS\), i.e., 
\begin{equation*}
F(s|\cF_\cS) := \bP( \langle \cV \rangle_\cS \leq s | \cF_\cS ). 
\end{equation*}
\begin{definition}[RCM]\label{definition:rcm}
We say that the random potential \( \cV \) has a regular conditional mean if there exist constants $A_1,A_2 \in [0,\infty)$ and  \( b_1,b_2,C_1,C_2\in (0,\infty) \) such that for any finite subset \( \cS \subseteq \bZ^d \), the conditional distribution function \( F(\cdot|\cF_\cS) \) satisfies 
\begin{equation}\label{eq:rcm}
\bP \Big( \sup_{t\in \bR} | F(t+2s|\cF_\cS) - F(t|\cF_\cS)| \geq C_1 (\# \cS)^{A_1} s^{b_1}\Big) \leq C_2 (\# \cS)^{A_2} s^{b_2} 
\end{equation}
for all $s>0$. 
\end{definition}

The simplest example of a random potential $\cV$ with a regular conditional mean is a standard Gaussian potential. In that case, the sample mean $\langle \cV \rangle_{\cS}$ is a Gaussian random variable with variance $(\# \cS)^{-1}$ which is independent of $\cF_\cS$. Thus, 
\begin{equation*}
 \sup_{t\in \bR} | F(t+2s|\cF_\cS) - F(t|\cF_\cS) | \leq \sqrt{\frac{2}{\pi}} (\# \cS)^{\frac{1}{2}} s. 
\end{equation*}
In particular, \eqref{eq:rcm} holds with $C_1= \sqrt{2/\pi}, A_1 = 1/2, b_1=1$, and any choice of $C_2,A_2,$ and $b_2$. In the next lemma, we show that Assumption \ref{assumption:V} also leads to a regular conditional mean.

\begin{lemma}\label{lemma:assumption_implies_rcm}
Assume that the random potential satisfies Assumption \ref{assumption:V}. Then, there exists a constant $C=C(\rho)>0$, depending only on the single-site density $\rho$, such that 
\begin{equation*}
\bP\big( \sup_{t\in \bR} | F(t+2s|\cF_\cS) - F(t|\cF_\cS) | > s^{\frac{1}{2}} \big)\leq C(\rho) (\# \cS)^6 s 
\end{equation*}
for all $\cS\subseteq \bZ^d$ and $s>0$. In particular, \eqref{eq:rcm} is satisfied with 
\begin{equation*}
C_1=1, ~ A_1=0,~  b_1=1/2, ~  C_2=C(\rho), ~ A_2=6, \quad \text{and} \quad b_2=1.
\end{equation*}
\end{lemma}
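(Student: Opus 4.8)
The plan is to reduce the statement to a single‑site computation and then control the sample mean $\langle \cV\rangle_\cS$ conditionally on $\cF_\cS$. Fix a finite set $\cS\subseteq\bZ^d$ with $n := \#\cS$, and pick a reference site $u_0\in\cS$. The key observation is that the random vector $(\cV(u))_{u\in\cS}$ can be changed of variables to $\big(\langle\cV\rangle_\cS, (\eta_{u,\cS})_{u\in\cS\setminus\{u_0\}}\big)$, a linear and invertible transformation (the remaining fluctuation $\eta_{u_0,\cS}$ is determined by the others since $\sum_{u\in\cS}\eta_{u,\cS}=0$). Under this change of variables the joint density of $(\cV(u))_{u\in\cS}$, which is $\prod_{u\in\cS}\rho(\cV(u))$, pushes forward (up to the constant Jacobian factor $n$) to a joint density for $\big(m,(\eta_u)_{u\neq u_0}\big)$ of the form $n\prod_{u\in\cS}\rho\big(m+\eta_u\big)$, where $\eta_{u_0}:=-\sum_{u\neq u_0}\eta_u$. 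Consequently, conditionally on $\cF_\cS$, i.e.\ on fixed fluctuations $(\eta_u)_{u\in\cS}$, the sample mean $m=\langle\cV\rangle_\cS$ has the explicit conditional density
\[
g(m\mid\cF_\cS) \;=\; \frac{\prod_{u\in\cS}\rho(m+\eta_u)}{\int_{\bR}\prod_{u\in\cS}\rho(m'+\eta_u)\,\ud m'}\,,
\]
so that $F(t+2s\mid\cF_\cS)-F(t\mid\cF_\cS)=\int_t^{t+2s} g(m\mid\cF_\cS)\,\ud m \le 2s\,\|g(\cdot\mid\cF_\cS)\|_{L^\infty}$.

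The next step is to bound $\|g(\cdot\mid\cF_\cS)\|_{L^\infty}$ in terms of the random quantity
\[
Z(\cF_\cS) \;:=\; \int_{\bR}\prod_{u\in\cS}\rho(m'+\eta_u)\,\ud m',
\]
namely $\|g(\cdot\mid\cF_\cS)\|_\infty \le \rho_{\max}^{\,n}\,/\,Z(\cF_\cS)$ using the uniform upper bound in Assumption \ref{assumption:V}\eqref{item:upper_lower_bound}. Since for all $u$ the shifted densities $\rho(\cdot+\eta_u)$ are supported in $[-\eta_u,\,v_{\max}-\eta_u]$, the product is supported in the (possibly empty) interval $I:=[\max_u(-\eta_u),\,\min_u(v_{\max}-\eta_u)]$, whose length is $\ell(I)=v_{\max}-(\max_u\eta_u-\min_u\eta_u)$. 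On the event $\{\ell(I)\ge v_{\max}/2\}$ we get, using the lower bound $\rho\ge\rho_{\min}$ on the support, $Z(\cF_\cS)\ge \rho_{\min}^{\,n}\cdot (v_{\max}/2)$, hence $\|g(\cdot\mid\cF_\cS)\|_\infty \le (2/v_{\max})(\rho_{\max}/\rho_{\min})^{\,n}$; this yields the desired deterministic bound $F(t+2s\mid\cF_\cS)-F(t\mid\cF_\cS)\le s^{1/2}$ on that event provided $s$ is small, while for $s$ not small the inequality is trivial since the left side is $\le 1\le s^{1/2}$ once $s\ge1$, and for the intermediate range one absorbs the constant $(\rho_{\max}/\rho_{\min})^n$ — this is exactly why the factor $(\#\cS)^6$ and the constant $C(\rho)$ appear once we pass to the probability estimate. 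So the bad event is contained in $\{\ell(I)<v_{\max}/2\}=\{\operatorname{diam}\{\eta_u:u\in\cS\}>v_{\max}/2\}$, which, since $\eta_u=\cV(u)-\langle\cV\rangle_\cS$, is contained in $\{\exists\,u,u'\in\cS:\ |\cV(u)-\cV(u')|>v_{\max}/2\}$ — but that has probability $1$, so a naive union bound is useless and one must instead use a sharper, $s$‑dependent choice of threshold.

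The correct route, and the main obstacle, is to avoid the crude dichotomy above and instead estimate directly the probability that $Z(\cF_\cS)$ is \emph{small}, since that is precisely what forces $\|g\|_\infty$ to be large. Concretely, I would fix $s>0$, set the threshold event $\mathcal B_s := \{\, Z(\cF_\cS) < \rho_{\max}^{\,n} s^{1/2}\,\}$, note that off $\mathcal B_s$ one has $\|g(\cdot\mid\cF_\cS)\|_\infty\le s^{-1/2}$ and therefore $F(t+2s\mid\cF_\cS)-F(t\mid\cF_\cS)\le 2s\cdot s^{-1/2}=2s^{1/2}$ (and, reabsorbing the $2$ by a mild adjustment of $s$ or by the triviality for large $s$, $\le s^{1/2}$), and then bound $\bP(\mathcal B_s)$. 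To estimate $\bP(\mathcal B_s)=\bP\big(\int_{\bR}\prod_u\rho(m'+\eta_u)\,\ud m'<\rho_{\max}^n s^{1/2}\big)$, observe that $Z(\cF_\cS)$ is bounded below by $\rho_{\min}^{\,n}\cdot\ell(I)_+$, so $\mathcal B_s\subseteq\{\ell(I)<(\rho_{\max}/\rho_{\min})^n s^{1/2}\} = \{\operatorname{diam}\{\eta_u\} > v_{\max}-(\rho_{\max}/\rho_{\min})^n s^{1/2}\}$; that event in turn is contained in $\bigcup_{u,u'\in\cS}\{\cV(u)-\cV(u') > v_{\max} - (\rho_{\max}/\rho_{\min})^n s^{1/2}\}$, a union of $\le n^2$ events each of which — by the smoothness and boundedness of $\rho$, Assumption \ref{assumption:V}\eqref{item:smoothness} and \eqref{item:upper_lower_bound}, so that the density of $\cV(u)-\cV(u')$ is bounded near the endpoint $v_{\max}$ — has probability $\lesssim_\rho \big((\rho_{\max}/\rho_{\min})^n s^{1/2}\big)^{?}$. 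The exponent here must be arranged to beat the combinatorial loss; this is where the polynomial power $(\#\cS)^6$ and the linear factor $s$ on the right of the claimed inequality get produced, and balancing the exponential‑in‑$n$ constant $(\rho_{\max}/\rho_{\min})^n$ against these polynomial losses (possibly by re‑parametrizing $s\mapsto s/C(\rho)^n$ and noting that shrinking $s$ only helps) is the delicate bookkeeping step. Once $\bP(\mathcal B_s)\le C(\rho)(\#\cS)^6 s$ is established, combining with the deterministic bound off $\mathcal B_s$ gives the lemma, and the translation to the parameters $(C_1,A_1,b_1,C_2,A_2,b_2)=(1,0,1/2,C(\rho),6,1)$ is immediate from comparing with \eqref{eq:rcm}.
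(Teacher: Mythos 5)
The paper does not prove this lemma from scratch: it simply invokes \cite[Theorem~4]{Chulaevsky13} with $\alpha=1/2$ and notes a cosmetic modification. Your proposal instead attempts a self-contained derivation, and its skeleton is the right one: pass to the explicit conditional density
\[
g(m\mid\cF_\cS)=\frac{h(m)}{\int_{\bR} h},\qquad h(m):=\prod_{u\in\cS}\rho(m+\eta_u),
\]
bound $F(t+2s\mid\cF_\cS)-F(t\mid\cF_\cS)\le 2s\,\|g\|_\infty$, and control the tail of $\|g\|_\infty$. However, the concrete estimates you insert cannot deliver the stated polynomial dependence on $\#\cS$.

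The problem is the pair of crude bounds $\|g\|_\infty\le\rho_{\max}^{n}/Z$ and $Z\ge\rho_{\min}^{n}\,\ell(I)_+$, which together force you to look at the event $\{\ell(I)<(\rho_{\max}/\rho_{\min})^{n}s^{1/2}\}$. The threshold $(\rho_{\max}/\rho_{\min})^{n}s^{1/2}$ is \emph{exponentially} large in $n=\#\cS$, so the resulting probability bound carries a factor $(\rho_{\max}/\rho_{\min})^{2n}$ that cannot be absorbed into the polynomial $(\#\cS)^6$ claimed in the lemma. Your suggested repair, ``re-parametrize $s\mapsto s/C(\rho)^n$,'' does not work either: the lemma must hold for the given $s$, and shrinking $s$ only moves the goalposts without closing the gap. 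The proposal acknowledges ``delicate bookkeeping'' at exactly this point, but the bookkeeping is not merely delicate --- the inequality as you have set it up is false for large $\#\cS$ unless $\rho$ is uniform (when $\rho_{\max}=\rho_{\min}$ and the exponential factor disappears).

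The missing ingredient is precisely Assumption~\ref{assumption:V}\eqref{item:smoothness}, which you mention only in a peripheral role. Since $\rho'$ is bounded on $(0,v_{\max})$ and $\rho\ge\rho_{\min}>0$ on the support, the function $\log h$ is Lipschitz on $I$ with constant at most $nK$, $K:=\|\rho'\|_\infty/\rho_{\min}$. From this one gets
\[
\|g\|_\infty \;=\;\frac{\max_I h}{\int_I h}\;\le\; \frac{nK}{1-e^{-nK\ell(I)/2}}\;\lesssim\;\max\bigl(nK,\,1/\ell(I)\bigr),
\]
which has \emph{no} $(\rho_{\max}/\rho_{\min})^{n}$ loss. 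With this, the bad event splits into the deterministic condition $nK\gtrsim s^{-1/2}$ (under which $C(\rho)(\#\cS)^6 s\ge1$ trivially) and the event $\{\ell(I)\lesssim s^{1/2}\}$, whose probability your ``corner'' computation already bounds by $\lesssim_{\rho} n^2 s$. So the skeleton is salvageable, but not with the $\rho_{\max}^n/\rho_{\min}^n$ bounds; the smoothness of $\rho$ must enter where you bound $\|g\|_\infty$, not merely where you bound the tail of $\cV(u)-\cV(u')$.
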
 

\begin{proof}
This is a consequence of \cite[Theorem 4 with $\alpha=1/2$]{Chulaevsky13}. The restriction on $s$ in \cite[Theorem 4]{Chulaevsky13} is circumvented by increasing the power of $N$ in \cite[(6.5)]{Chulaevsky13}, which makes the bound trivial in the excluded range. 
\end{proof}

This finishes our discussion of Definition \ref{definition:rcm} and its relation to Assumption \ref{assumption:V}. We now turn to the more geometric aspects of the Wegner estimates. 

\begin{definition}\label{definition:weakly_separable}
We call two symmetric sets \( \Theta_1,\Theta_2 \subseteq \bZ^{Nd} \) weakly separable if there exists a set of sites \( \cS \subseteq \bZ^d \) and \(  0 \leq N_1 \neq N_2 \leq N \) such that
\begin{equation}\label{eq:weakly_separable}
\# \{ 1 \leq j \leq N \colon x_j \in \cS \}= N_1 \quad \text{and} \quad \# \{ 1\leq j \leq N \colon y_j \in \cS\}=N_2 \quad \text{for all } (\bx,\by)\in \Theta_1\times \Theta_2. 
\end{equation}
\end{definition}

In other words, $\Theta_1$ and $\Theta_2$ are called weakly separable if we can find a region $\cS\subseteq \bZ^d$ in which the number of particles is constant for each set $\Theta_1$ and $\Theta_2$ and differs between the two sets. An example of two weakly separable sets is given by the two cubes $\Lambda_L(\bx)$ and $\Lambda_L(\by)$ in Figure \ref{figure:two_particle_cubes}. In that example, \eqref{eq:weakly_separable} holds with $\cS=\{3,4,5\}$, $N_1=1$, and $N_2=2$.  

\begin{figure}
\begin{tikzpicture}[scale=0.8]
\draw[->, ultra thick] (0,0)--(6,0);
\draw[->, ultra thick] (0,0)--(0,6);

\node at (6,-0.5) {\small$x_1$};
\node at (-0.6,5.9) {\small$x_2$};
\node at  (4, -0.4) {\small $4$} ;
\node at  (-0.4, 4) {\small $4$} ;
\node at  (-0.4, 1) { \small$1$} ;
\node at  (1, -0.4) { \small$1$} ;
\draw[thick] (1,0.15)--(1,-0.15);
\draw[thick] (0.15,1)--(-0.15,1);
\draw[thick] (4,0.15)--(4,-0.15);
\draw[thick] (0.15,4)--(-0.15,4);

\draw [fill=blue, draw=black, fill opacity=0.3] (3,3)--(3,5)--(5,5)--(5,3)--(3,3);
\draw [fill=orange, draw=black, fill opacity=0.3] (3,0)--(3,2)--(5,2)--(5,0)--(3,0);
\draw [fill=orange, draw=black, fill opacity=0.3] (0,3)--(2,3)--(2,5)--(0,5)--(0,3);

\draw[fill=black] (4,1) circle (0.5mm);
\node[below] at (4.2,1) {\small$\bx$};
\draw[fill=black] (1,4) circle (0.5mm);
\node[below] at (1.2,4) {\small$\pi\bx$};
\draw[fill=black] (4,4) circle (0.5mm);
\node[below] at (4.2,4) {\small$\by$};
\end{tikzpicture}
\caption{\small{We display two-particle cubes in one spatial dimension. We let \( \bx = (4,1) \), \( \by = (4,4) \), and \( L=1 \). If \( \pi \in S_2 \) is the permutation given by \( \pi(1)=2 \) and \( \pi(2)=1 \), we have \( \pi \bx = (1,4) \). The two-particle cube \( \Lambda_L^{(2)}(\bx) \) corresponds to the orange area  and the two-particle cube \( \Lambda_L^{(2)}(\by) \) corresponds to the blue area.} }
\label{figure:two_particle_cubes}
\end{figure}
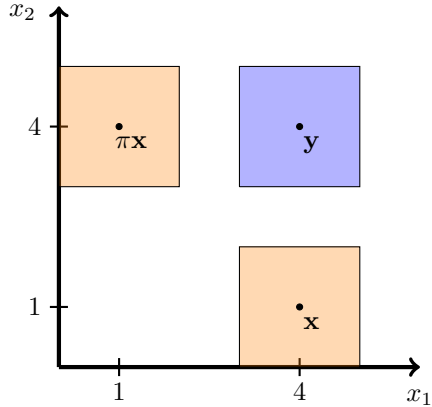

\medskip
In order to use Definition \ref{definition:weakly_separable} in our proof of multi-particle localization, we clarify the relation between weak separability and the symmetric distance of sets, which is done in the next lemma.

\begin{lemma}\label{lemma:distance_weakly_separable}
If \( \Theta_1,\Theta_2\subseteq \bZ^{Nd} \) are symmetric and satisfy 
\begin{equation}
d_S(\Theta_1,\Theta_2) \geq 8N \max(\diam_S (\Theta_1),\diam_S(\Theta_2)),
\end{equation}
then \( \Theta_1\) and \( \Theta_2 \) are weakly separable. 
\end{lemma}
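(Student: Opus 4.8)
The plan is to read off the witnessing data $(\cS,N_1,N_2)$ of Definition~\ref{definition:weakly_separable} directly from the occupied sites of one configuration in each set, after a coarse clustering of those sites. Write $D:=\max(\diam_S\Theta_1,\diam_S\Theta_2)$, fix arbitrary $\ba=(a_1,\dots,a_N)\in\Theta_1$ and $\bb=(b_1,\dots,b_N)\in\Theta_2$, and form the graph on the site set $\{a_1,\dots,a_N,b_1,\dots,b_N\}\subseteq\bZ^d$ whose edges join pairs at $\ell^\infty$-distance at most $3D$. Let $G_1,\dots,G_r$ (with $r\le 2N$) be the vertex sets of its connected components; by the usual single-linkage facts each $G_s$ has $\ell^\infty$-diameter at most $(2N-1)\cdot 3D$, and $\dist_\infty(G_s,G_{s'})>3D$ whenever $s\neq s'$. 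Record $\alpha_s:=\#\{j:a_j\in G_s\}$ and $\beta_s:=\#\{j:b_j\in G_s\}$, so $\sum_s\alpha_s=\sum_s\beta_s=N$.

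First I would rule out $\alpha_s=\beta_s$ for all $s$. In that case one bijects, inside each cluster $G_s$, the indices $j$ with $a_j\in G_s$ against those with $b_j\in G_s$; concatenating these bijections yields a permutation $\pi\in S_N$ with $a_j$ and $b_{\pi(j)}$ in a common cluster for every $j$, hence $|a_j-b_{\pi(j)}|_\infty\le(2N-1)\cdot 3D$ and $d_S(\Theta_1,\Theta_2)\le d_S(\ba,\bb)\le(6N-3)D$. Since $6N-3<8N$, this contradicts $d_S(\Theta_1,\Theta_2)\ge 8ND$ (assuming $D\ge 1$, which one may do). So there is an index $s_0$ with $\alpha_{s_0}\neq\beta_{s_0}$.

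Then I would promote the counts from $\ba,\bb$ to all of $\Theta_1,\Theta_2$ by setting $\cS:=\{u\in\bZ^d:\dist_\infty(u,G_{s_0})\le D\}$. For $\bx\in\Theta_1$, pick $\pi$ with $|x_j-a_{\pi(j)}|_\infty\le d_S(\bx,\ba)\le D$; then $x_j\in\cS$ exactly when $a_{\pi(j)}\in G_{s_0}$, because if $a_{\pi(j)}$ lies in some other cluster $G_{s'}$ then $x_j\in\cS$ would force $\dist_\infty(G_{s'},G_{s_0})\le 2D<3D$, which is impossible. Hence $\#\{j:x_j\in\cS\}=\#\{i:a_i\in G_{s_0}\}=\alpha_{s_0}$, independently of $\bx$, and the identical argument with $\bb$ in place of $\ba$ gives $\#\{j:y_j\in\cS\}=\beta_{s_0}$ for every $\by\in\Theta_2$. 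Since $0\le\alpha_{s_0}\neq\beta_{s_0}\le N$, the triple $(\cS,\alpha_{s_0},\beta_{s_0})$ exhibits the weak separability of $\Theta_1$ and $\Theta_2$.

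The only point needing care is the calibration of the linking threshold: it must exceed $2D$ so that the cluster-separation estimate and the ``no coordinate escapes the fattened cluster $\cS$'' computation go through, yet it must satisfy $(2N-1)\times(\text{threshold})<8ND$ so that the all-counts-equal scenario contradicts the distance hypothesis. The choice $3D$ sits in this window precisely because $3(2N-1)<8N$, and this is essentially the only place the constant $8N$ of the hypothesis is exploited; any comparable constant would do. The degenerate case $D=0$, in which both sets are single $S_N$-orbits, would be treated separately by comparing the two multisets of occupied sites directly.
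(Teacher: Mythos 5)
Your argument is correct and follows essentially the same route as the paper's: the paper also clusters the sites of two fixed configurations $\ba\in\Theta_1$, $\bb\in\Theta_2$ (there via connected components of $\bigcup_j\Lambda_L(a_j)\cup\bigcup_j\Lambda_L(b_j)$, which is the same as single-linkage with threshold $2D$ followed by fattening), rules out the all-counts-equal case by the same diameter bound against the $8N$ hypothesis, and takes $\cS$ to be the exceptional cluster (already fattened in the paper's formulation). Your choice of threshold $3D$ and explicit $D$-fattening of $G_{s_0}$ are cosmetic variants of the paper's $\cS=\Gamma_m$; you have also correctly isolated the only place the constant $8N$ enters, and both your proof and the paper's share the harmless degenerate case $D=0$ you flag.
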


This is a slight generalization of \cite[Lemma 2.1]{C14}. 

\begin{proof}
The main idea behind the proof is that \( \bx,\by \in \bZ^{Nd} \) satisfy \( d_S(\bx,\by)=0 \) if and only if \( N_u(\bx)=N_u(\by) \) for all \( u \in \bZ^d \). We now present a more stable version of this elementary fact. \\
To simplify the notation, we set \( L:= \max( \diam_S(\Theta_1),\diam_S(\Theta_2)) \). We also fix two particle-configurations  \( \ba \in \Theta_1 \) and \( \bb \in \Theta_2\). We denote by \( \Gamma_1,\hdots,\Gamma_M \), where \( 1 \leq M \leq N \), the connected components of 
\begin{equation}
\Big(\bigcup_{j=1}^{J} \Lambda_{L}(a_j)\Big) \medcup  \Big(\bigcup_{j=1}^{J} \Lambda_{L}(b_j)\Big) .
\end{equation}
For each \( 1 \leq m\leq M \), we let \( \cJ_m(\ba) \) and \( \cJ_m(\bb) \) be the indices corresponding to the connected component \( \Gamma_m \). Since $\Gamma_m$ is connected, we obtain that 
\begin{equation*}
\diam \Gamma_m \leq \sum_{j\in \cJ_m(\ba)} \diam \Lambda_L(\ba_j) +  \sum_{j\in \cJ_m(\bb)} \diam \Lambda_L(\bb_j) \leq 4 N L. 
\end{equation*}
We now claim that there exists an \( 1\leq m\leq M \) such that 
\begin{equation}\label{eq:Jm}
\# \cJ_m(\ba) \neq \# \cJ_m(\bb). 
\end{equation}
If not, then \( \ba \) and \( \bb \) are close in the symmetrized distance. More precisely, it holds that
\begin{equation*}
d_S(\Theta_1,\Theta_2)\leq d_S(\ba,\bb) \leq \max_{m=1,\hdots,M} \| \ba_{\cJ_m(\ba)} - \bb_{\cJ_m(\bb)} \| \leq \max_{m=1,\hdots,M} \diam( \Gamma_m) \leq 4NL,
\end{equation*}
which is a contradiction. For any \( m \) as in \eqref{eq:Jm}, we define \( \cS:= \Gamma_m \), \( N_1 := \# \cJ_m(\ba) \), and \( N_2 := \# \cJ_m(\bb) \). It remains to verify the conditions in Definition \ref{definition:weakly_separable}. Since the argument for \( \Theta_2 \) is similar, it suffices to check for all \( \bx \in \Theta_1 \) that 
\begin{equation}\label{eq:occupation_x}
\# \{ 1 \leq j \leq N \colon x_j \in \cS\} = N_1 .
\end{equation}
Due to the permutation invariance of \eqref{eq:occupation_x}, we may assume that \( \| \bx - \ba \|_{\ell^\infty} \leq \diam_S(\Theta_1) \leq L \). It then follows for all \( 1 \leq j \leq N \) that \( x_j \in \Lambda_L(a_j) \). Due to the equivalence
\begin{equation*}
\Lambda_L(a_j) \medcap \Gamma_m \neq \emptyset \quad \Leftrightarrow \quad \Lambda_L(a_j) \subseteq  \Gamma_m \quad \Leftrightarrow \quad j \in \cJ_m(\ba),  
\end{equation*}
it follows that \( x_j \in \Gamma_m \) if and only if \( j \in \cJ_m(\ba) \). This yields
\begin{equation*}
\# \{ 1 \leq j \leq N \colon x_j \in \cS\}  = \# \{ 1 \leq j \leq N \colon x_j \in \Gamma_m \}= \# \cJ_m(\ba) = N_1,
\end{equation*} 
which yields \eqref{eq:occupation_x} and completes the proof. 
\end{proof}

The following proposition is a minor extension of \cite[Theorem 3]{C14}. 
\begin{proposition}[Basic multi-particle Wegner estimate]\label{prop:rcm_wegner}
Let \( V \) be a random potential with a regular conditional mean as in Definition \ref{definition:rcm} with constants $A_i, b_i, C_i$ for $i=1,2$. Let \( \Theta_1, \Theta_2 \subseteq \bZ^{Nd} \) be two finite and symmetric sets with 
\begin{equation}
 d_S(\Theta_1,\Theta_2) \geq 8N \max(\diam_S(\Theta_1),\diam_S(\Theta_2)).
\end{equation}
For any \( s >0 \), it then holds that 
\begin{equation}
\begin{aligned}
&\bP( \dist( \sigma( H_{\Theta_1}),\sigma( H_{\Theta_2}))\leq s )\\
& \leq C_1 (\# \Theta_1) (\# \Theta_2) \# (\Theta_1 \medcup \Theta_2)^{A_1} (2s)^{b_1} + C_2 \# (\Theta_1 \medcup \Theta_2)^{A_2} (2s)^{b_2},
\end{aligned}
\end{equation}
\end{proposition}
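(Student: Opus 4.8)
The plan is to exploit the weak separability guaranteed by Lemma \ref{lemma:distance_weakly_separable}: since $d_S(\Theta_1,\Theta_2) \geq 8N \max(\diam_S(\Theta_1),\diam_S(\Theta_2))$, there exist a finite set of sites $\cS \subseteq \bZ^d$ and integers $0 \leq N_1 \neq N_2 \leq N$ such that every $\bx \in \Theta_1$ places exactly $N_1$ particles in $\cS$ and every $\by \in \Theta_2$ places exactly $N_2$ particles in $\cS$. The key observation is that on $\Theta_1$ the random potential can be written as $V(\bx) = N_1 \langle \cV \rangle_\cS + (\text{terms measurable with respect to } \cF_\cS \text{ and the potential off } \cS)$, and similarly on $\Theta_2$ with $N_1$ replaced by $N_2$. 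Thus, conditionally on $\cF_\cS$ and on all potential values outside $\cS$, the operator $H_{\Theta_1}$ is a fixed self-adjoint operator shifted by the scalar $\lambda N_1 \langle \cV \rangle_\cS$, and $H_{\Theta_2}$ is a fixed self-adjoint operator shifted by $\lambda N_2 \langle \cV \rangle_\cS$. Consequently, conditionally, $\sigma(H_{\Theta_1})$ and $\sigma(H_{\Theta_2})$ are deterministic sets each rigidly translated by a multiple of the single random scalar $t := \langle \cV \rangle_\cS$, with relative shift $(N_1 - N_2) t$ and $N_1 - N_2 \neq 0$.

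Next I would reduce the spectral-distance event to a small-interval event for $t$. Conditionally, $\dist(\sigma(H_{\Theta_1}),\sigma(H_{\Theta_2})) \leq s$ means there exist eigenvalues $E_1 \in \sigma(H_{\Theta_1})$, $E_2 \in \sigma(H_{\Theta_2})$ with $|E_1 - E_2| \leq s$; writing $E_i = \mu_i + \lambda N_i t$ with $\mu_i$ the corresponding eigenvalue of the $\cF_\cS$-and-off-$\cS$-measurable "frozen" operator, this forces $|(\mu_1 - \mu_2) + \lambda (N_1 - N_2) t| \leq s$, i.e. $t$ must lie in a union of at most $(\#\Theta_1)(\#\Theta_2)$ intervals each of length $2s / (\lambda |N_1 - N_2|) \leq 2s$. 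Applying the regular conditional mean bound: with probability at least $1 - C_2 \#(\Theta_1 \medcup \Theta_2)^{A_2} (2s)^{b_2}$ (using $\#\cS \leq \#(\Pi\Theta_1 \cup \Pi\Theta_2)$, which is controlled — I should be a little careful here to use $\#(\Theta_1 \medcup \Theta_2)$ as the stated paper does, bounding the site count accordingly), the conditional distribution function $F(\cdot|\cF_\cS)$ has modulus of continuity at scale $2s$ bounded by $C_1 (\#\cS)^{A_1} s^{b_1}$; hence on this good event the conditional probability that $t$ falls in the union of intervals is at most $(\#\Theta_1)(\#\Theta_2) \cdot C_1 (\#\cS)^{A_1} (2s)^{b_1}$. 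Taking expectations and adding the complement probability of the good event yields the claimed bound.

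The main obstacle I anticipate is bookkeeping around the conditioning structure rather than any deep difficulty: one must verify carefully that, conditionally on $\cF_\cS$ together with $\{\cV(u)\}_{u \notin \cS}$, the operators $H_{\Theta_1}$ and $H_{\Theta_2}$ are indeed deterministic up to the single common scalar $\langle \cV \rangle_\cS$ — this uses that the particle-count in $\cS$ is constant on each $\Theta_i$, that the kinetic and interaction terms are non-random, and that the potential off $\cS$ is $\cF_\cS$-measurable-plus-independent in the right way so that the randomness entering $H_{\Theta_i}$ decomposes as (frozen part) $+ \lambda N_i \langle\cV\rangle_\cS$. A secondary point is matching the combinatorial factor: the number of interval constraints is the product of the dimensions $\dim \ell^2(\Theta_i) = \#\Theta_i$, which is exactly what appears in the statement, and the site-count $\#\cS$ must be bounded by $\#(\Theta_1 \medcup \Theta_2)$ (valid since $\cS$ may be taken to be $\Pi_n$-images of these sets, or simply enlarged harmlessly). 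With these checked, the inequality follows by the union bound over eigenvalue pairs and the RCM estimate.
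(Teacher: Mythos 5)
Your proposal is correct and follows exactly the argument behind \cite[Theorem~3]{C14}, which the paper simply cites (after swapping in Lemma~\ref{lemma:distance_weakly_separable} for the corresponding separability lemma and observing $\#\cS \leq \#(\Theta_1 \medcup \Theta_2)$). The decomposition of $V$ on each $\Theta_i$ into a rigid shift $\lambda N_i \langle \cV \rangle_\cS$ plus an $\cF_\cS$-and-off-$\cS$-measurable part, the union bound over $(\#\Theta_1)(\#\Theta_2)$ eigenvalue pairs yielding intervals for $\langle \cV \rangle_\cS$ of length $\leq 2s$, and the split into the RCM good event and its complement are precisely the intended steps.
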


\begin{proof}
By replacing \cite[Lemma 8]{C14} with Lemma \ref{lemma:distance_weakly_separable}, the argument extends essentially verbatim to our situation. The role of $L$ in \cite[Theorem 3]{C14} is taken by $\# \cS$ in Definition \ref{definition:weakly_separable}. Since $\cS$ can always be taken as a subset of the projections $\Pi \Theta_1 \medcup \Pi \Theta_2$, we have $\# \cS \leq \# (\Theta_1 \medcup \Theta_2)$. 
\end{proof}

We are now ready to prove the main result of this section.

\begin{proof}[Proof of Proposition \ref{prop:wegner}:]
Using that $\diam \Theta_j \leq 20 N \ell$ for all $j \in \cJ$, we obtain for an absolute constant $C\geq 1$ that 
\begin{equation*}
\max_{j \in \cJ} \# \Theta_j  \leq N! (40N\ell+1)^{Nd} \leq  C^N N^{2Nd} \ell^{Nd}. 
\end{equation*}

 Now, let $j_1,j_2 \in \cJ$ be such that 
\begin{equation}
 d_S(\Theta_{j_1},\Theta_{j_2}) \geq 8N \max \big( \diam_S \Theta_{j_1} , \diam_S \Theta_{j_2}\big). 
\end{equation}
By combining Lemma and  Proposition \ref{prop:rcm_wegner}, we obtain that 
\begin{align*}
\bP\Big( \dist( \sigma( H_{\Theta_{j_1}}),\sigma( H_{\Theta_{j_2}}))\leq \frac{1}{2} e^{-L^\beta} \Big) 
& \leq  (\# \Theta_{j_1}) (\# \Theta_{j_2}) e^{- \frac{1}{2} L^\beta}  +  C(\rho) \# (\Theta_{j_1} \medcup \Theta_{j_2})^{6 } e^{- L^\beta}  \\
& \leq C(\rho)^{Nd} N^{12Nd} \ell^{6Nd} e^{-\frac{1}{2} L^\beta} . 
\end{align*}
The desired estimate now follows from a union bound. 
\end{proof}
\section{Preparations}

We provide the tools needed in the  eigensystem multi-scale analysis in Section \ref{section:inductive}.

\subsection{Covers}\label{section:covers}

As is evident from its name, the eigensystem multiscale analysis connects multiple scales. The notion of a cover allows us to decompose boxes at a large scale $L$ into several boxes at a smaller scale $\ell$. 

\begin{definition}\label{definition:cover}
Let \( \Lambda_L^{(N)}(\bb) \), \( \bb \in \bZ^{Nd} \), be an \( N \)-particle box. We define the cover \( C_{L,\ell} = C_{L,\ell}^{(N)}(\bb) \) by 
\begin{equation}
C_{L,\ell} = \big\{ \Lambda_\ell^{(N)}(\ba) \colon \ba \in \bZ^{Nd}  \text{ and } \Lambda_\ell^{(N)}(\ba)\subseteq \Lambda_L^{(N)}(\bb) \big\}. 
\end{equation}
We further denote the centers of the cubes in \( C_{L,\ell} \) by \( \Xi_{L,\ell} \). 
\end{definition}
In the one-particle version of the eigensystem multiscale analysis \cite{EK19}, the authors rely on \emph{suitable} covers (cf. \cite{EK19,GK06}), which contain fewer boxes but still satisfy similar covering properties. Since we are not optimizing the different parameters in our argument, the simpler notion of a cover as in Definition \ref{definition:cover} is sufficient for our purpose. For more elementary arguments based on covers, we refer to the excellent lecture notes by Kirsch \cite[Section 9]{Kirsch07}. 

\begin{remark}\label{rem:1d}
Unfortunately, the properties of covers in the (symmetrized) multi-particle setting are much more complicated than in the one-particle setting. While we obtained all the necessary properties for the multi-particle multi-scale analysis in one spatial dimension, i.e., \( d=1 \), we were unable to solve the geometric difficulties in dimension \( d \geq 2 \). Indeed, our argument uses the non-decreasing rearrangement of particles in $\bZ$, which has no direct analogue in higher spatial dimensions.
\end{remark}

We record the properties of covers used in the rest of the paper in the next lemma. In order to not interrupt the flow of the main argument, we moved the proof into a separate appendix.

\begin{lemma}[Properties of covers]\label{lemma:cover}
Let \( d=1\), let \( C_{L,\ell} \), and let \( \Xi_{L,\ell} \) be as in Definition \ref{definition:cover}. Then, we have the following properties: 
\begin{enumerate}[(i):]
\item The cardinality of the cover is bounded by \( \# C_{L,\ell} \leq (2L+1)^{Nd} \). \label{item:cover_1}
\item  \label{item:cover_2} We have that
\begin{equation}\label{eq:cover}
\Lambda_L^{(N)}(\bb) = \bigcup_{a\in \Xi_{L,\ell}} \Lambda_\ell^{\Lambda_L(\bb),\ell}(\ba). 
\end{equation}
\end{enumerate}
\end{lemma}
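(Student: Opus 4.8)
The plan is to prove the two statements separately, both by fairly direct combinatorial arguments. For part \eqref{item:cover_1}, I would note that every cube $\Lambda_\ell^{(N)}(\ba) \in C_{L,\ell}$ that is contained in $\Lambda_L^{(N)}(\bb)$ has its center $\ba$ lying in $\Lambda_L^{(N)}(\bb)$ (indeed $\ba \in \Lambda_\ell^{(N)}(\ba) \subseteq \Lambda_L^{(N)}(\bb)$), so the number of distinct such cubes is at most the number of distinct centers, which is at most $\#\Lambda_L^{(N)}(\bb)$. The symmetrized cube $\Lambda_L^{(N)}(\bb)$ is, for a suitable choice of representative of $\bb$, contained in a Cartesian product of $N$ one-dimensional intervals of length $2L$, hence $\#\Lambda_L^{(N)}(\bb) \leq (2L+1)^{Nd}$. (One has to be slightly careful because $\Lambda_L^{(N)}(\bb)$ is defined via $d_S$, not $\|\cdot\|_\infty$; but $d_S(\bx,\bb) \leq L$ implies $\|\bx - \pi\bb\|_\infty \leq L$ for some $\pi$, and there are only $N!$ choices of $\pi$, so actually $\#\Lambda_L^{(N)}(\bb) \leq N!\,(2L+1)^{Nd}$ — this is still fine since the paper's constants absorb factorials, but the cleaner bound $(2L+1)^{Nd}$ follows from restricting to a fundamental domain, e.g. sorted configurations, when $d=1$; this is where the hypothesis $d=1$ and the non-decreasing rearrangement may be invoked.)

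For part \eqref{item:cover_2}, the inclusion $\supseteq$ is immediate since each $\Lambda_\ell^{\Lambda_L(\bb),\ell}(\ba) \subseteq \Lambda_\ell^{(N)}(\ba) \subseteq \Lambda_L^{(N)}(\bb)$. The substance is the inclusion $\subseteq$: given $\bx \in \Lambda_L^{(N)}(\bb)$, I must produce a center $\ba \in \Xi_{L,\ell}$ with $\bx \in \Lambda_\ell^{\Lambda_L(\bb),\ell}(\ba)$, i.e. $d_S(\bx,\ba)\leq \ell$ and moreover $d_S(\bx,\, \bZ^{Nd}\setminus\Lambda_L(\bb)) \geq \ell$ fails only when $\bx$ is genuinely deep inside — wait, more precisely, I need $d_S(\bx, \Lambda_L(\bb)\setminus \Lambda_\ell^{(N)}(\ba))$... let me restate: I need $\bx \in \Lambda_\ell^{(N)}(\ba)$ with the additional property that $\bx$ has $d_S$-distance $\geq \ell$ to $\Lambda_L(\bb) \setminus \Lambda_\ell^{(N)}(\ba)$, which since $\Lambda_\ell^{(N)}(\ba)$ is a $d_S$-ball of radius $\ell$ about $\ba$ forces $\bx = \ba$ (as any point at distance $< \ell$ from $\ba$ other than... no). Let me reconsider: the interior-boundary-trimmed set $\Phi^{\Theta,r}$ with $\Phi = \Lambda_\ell^{(N)}(\ba)$, $\Theta = \Lambda_L(\bb)$, $r = \ell$ consists of $\bx\in \Lambda_\ell^{(N)}(\ba)$ with $d_S(\bx, \Lambda_L(\bb)\setminus \Lambda_\ell^{(N)}(\ba)) \geq \ell$. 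The key point is that if $\bx$ is such that the entire $d_S$-ball $\Lambda_\ell^{(N)}(\bx)$ of radius $\ell$ about $\bx$ is contained in $\Lambda_L^{(N)}(\bb)$, then taking $\ba := \bx$ gives $\Lambda_\ell^{(N)}(\ba) \in C_{L,\ell}$ and $\bx \in \Phi^{\Theta,\ell}$ trivially (since then $d_S(\bx, \Lambda_L(\bb)\setminus \Lambda_\ell(\bx)) = d_S(\bx, \Lambda_L(\bb)\setminus \Lambda_\ell(\bx))$, and the complement within $\Lambda_L(\bb)$ of the ball of radius $\ell$ about $\bx$ is at distance $> $ ... hmm, actually $\geq \ell+1 > \ell$? no — points at distance exactly $\ell+1$; but the set $\Lambda_L(\bb)\setminus\Lambda_\ell(\bx)$ has all its points at $d_S$-distance $\geq \ell+1$ from $\bx$ only if... no: a point $\by$ with $d_S(\by,\bx) = \ell+1$ lies outside $\Lambda_\ell(\bx)$, so $d_S(\bx, \Lambda_L(\bb)\setminus\Lambda_\ell(\bx)) \geq \ell+1 \geq \ell$, good, it works even with $\ell$ not an integer since the ball only sees $\lfloor \ell\rfloor$). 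So it suffices to show: for every $\bx \in \Lambda_L^{(N)}(\bb)$, either $\Lambda_\ell^{(N)}(\bx) \subseteq \Lambda_L^{(N)}(\bb)$, or there is a nearby $\ba$ (within $d_S$-distance $\ell$ of $\bx$) with $\Lambda_\ell^{(N)}(\ba) \subseteq \Lambda_L^{(N)}(\bb)$ and $\bx \in \Phi^{\Theta,\ell}$ for $\Phi = \Lambda_\ell(\ba)$.

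The mechanism for producing such an $\ba$ near the "boundary" of $\Lambda_L^{(N)}(\bb)$ is exactly the non-decreasing rearrangement in $\bZ$: given $\bx$, sort its coordinates, and push each coordinate a bounded amount toward the corresponding coordinate of (a sorted representative of) $\bb$ so that the resulting sorted configuration $\ba$ has a full $\ell$-neighborhood inside $\Lambda_L^{(N)}(\bb)$ while $d_S(\bx,\ba)$ stays $\leq \ell$ and $\bx$ remains within the trimmed ball $\Phi^{\Theta,\ell}$. Here I expect to use that, for sorted configurations in one dimension, $d_S$ coincides with the $\ell^\infty$-distance of the sorted representatives (a fact that genuinely needs $d=1$), which lets the one-dimensional single-particle covering intuition go through coordinatewise. \textbf{The main obstacle} is precisely this last geometric step — verifying that the rearrangement-and-push construction simultaneously keeps $\ba$ a legal center (ball of radius $\ell$ inside $\Lambda_L^{(N)}(\bb)$), keeps $d_S(\bx,\ba) \leq \ell$, and keeps $\bx$ in the trimmed set $\Lambda_\ell^{\Lambda_L(\bb),\ell}(\ba)$; the interaction between the symmetrization (which forces working with sorted representatives) and the $d_S$-distance is the delicate point, and it is the reason this proof is deferred to the appendix and the reason for the $d=1$ restriction.
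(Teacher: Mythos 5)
Your outline follows essentially the same path as the paper's appendix proof: reduce to non-decreasing (sorted) representatives, use that for sorted configurations in $d=1$ the symmetrized distance $d_S$ equals the ordinary $\ell^\infty$ distance, and then produce the center $\ba$ by "truncating" $\bx$ toward $\bb$ coordinatewise. You have correctly identified the two load-bearing ingredients — the rearrangement lemma and the truncation construction — and correctly identified that the $d=1$ hypothesis is used to make the rearrangement lemma true. So the strategy is the paper's strategy.

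However, what you have written is a plan rather than a proof, and the two places you flag as "the main obstacle" are exactly where the paper's proof does its actual work; neither is filled in.

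\begin{itemize}
\item You invoke, but do not prove, the fact that for sorted $\bx,\by\in\bZ^N$ one has $d_S(\bx,\by)=\|\bx-\by\|_\infty$, equivalently $\|\bx-\pi\by\|_\infty\geq\|\bx-\by\|_\infty$ for every $\pi\in S_N$. This is Lemma \ref{lemma:rearrangement} and its proof is not immediate: the paper does an induction on $N$, with a two-particle base case and an induction step that uses adjacent transpositions to reduce to the case $\pi(N)=N$. Without this lemma, the rest of the argument cannot even begin, because you cannot pass from $d_S$-inequalities to coordinatewise $\ell^\infty$-inequalities.
\item You describe the truncation construction informally ("push each coordinate a bounded amount toward the corresponding coordinate of a sorted representative of $\bb$") but never write it down or check that it works. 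The paper sets $a_j := b_j + \operatorname{sign}(x_j-b_j)\min(|x_j-b_j|, L-\ell)$, observes that $\ba$ is again sorted, and then verifies $\bx\in\Lambda_\ell^{\Lambda_L(\bb),\ell}(\ba)$ by a short case analysis: take $\by\in\Lambda_L(\bb)\setminus\Lambda_\ell(\ba)$, pass to $\widehat\by$ using the rearrangement lemma, find a coordinate $j$ with $|y_j-a_j|>\ell$, and split on whether $|x_j-b_j|\leq L-\ell$ (then $a_j=x_j$) or not (then $a_j=b_j\pm(L-\ell)$ and $|y_j-b_j|\leq L$ forces $y_j$ to be on the far side of $a_j$ from $x_j$). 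This last computation is what you declared to be the delicate point, and it is where the proof either succeeds or fails; as written your proposal contains no argument for it.
\end{itemize}

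One small correction on part \eqref{item:cover_1}: the bound $\#C_{L,\ell}\leq(2L+1)^{Nd}$ does not actually require $d=1$ or the rearrangement lemma. The paper's argument is simply that $\Lambda_\ell(\ba_1)=\Lambda_\ell(\ba_2)$ iff $\ba_1=\pi\ba_2$ for some $\pi\in S_N$, so the map from cubes in the cover to their centers is well-defined up to permutation, and each cube has a representative center $\ba$ with $\|\ba-\bx_0\|_\infty\leq L$; this gives an injection from $C_{L,\ell}$ into a set of cardinality at most $(2L+1)^{Nd}$, in any dimension. The $d=1$ restriction enters only through part \eqref{item:cover_2} via Lemma \ref{lemma:rearrangement}.
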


\subsection{Partially and fully interactive cubes}
As explained in the introduction, one of the main differences between the one-particle and multi-particle setting lies in the probabilistic independence of certain Hamiltonians. If $N=1$ and $\Theta_1,\Theta_2 \subseteq \bZ^d$ satisfy $d(\Theta_1,\Theta_2)>0$, which is equivalent to $\Theta_1 \medcap \Theta_2 \not = \emptyset$ , then the restricted Hamiltonians $H_{\Theta_1}$ and $H_{\Theta_2}$ are probabilistically independent. If $N\geq 2$ and $\Theta_1,\Theta_2 \subseteq \bZ^{Nd}$, however, the symmetrized distance $d_S(\Theta_1,\Theta_2)$ can be arbitrarily large while the one-particle sites $\Pi \Theta_1$ and $\Pi \Theta_2$, as defined in \eqref{eq:Pi}, are overlapping. As a result, the restricted Hamiltonians $H_{\Theta_1}$ and $H_{\Theta_2}$  are not probabilistically independent. In this section, we introduce partially and fully interactive cubes, which allow us to isolate the related difficulties. This notion appeared in earlier work of Chulaevsky and Suhov \cite{CS09b}.

\begin{definition}[Partially and fully interactive cubes]\label{definition:partially_interacting}
We call a cube \( \Lambda^{(N)}_L(\bx) \) partially interactive if there exist \( 1 \leq N_1,N_2 < N \), where \( N_1 + N_2 = N \), and disjoint sets \( \cS_1, \cS_2 \subseteq \bZ^{d} \) satisfying the following conditions:
\begin{itemize}
\item \( \dist( \cS_1,\cS_2) \geq C_{\cU} \), where $C_{\cU} = \max\limits_{u \in \supp \cU} \| u \|+1 $, 
\item for all \( \by \in \Lambda_L^{(N)}(\bx) \), it holds that 
\begin{equation*}
\# \{ 1 \leq j \leq N\colon y_j \in \cS_1\} = N_1 \qquad \text{and} \qquad \# \{ 1 \leq j \leq N \colon y_j \in \cS_2 \} = N_2. 
\end{equation*}
\end{itemize}
Conversely, we call  a cube  \( \Lambda^{(N)}_L(\bx) \) fully interactive if it is not partially interactive.
\end{definition}

An example of a partially interactive and a fully interactive cube is displayed in Figure \ref{figure:pi}.

\begin{figure}\label{figure:pi}
\begin{tikzpicture}[scale=0.6]
\draw[->, ultra thick] (0,0)--(7,0);
\draw[->, ultra thick] (0,0)--(0,7);

\node at (7,-0.5) {\small$x_1$};
\node at (-0.6,6.9) {\small$x_2$};
\node at  (5, -0.4) {\small $5$} ;
\node at  (-0.4, 5) {\small $5$} ;
\node at  (-0.4, 1) { \small$1$} ;
\node at  (1, -0.4) { \small$1$} ;
\draw[thick] (1,0.15)--(1,-0.15);
\draw[thick] (0.15,1)--(-0.15,1);
\draw[thick] (5,0.15)--(5,-0.15);
\draw[thick] (0.15,5)--(-0.15,5);

\begin{scope}[xshift=1cm]
\draw [fill=orange, draw=black, fill opacity=0.3] (3,0)--(3,2)--(5,2)--(5,0)--(3,0);
\end{scope}
\begin{scope}[yshift=1cm]
\draw [fill=orange, draw=black, fill opacity=0.3] (0,3)--(2,3)--(2,5)--(0,5)--(0,3);
\end{scope}
\draw[fill=black] (5,1) circle (0.5mm);
\node[below] at (5.2,1) {\small$\bx$};
\draw[fill=black] (1,5) circle (0.5mm);
\node[below] at (1.2,5) {\small$\pi\bx$};
\draw[fill=black] (5,5) circle (0.5mm);
\node[below] at (5.2,5) {\small$\by$};

\draw[<->,red, ultra thick] (-1,0)--(-1,2);
\node at (-1.5,1) {\textcolor{red}{\small$\cS_1$}};
\draw[<->,red, ultra thick] (-1,4)--(-1,6);
\node at (-1.5,5) {\textcolor{red}{\small$\cS_2$}};
\draw[<->,red, ultra thick] (0,-1)--(2,-1);
\node at (1,-1.5) {\textcolor{red}{\small$\cS_1$}};
\draw[<->,red, ultra thick] (4,-1)--(6,-1);
\node at (5,-1.5) {\textcolor{red}{\small$\cS_2$}};

\draw [fill=blue, draw=black, fill opacity=0.3] (4,4)--(4,6)--(6,6)--(6,4)--(4,4);
\end{tikzpicture}
\caption{\small{We display a partially interactive and a fully interactive two-particle cube in orange and blue, respectively. We let $d=1$, $N=2$, \( \bx = (5,1) \), and \( L=1 \). We also assume nearest neighbor interactions, that is, $\operatorname{supp}(\cU)\subseteq \{-1,0,1\}$. Then, the cube $\Lambda_L(\bx)$ is partially interactive. For instance, one can take $\cS_1=\{0,1,2\}$ and $\cS_2=\{4,5,6\}$, which satisfy $d(\cS_1,\cS_2)\geq 2$. It is clear from the picture that each particle lies in exactly one of the sets $\cS_1$ and $\cS_2$. Conversely, if $\by=(5,5)$, then the cube $\Lambda_L(\by)$ is fully interactive, since both particles separately range over the set $\{ 4,5,6\}$. }}
\end{figure}

\begin{lemma}\label{lemma:fully_interactive}
Let \( L >  C_{\cU} \) and let \( \Lambda_L^{(N)}(\bx) \), \( \Lambda_L^{(N)}(\by) \) be two fully interactive \( N\)-particle cubes. If  \( d_S(\bx,\by) \geq  8NL \), then \( \Pi \Lambda_L^{(N)}(\bx) \) and \( \Pi \Lambda_L^{(N)}(\by) \) are disjoint subsets of \( \bZ^d \). 
\end{lemma}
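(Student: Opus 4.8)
The plan is to argue by contradiction: suppose $\Pi\Lambda_L^{(N)}(\bx)$ and $\Pi\Lambda_L^{(N)}(\by)$ are not disjoint, so there is a site $w\in\bZ^d$ with $w\in\Pi_i\Lambda_L^{(N)}(\bx)$ and $w\in\Pi_j\Lambda_L^{(N)}(\by)$ for some $i,j$. From this one extracts configurations $\bu\in\Lambda_L^{(N)}(\bx)$ and $\bv\in\Lambda_L^{(N)}(\by)$ with $u_i=v_j=w$. Now I would follow the geometric grouping idea already used in the proof of Lemma \ref{lemma:distance_weakly_separable}: form the $L$-neighborhoods of all the sites $\{u_1,\dots,u_N\}$ and $\{v_1,\dots,v_N\}$ in $\bZ^d$, take their connected components $\Gamma_1,\dots,\Gamma_M$ with $1\le M\le N$, and note that since $u_i=v_j=w$, the component containing $w$ receives at least one particle of $\bu$ and at least one particle of $\bv$. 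Each $\Gamma_m$ has diameter at most $4NL$ (same bound as in that lemma).

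Next I would use full interactivity to rule out the possibility that the particles of $\bu$ (resp.\ $\bv$) are split by these components in a way that would let us read off a partition of $\{1,\dots,N\}$ with a distinguished site-region. Concretely: if some component $\Gamma_m$ received $N_m$ of the $\bu$-particles with $0<N_m<N$, then because every configuration in the symmetrized cube $\Lambda_L^{(N)}(\bx)$ lies within $\ell^\infty$-distance $L$ of a permutation of $\bx$, and hence each of its particles sits in one of the balls $\Lambda_L(u_k)$, the site-set $\cS_1:=\Gamma_m$ (suitably thickened by $C_\cU$, which is possible if $L$ is large enough compared to $C_\cU$, using $L>C_\cU$) and $\cS_2:=\bZ^d\setminus(\text{thickening of }\Gamma_m)$ would witness partial interactivity of $\Lambda_L^{(N)}(\bx)$, a contradiction. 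Hence each $\Gamma_m$ must contain either all $N$ particles of $\bu$ or none of them; since the components are disjoint and there are $N$ particles, exactly one component — call it $\Gamma_{m_0}$ — contains all of them. The same argument applied to $\by$ forces all $N$ particles of $\bv$ into a single component, which must be $\Gamma_{m_0}$ as well since $w=u_i=v_j$ lies in the common component. Therefore all $2N$ of the sites $u_1,\dots,u_N,v_1,\dots,v_N$ lie in the single component $\Gamma_{m_0}$, whose diameter is at most $4NL$.

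Finally I would convert this into a bound on $d_S(\bx,\by)$. Since $\bu$ is within symmetrized distance $L$ of $\bx$ and $\bv$ within symmetrized distance $L$ of $\by$, it suffices to bound $d_S(\bu,\bv)$. But all particles of both configurations lie in $\Gamma_{m_0}$, so for a suitable permutation pairing up the particles we get $\|\bu-\pi\bv\|_\infty\le\diam\Gamma_{m_0}\le 4NL$, whence $d_S(\bx,\by)\le d_S(\bx,\bu)+d_S(\bu,\bv)+d_S(\bv,\by)\le L+4NL+L\le 8NL$ (in fact strictly less for $N\ge1$), contradicting the hypothesis $d_S(\bx,\by)\ge 8NL$. (One should double-check the numerology — it may be cleaner to state the contradiction as $d_S(\bx,\by)<8NL$ directly, or to absorb the two extra $L$'s into the $4NL$ bound by working with a slightly larger constant.)

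The main obstacle I anticipate is the second step: making precise the interplay between "a component $\Gamma_m$ that splits the particles" and the \emph{definition} of partial interactivity, in particular verifying the separation requirement $\dist(\cS_1,\cS_2)\ge C_\cU$ for the witnessing site-sets. This requires thickening $\Gamma_{m_0}$ by $C_\cU$ and checking that the thickened set is still "saturated" with respect to the cube (every configuration in $\Lambda_L^{(N)}(\bx)$ still puts exactly the same number of particles inside it), which is where the hypothesis $L>C_\cU$ enters — a particle cannot escape its $L$-ball, so a $C_\cU$-collar does not change the occupation count as long as the collar does not reach the next component, i.e.\ as long as distinct components are more than $C_\cU$ apart. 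If two components happen to be within $C_\cU$ of each other one simply merges them first; since $C_\cU$ is a fixed finite constant and $L>C_\cU$, re-running the connected-component construction with balls of radius $L+C_\cU$ fixes this at the cost of replacing $4NL$ by, say, $4N(L+C_\cU)\le 8NL$. Everything else is routine.
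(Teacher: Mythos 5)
There is a genuine gap in the second step, and it is precisely the step you yourself flagged as the anticipated obstacle. You form the connected components of the $L$-neighborhoods of the sites $\{u_1,\dots,u_N,v_1,\dots,v_N\}$, and then, to conclude that each $\Gamma_m$ contains all or none of the $\bu$-particles, you assert that ``every configuration in the symmetrized cube $\Lambda_L^{(N)}(\bx)$ lies within $\ell^\infty$-distance $L$ of a permutation of $\bx$, and hence each of its particles sits in one of the balls $\Lambda_L(u_k)$.'' The ``hence'' does not follow: a configuration $\bz \in \Lambda_L(\bx)$ lies within symmetrized distance $L$ of $\bx$, and $\bu$ is also only within symmetrized distance $L$ of $\bx$, so the triangle inequality only gives $d_S(\bz,\bu) \leq 2L$. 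Thus the particles of $\bz$ live in $2L$-balls around $\bu$-sites, not $L$-balls, and your $\Gamma_m$ need not enclose $\Pi\Lambda_L(\bx)$. Consequently, the occupation number of $\Gamma_m$ can vary across $\bz \in \Lambda_L(\bx)$, and the proposed $\cS_1, \cS_2$ do not witness partial interactivity. Replacing $L$ by $2L$ in the neighborhoods would restore coverage, but then the diameter bound on $\Gamma_{m_0}$ degrades to $8NL$, and the final chain $d_S(\bx,\by) \leq L + \diam\Gamma_{m_0} + L$ no longer contradicts $d_S(\bx,\by) \geq 8NL$. The detour through $\bu$ and $\bv$ is what costs you; the components should be built around $\bx$ and $\by$ directly.

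The paper's proof avoids all of this and is considerably shorter. It never introduces $\bu, \bv$ or components of a combined union. Instead, for each cube separately, it shows that full interactivity forces the union $\bigcup_{j} \{ u : \|u - x_j\|_\infty \leq L + C_\cU\}$ to be connected (otherwise one connected component would serve as $\cS_1$), which immediately gives $\diam(\Pi\Lambda_L(\bx)) \leq 2N(L + C_\cU) < 4NL$ using $L > C_\cU$, and likewise for $\by$. Since $d_S(\bx,\by) \geq 8NL$ supplies some pair $i,j$ with $\|x_i - y_j\|_\infty \geq 8NL$, the triangle inequality then gives $\|u - v\|_\infty > 8NL - 4NL - 4NL = 0$ for every $u \in \Pi\Lambda_L(\bx)$ and $v \in \Pi\Lambda_L(\by)$, so the projections are disjoint. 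This is the argument you should aim for: bound each projection's diameter via connectedness and finish with one triangle inequality, rather than importing the component machinery of Lemma \ref{lemma:distance_weakly_separable} (which is designed for a different purpose, where the relevant radius is the diameter of the set, not the cube half-width).
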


\begin{proof}
We first obtain a bound for the diameters of  \( \Pi \Lambda_L^{(N)}(\bx) \) and \( \Pi \Lambda_L^{(N)}(\by) \). Since \( \Lambda_L^{(N)}(\bx) \) is fully interactive, the union
\begin{equation}\label{eq:connected_cubes}
\bigcup_{j=1}^N \Big\{ u \in \bZ^d \colon \| u - x_j \|_\infty \leq L + C_\cU \Big\}
\end{equation}
is connected. Otherwise, we could choose \( \cS_1 \) as one of the connected components and \( \cS_2 \) as the union over the cubes \( \{ u \in \bZ^d \colon \| u- x_j \|_\infty \leq L \} \) which are not contained in \( \cS_1 \). Since the union in \eqref{eq:connected_cubes} is connected, we obtain that 
\begin{equation*}
\operatorname{diam}(\Pi \Lambda_L(\bx)) \leq 2N(L+C_\cU) < 4NL,
\end{equation*}
by our assumption on $L$. Since \( \Lambda_L(\by) \) is also fully interactive, the diameter of \( \Pi \Lambda_L(\by) \) obeys the same upper bound. Since \( d_S(\bx,\by) \geq 8NL \), there exist \( 1 \leq i , j \leq N \) such that \( \| x_i - y_j \|_\infty \geq 8NL \). For any \( u \in \Pi \Lambda_L(\bx) \) and \( v \in \Pi \Lambda_L(\by) \), it follows that 
\begin{equation}
\| u - v \|_\infty \geq \| x_i - y_j \|_\infty - \operatorname{diam}(\Pi \Lambda_L(\bx))- \operatorname{diam}(\Pi \Lambda_L(\by)) > 8NL - 4NL - 4NL = 0.
\end{equation}
Thus, \( \Pi \Lambda_L(\bx) \) and \( \Pi \Lambda_L(\by) \) are disjoint. 
\end{proof}

\subsection{Decay estimates on localizing cubes}

In this subsection, we prove that an eigenfunction $\psi$ on $\Lambda_L$ with eigenvalue $\mu$ decays over a smaller cube $\Lambda_\ell$ as long as $\mu$ is not too close the spectrum of the restricted Hamiltonian $H_{\Lambda_\ell}$. This forms the basis of an iteration scheme over the cubes in the cover $C_{L,\ell}$. 

In the statement of the next lemma, we take $\widetilde{\tau}=(1+\tau)/2$.

\begin{lemma}\label{lemma:local_good_cube}
Let \( \Theta \subseteq \bZ^{Nd} \) be symmetric and let \( \Lambda_\ell= \Lambda_\ell^{(N)} \subseteq \Theta \) be \( m \)-localizing. Let 
\( \psi \colon \Theta \rightarrow \mathbb{R} \) be a generalized eigenfunction for \( H_\Theta \) with generalized eigenvalue \( \mu \). We assume that 
\begin{equation}\label{eq:condition_spectral_and_l}
\dist(\mu, \sigma(H_{\Lambda_\ell})) \geq \frac{1}{2} e^{-L^\beta} \quad \text{and} \quad 
\ell \geq C(m,\tau,\gamma,\beta,d) (N \log(2+N))^{\frac{1}{\tau}}. 
\end{equation}
Then, we have for \( m^\prime := m \cdot (1- 3 \ell^{- \frac{1-\tau}{2}}) \) and  all \( \by \in \Lambda_\ell^{\Theta,\ell^{\widetilde{\tau}}} \)  that
\begin{equation}\label{eq:local_good_cube}
|\psi(\by)| \leq  \max_{\bv \in \partial^\Theta_\ex \Lambda_\ell} e^{-m^\prime d_S(\by,\bv)} |\psi(\bv)|.
\end{equation}
\end{lemma}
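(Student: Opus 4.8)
The plan is to exploit the $m$-localizing property of $\Lambda_\ell$ together with the eigenfunction equation $(H_\Theta - \mu)\psi = 0$ to set up a Combes--Thomas-type / geometric resolvent argument, but phrased entirely at the level of eigenfunctions in the spirit of the (EMSA). First I would diagonalize $H_{\Lambda_\ell}$: let $(\varphi_\theta, \theta)_{\theta \in \widetilde{\sigma}(H_{\Lambda_\ell})}$ be an orthonormal eigenbasis with each $\varphi_\theta$ being $(\bx_\theta, m)$-localizing. Restricting the eigenfunction equation to $\Lambda_\ell$ via the geometric decomposition \eqref{eq:geometric_decomposition}, we get $(H_{\Lambda_\ell} - \mu)(1_{\Lambda_\ell}\psi) = -1_{\Lambda_\ell}\Gamma_{\partial^\Theta \Lambda_\ell}\psi$, and since the right-hand side is supported on $\partial_{in}^\Theta \Lambda_\ell$ while the spectral gap hypothesis $\dist(\mu,\sigma(H_{\Lambda_\ell})) \geq \tfrac12 e^{-L^\beta}$ makes $(H_{\Lambda_\ell}-\mu)$ invertible with $\|(H_{\Lambda_\ell}-\mu)^{-1}\| \leq 2 e^{L^\beta}$, we can expand $1_{\Lambda_\ell}\psi = -(H_{\Lambda_\ell}-\mu)^{-1}1_{\Lambda_\ell}\Gamma_{\partial^\Theta \Lambda_\ell}\psi$ in the eigenbasis. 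Evaluating at a point $\by$ and inserting the localization bounds on the $\varphi_\theta$ gives, roughly,
\[
|\psi(\by)| \leq 2 e^{L^\beta} \sum_\theta |\varphi_\theta(\by)| \sum_{\bv \in \partial_{ex}^\Theta \Lambda_\ell} |\varphi_\theta(\bv')||\psi(\bv)|
\]
(with $\bv'$ the interior-boundary partner of $\bv$), which after using $|\varphi_\theta(\by)| \leq e^{-m\,d_S(\by,\bx_\theta)}$ for $\by$ deep inside $\Lambda_\ell$ and $|\varphi_\theta(\bv')| \leq e^{-m\, d_S(\bv',\bx_\theta)}$ — valid because $\bv'$ is a boundary point, hence at distance $\gtrsim \ell$, which dominates $\ell^\tau$ — produces a factor $e^{-m(d_S(\by,\bx_\theta) + d_S(\bv',\bx_\theta))} \leq e^{-m\, d_S(\by,\bv')}$ by the triangle inequality.

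The delicate bookkeeping is in reconciling the three length scales: the localization bound on $\varphi_\theta$ only holds at distances $\geq \ell^\tau$ from $\bx_\theta$, the point $\by$ is only assumed to lie in $\Lambda_\ell^{\Theta,\ell^{\widetilde\tau}}$ (distance $\geq \ell^{\widetilde\tau}$ from $\Theta\setminus\Lambda_\ell$, with $\widetilde\tau = (1+\tau)/2 > \tau$), and there is a lossy prefactor $e^{L^\beta}$ together with the sum over the $\leq (2\ell+1)^{Nd}$ eigenfunctions and over $\partial_{ex}^\Theta\Lambda_\ell$. The point of replacing $m$ by $m' = m(1 - 3\ell^{-(1-\tau)/2})$ is precisely to absorb all of these polynomial and subexponential losses: one splits the desired exponent $m\,d_S(\by,\bv)$ as $m'\,d_S(\by,\bv) + 3m\ell^{-(1-\tau)/2} d_S(\by,\bv)$, and the second piece — using $d_S(\by,\bv) \gtrsim \ell^{\widetilde\tau} = \ell^{(1+\tau)/2}$, so $\ell^{-(1-\tau)/2}d_S(\by,\bv) \gtrsim \ell^{(1+\tau)/2 - (1-\tau)/2} = \ell^\tau$ — beats $L^\beta + Nd\log(2\ell+1) + \log(\text{stuff})$. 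Here one uses $L = \ell^\gamma$ and the parameter inequality $\gamma\beta < \tau$ from \eqref{intro:eq_parameter} to ensure $L^\beta = \ell^{\gamma\beta} \ll \ell^\tau$, and the lower bound $\ell \geq C(m,\tau,\gamma,\beta,d)(N\log(2+N))^{1/\tau}$ in \eqref{eq:condition_spectral_and_l} to ensure $\ell^\tau$ dominates the $N$-dependent logarithmic factors coming from the cardinality of the eigenbasis and of the boundary.

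I would then argue as follows in order: (1) establish the resolvent expansion $\psi(\by) = -\sum_\theta (\mu - \theta)^{-1}\langle \varphi_\theta, 1_{\Lambda_\ell}\Gamma_{\partial^\Theta\Lambda_\ell}\psi\rangle\, \varphi_\theta(\by)$ and bound $|\mu-\theta|^{-1} \leq 2e^{L^\beta}$; (2) rewrite $\langle \varphi_\theta, 1_{\Lambda_\ell}\Gamma_{\partial^\Theta\Lambda_\ell}\psi\rangle = \sum_{(\bu,\bv)\in\partial^\Theta\Lambda_\ell}\overline{\varphi_\theta(\bu)}\psi(\bv)$ and bound by $\max_{\bv\in\partial_{ex}^\Theta\Lambda_\ell}|\psi(\bv)|$ times $\sum_{\bu\in\partial_{in}^\Theta\Lambda_\ell}|\varphi_\theta(\bu)|$; (3) for the point $\by$ and for each interior-boundary $\bu$, verify both are at $d_S \geq \ell^\tau$ from $\bx_\theta$ (for $\bu$ this needs $\ell - \ell^{\widetilde\tau} \geq \ell^\tau$, clear for large $\ell$; for $\by$ one checks $d_S(\by,\partial^\Theta_{in}\Lambda_\ell) \geq \ell^{\widetilde\tau}$ so if $\bx_\theta$ is within $\ell^\tau$ of $\by$ it is within $\Lambda_\ell^{\Theta,\ell^\tau}$...), apply the localizing bounds, and use the triangle inequality $d_S(\by,\bx_\theta) + d_S(\bx_\theta,\bu) \geq d_S(\by,\bu) \geq d_S(\by,\bv) - 1$; (4) collect the prefactor $2e^{L^\beta}(2\ell+1)^{Nd}\cdot\#\partial^\Theta_{ex}\Lambda_\ell \leq e^{C L^\beta + CNd\log\ell}$ and absorb it into the gap $m\,d_S(\by,\bv) - m'\,d_S(\by,\bv) \geq 3m\ell^{-(1-\tau)/2}\ell^{\widetilde\tau} = 3m\ell^\tau$ using \eqref{intro:eq_parameter} and \eqref{eq:condition_spectral_and_l}. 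The main obstacle I anticipate is step (3): carefully arguing that whenever a center $\bx_\theta$ is \emph{not} at distance $\geq \ell^\tau$ from $\by$ — so the localization bound on $\varphi_\theta(\by)$ is unavailable — the contribution of $\varphi_\theta$ to the sum is still controlled, which should follow because then $\bx_\theta$ lies well inside $\Lambda_\ell$, hence far from all boundary points $\bu$, making $|\varphi_\theta(\bu)|$ exponentially small enough to beat the trivial bound $|\varphi_\theta(\by)| \leq \|\varphi_\theta\|_{\ell^2} = 1$; this case-splitting on the location of $\bx_\theta$ relative to $\by$ versus the boundary is where the constant in \eqref{eq:condition_spectral_and_l} is really consumed.
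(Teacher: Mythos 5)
Your proposal is correct and follows essentially the same route as the paper's proof: expand in the eigenbasis of $H_{\Lambda_\ell}$, use the spectral gap to bound $|\mu - \nu|^{-1} \leq 2e^{L^\beta}$, transfer to the boundary term $\Gamma_{\partial^\Theta\Lambda_\ell}$, and absorb all prefactors into the loss $m - m'$ via $(m-m')\,d_S(\by,\bv) \geq 3m\ell^\tau$, using $\gamma\beta < \tau$ and the lower bound on $\ell$. The one place the paper streamlines the bookkeeping you flag as ``the main obstacle'' in step (3) is by using the unconditional bound $|\varphi_\nu(\bz)| \leq e^{m(\ell^\tau - d_S(\bz,\bx_\nu))}$, valid for \emph{all} $\bz$ because when $d_S(\bz,\bx_\nu) < \ell^\tau$ the right side exceeds $1 \geq \|\varphi_\nu\|_{\ell^\infty}$; applied to both $\by$ and the interior boundary point $\bu$, this produces the uniform prefactor $e^{2m\ell^\tau}$ in one stroke and eliminates the case-splitting on where the localization center $\bx_\nu$ sits.
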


\begin{remark}
This is the \( N \)-particle analog of \cite[Lemma 2.2]{EK19} on the spectral interval \( I = \bR \). Without the spectral projection, however, the proof simplifies significantly. \\
For future use, we remark that the constant $C(m,\tau,\gamma,\beta,d)$ can be chosen as decreasing in $m$. 
\end{remark}

\begin{proof}

We denote by \( \{ (\varphi_\nu,\bx_\nu) \}_{\nu \in \widetilde{\sigma}(H_{\Lambda_\ell})} \) the eigenfunctions and localization centers of \( H_{\Lambda_\ell} \). Recall that  \( \widetilde{\sigma}(H_{\Lambda_\ell}) \) denotes the spectrum of \( H_{\Lambda_\ell} \) with the eigenvalues (possibly) repeated according to their multiplicities. For any \( \by \in \Lambda_\ell^{\Theta,\ell_{\ttau}} \), we have that 
\begin{equation}\label{eq:local_good_cube_1}
\psi(\by) = \sum_{\nu\in \tspec}  \langle \varphi_\nu, \psi \rangle \varphi_\nu(\by). 
\end{equation}
We can estimate the inner product by 
\begin{align*}
|\langle \varphi_\nu, \psi\rangle| &= |\mu-\nu|^{-1} |\langle (H_\Theta-\nu) \varphi_\nu, \psi \rangle| \\
&= |\mu-\nu|^{-1} |\langle \Gamma_{\partial^\Theta \Lambda_\ell} \varphi_\nu, \psi \rangle| \\
&\leq 2 e^{ L^\beta} \sum_{(\bu,\bv)\in \partial^\Theta \Lambda_\ell} |\varphi_\nu(\bu)| |\psi(\bv)|. 
\end{align*}
After inserting this into \eqref{eq:local_good_cube_1}, we obtain that
\begin{align*}
|\psi(\by)| &\leq 2 e^{ L^\beta} \sum_{\nu \in \tspec}   \sum_{(\bu,\bv)\in \partial^\Theta \Lambda_\ell} |\varphi_\nu(\bu)| |\varphi_\nu(\by)| |\psi(\bv)| \\
&\leq 2 e^{L^\beta} \sum_{\nu \in \tspec}   \sum_{(\bu,\bv)\in \partial^\Theta \Lambda_\ell} e^{m(\ell^\tau-d_S(\bu,\bx_\nu))} e^{m(\ell^\tau-d_S(\by,\bx_\nu))} |\psi(\bv)| \\
&\leq 2  e^{L^\beta} e^{2m \ell^\tau} \sum_{\nu \in \tspec}   \sum_{(\bu,\bv)\in \partial^\Theta \Lambda_\ell}e^{-m d_S(\bu,\by)} |\psi(\bv)| \\
&\leq  2 |\Lambda_\ell| |\partial^\Theta \Lambda_\ell| e^{L^\beta} e^{m (2\ell^\tau+1)}   \max_{\bv \in \partial^\Theta_{\ex}  \Lambda_\ell} e^{-m d_S(\by,\bv)} |\psi(\bv)| \\
&\leq 4Nd (N! (2\ell+1)^{Nd})^2 e^{L^\beta} e^{m (2\ell^\tau+1)} e^{-(m-m^\prime) \ell^\ttau} \cdot 
\max_{\bv \in \partial^\Theta_{\ex} \Lambda_\ell} e^{-m^\prime d_S(\by,\bv)} |\psi(\bv)|. 
\end{align*}
In the last line, we used the condition $\by \in \Lambda_\ell^{\Theta,\ell^{\widetilde{\tau}}}$. In order to complete the argument, it remains to prove that 
\begin{equation}\label{eq:local_good_cube_2}
 4Nd (N! (2\ell+1)^{Nd})^2 e^{L^\beta} e^{m (2\ell^\tau+1)} e^{-(m-m^\prime) \ell^\ttau} \leq 1. 
\end{equation}
Using our choice of \( m^\prime \), we have that 
\begin{equation*}
(m-m^\prime) \ell^{\ttau} \geq 3 m \ell^{\ttau} \ell^{-\frac{1-\tau}{2}} = 3m \ell^\tau. 
\end{equation*}
Thus, \eqref{eq:local_good_cube_2} reduces to 
\begin{equation*}
\ell^\tau \geq 1  +  m^{-1} \big(  \ell^{\gamma \beta} + 2Nd \log(2l+1) + 2 \log(N!) + \log(4Nd) \big).  
\end{equation*}
This is implied the the condition $\tau>\gamma \beta$ on the parameters and the lower bound on $\ell$. 
\end{proof}

\subsection{Buffered cubes}

When proving that a large cube $\Lambda_L$ is $m$-localizing (or $m^\prime$-localizing with $m^\prime$ close to $m$), we would ideally like all smaller cubes $\Lambda_\ell(\ba)$ in the cover $C_{L,\ell}$ to be $m$-localizing. Unfortunately, since we want the probability of $m$-localization to increase in the sidelength of the cube, this is not possible. This problem already occurs in the one-particle setting and we refer to the lecture notes of Kirsch \cite[Section 9]{Kirsch07} for a more detailed explanation. In this subsection, which is entirely analytic, we control the influence of small ``bad" regions on a larger scales. The main idea is that the bad region should be surrounded by a buffer of good cubes. 

\begin{definition}[Buffered cubes]\label{definition:buffered}
We call a symmetric set $\Ups \subseteq \Lambda_L \subseteq \bZ^{Nd}$ a buffered cube in $\Lambda_L$ if the following holds:
\begin{enumerate}[(i)]
\item \( \Ups \) is of the form 
\begin{equation*}
\Ups =  \Lambda_{R}(\bb) \medcap \Lambda_L,
\end{equation*}
with $\bb\in \Lambda_L$ and $\ell \leq R \leq L$. 
\item There exists a set of good centers \( \cG_\Ups \subseteq \Xi_{L,\ell}\), where  $\Xi_{L,\ell}$ is as in Definition \ref{definition:cover}, such that the cubes
$\{ \Lambda_\ell(\ba) \}_{\ba \in \cG_\Ups} $ are $m$-localizing and
\begin{equation*}
\partial^{\Lambda_L}_{\text{in}} \Ups \subseteq \bigcup_{\ba \in \cG_\Ups} \Lambda_\ell^{\Lambda_L,\ell}(\ba). 
\end{equation*}
\end{enumerate}
\end{definition}

\begin{figure}\label{figure:buffered}
\begin{tikzpicture}[scale=0.8]
\draw[ultra thick] (0,0) rectangle (8,8);
\node at (7,7) {\huge$\Lambda_L$};


\draw[red, ultra thick, fill=red, fill opacity=0.4] (2.5,1.5) -- (5.5,1.5)-- (5.5,4.5) -- (4.5,4.5) -- (4.5,5.5)--(1.5,5.5)--(1.5,2.5) -- (2.5,2.5) -- (2.5,1.5);

\draw[OliveGreen,ultra thick,fill=gray, fill opacity=0.05] (2,2) rectangle (3,3);
\draw[OliveGreen,ultra thick,fill=gray, fill opacity=0.05] (2,1) rectangle (3,2);
\draw[OliveGreen,ultra thick,fill=gray, fill opacity=0.05] (3,1) rectangle (4,2);
\draw[OliveGreen,ultra thick,fill=gray, fill opacity=0.05] (4,1) rectangle (5,2);
\draw[OliveGreen,ultra thick,fill=gray, fill opacity=0.05] (5,1) rectangle (6,2);

\draw[OliveGreen,ultra thick,fill=OliveGreen, fill opacity=0.3] (5,2) rectangle (6,3); 

\draw[OliveGreen,ultra thick,fill=gray, fill opacity=0.05] (5,3) rectangle (6,4);
\draw[OliveGreen,ultra thick,fill=gray, fill opacity=0.05] (5,4) rectangle (6,5);
\draw[OliveGreen,ultra thick,fill=gray, fill opacity=0.05] (4,4) rectangle (5,5);

\draw[OliveGreen,ultra thick,fill=gray, fill opacity=0.05] (4,5) rectangle (5,6);
\draw[OliveGreen,ultra thick,fill=gray, fill opacity=0.05] (3,5) rectangle (4,6);
\draw[OliveGreen,ultra thick,fill=OliveGreen, fill opacity=0.3](2,5) rectangle (3,6);                         

\draw[OliveGreen,ultra thick,fill=gray, fill opacity=0.05] (1,5) rectangle (2,6);
\draw[OliveGreen,ultra thick,fill=gray, fill opacity=0.05] (1,4) rectangle (2,5);
\draw[OliveGreen,ultra thick,fill=gray, fill opacity=0.05] (1,3) rectangle (2,4);
\draw[OliveGreen,ultra thick,fill=gray, fill opacity=0.05] (1,2) rectangle (2,3);

\node at (6,6) {\huge \textcolor{red}{ $\Ups$}};
\node at (2.5,6.75) {\Large\textcolor{OliveGreen}{$\Lambda_\ell(a)$}};

\end{tikzpicture}
\caption{\small We display a buffered cube \( \Ups \subseteq \Lambda_L\). The large black box corresponds to \( \Lambda_L \) and the red region corresponds to \( \Ups \). The green boxes illustrate a subset of the localizing boxes \( \Lambda_\ell(a)\), \( a \in \mathcal{G}_\Ups \). The shaded green area displays a single symmetric localizing box. For illustrative purposes, we accepted two differences between this illustration and our argument. By definition, the full family  \(\{ \Lambda_\ell(a)\}_{a\in \mathcal{G}_\Ups} \) has to contain overlapping cubes, but the figure only shows disjoint cubes. While \( \Ups \) accounts for a large portion of \( \Lambda_L \) in the figure, our estimates show that the buffered cubes are small compared to \( \Lambda_L \). }
\end{figure}
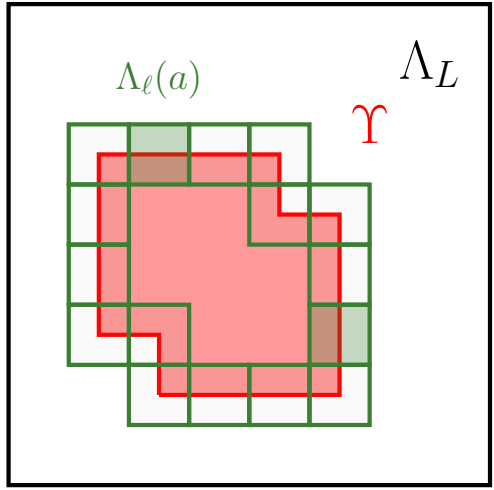

While we do not strictly require that $\Lambda_\ell(\bb)$ in Definition \ref{definition:buffered} is bad, i.e., not $m$-localizing, this will be the case for most of our buffered cubes. 
We illustrate this definition in Figure \ref{figure:buffered}. \\

We now present a short auxiliary lemma that will be used in the proof of Lemma \ref{lemma:buffered} below. The statement and proof are a minor modification of \cite[Lemma 2.1]{EK19}. 

\begin{lemma}\label{lemma:crude_move_to_boundary}
Let \( \Theta \subseteq \bZ^{Nd} \) be symmetric and let \( \psi \) be a generalized eigenfunction of \( H_\Theta \) with generalized eigenvalue \( \mu \). Let \( \Phi \subseteq \Theta \) be finite and symmetric, \( \eta > 0 \), and suppose
\begin{equation}\label{eq:crude_spectral_separation}
\operatorname{dist}(\mu,\sigma(H_\Phi)) \geq \eta. 
\end{equation}
Then, we have the estimate
\begin{equation}
\| \psi \|_{\ell^2(\Phi)} \leq 2Nd \eta^{-1} \Big( \# \partial^\Theta_{\text{ex}} \Phi\Big)^{\frac{1}{2}} \max_{\mathbf{v} \in \partial^\Theta_{\text{ex}} \Phi} |\psi(\mathbf{v})|. 
\end{equation}
\end{lemma}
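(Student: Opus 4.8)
The plan is to exploit the eigenvalue equation $H_\Theta \psi = \mu \psi$ together with the spectral gap hypothesis \eqref{eq:crude_spectral_separation} to express $\psi|_\Phi$ in terms of the boundary values of $\psi$. The starting point is the geometric decomposition \eqref{eq:geometric_decomposition}, namely $H_\Theta = (H_\Phi \oplus H_{\Theta\setminus\Phi}) + \Gamma_{\partial^\Theta\Phi}$. Restricting the equation $(H_\Theta - \mu)\psi = 0$ to $\Phi$ and writing $1_\Phi$ for the coordinate projection onto $\ell^2(\Phi)$, I get $(H_\Phi - \mu)(1_\Phi \psi) = - 1_\Phi \Gamma_{\partial^\Theta\Phi} \psi$, since the direct-summand $H_{\Theta\setminus\Phi}$ contributes nothing on $\Phi$. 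The condition $\dist(\mu,\sigma(H_\Phi)) \geq \eta$ guarantees that $H_\Phi - \mu$ is invertible on $\ell^2(\Phi)$ with $\|(H_\Phi-\mu)^{-1}\| \leq \eta^{-1}$, so $\| \psi \|_{\ell^2(\Phi)} = \|1_\Phi\psi\|_{\ell^2} \leq \eta^{-1} \| 1_\Phi \Gamma_{\partial^\Theta\Phi} \psi \|_{\ell^2}$.

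The remaining work is to bound $\| 1_\Phi \Gamma_{\partial^\Theta\Phi} \psi \|_{\ell^2(\Phi)}$ by the right-hand side. From the definition of $\Gamma_{\partial^\Theta\Phi}$, for $\bu \in \Phi$ the value $(\Gamma_{\partial^\Theta\Phi}\psi)(\bu)$ is a sum of $\psi(\bv)$ over those $\bv$ with $(\bu,\bv) \in \partial^\Theta\Phi$, i.e. over $\bv \in \partial^\Theta_{\text{ex}}\Phi$ with $d_S(\bu,\bv)=1$. For each $\bu$ the number of such neighbours $\bv$ is bounded by the coordination number of the graph on $\bZ^{Nd}$ with the symmetrized adjacency, which is at most $2Nd$ (each of the $N$ particles can move by $\pm 1$ in each of the $d$ directions). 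Hence $|(\Gamma_{\partial^\Theta\Phi}\psi)(\bu)| \leq 2Nd \max_{\bv\in\partial^\Theta_{\text{ex}}\Phi}|\psi(\bv)|$ for every $\bu$, and this value is nonzero only for $\bu \in \partial^\Theta_{\text{in}}\Phi$ — wait, more precisely for $\bu\in\Phi$ adjacent to $\Theta\setminus\Phi$, of which there are at most $\#\partial^\Theta_{\text{ex}}\Phi \cdot 2Nd$; to keep the clean constant in the statement I would instead note that the support of $1_\Phi\Gamma_{\partial^\Theta\Phi}\psi$ has cardinality at most $\#\partial^\Theta_{\text{ex}}\Phi$ after pairing, or simply bound the $\ell^2$ norm crudely. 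Taking $\ell^2$ norms over $\Phi$ then gives $\| 1_\Phi\Gamma_{\partial^\Theta\Phi}\psi\|_{\ell^2} \leq 2Nd (\#\partial^\Theta_{\text{ex}}\Phi)^{1/2} \max_{\bv\in\partial^\Theta_{\text{ex}}\Phi}|\psi(\bv)|$, and combining with the resolvent bound yields the claim.

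The one point requiring a little care — and the likely main obstacle to getting the stated constant exactly — is the combinatorics pairing interior sites of $\Phi$ with exterior boundary sites: a naive count of interior boundary sites adjacent to $\partial^\Theta_{\text{ex}}\Phi$ would produce an extra factor. The trick (as in \cite[Lemma 2.1]{EK19}) is that the number of nonzero entries of the vector $1_\Phi\Gamma_{\partial^\Theta\Phi}\psi$ is at most $\#\partial^\Theta_{\text{ex}}\Phi$ is not literally true, so instead one estimates $\|1_\Phi\Gamma_{\partial^\Theta\Phi}\psi\|_{\ell^2}^2 = \sum_{\bu\in\Phi} |(\Gamma_{\partial^\Theta\Phi}\psi)(\bu)|^2$ and uses that each term is at most $(2Nd)^2 (\max|\psi(\bv)|)^2$ while the number of $\bu$ contributing is at most $2Nd \cdot \#\partial^\Theta_{\text{ex}}\Phi$ — or, cleaner, expand $(\Gamma_{\partial^\Theta\Phi}\psi)(\bu) = \sum_{\bv:(\bu,\bv)\in\partial^\Theta\Phi}\psi(\bv)$, apply Cauchy–Schwarz in the inner sum together with the fact that $\partial^\Theta\Phi$ is a subset of a graph of degree $\leq 2Nd$ on both sides, to land on $\|1_\Phi\Gamma_{\partial^\Theta\Phi}\psi\|_{\ell^2} \leq 2Nd\,\|\psi\|_{\ell^2(\partial^\Theta_{\text{ex}}\Phi)} \leq 2Nd (\#\partial^\Theta_{\text{ex}}\Phi)^{1/2}\max_{\bv}|\psi(\bv)|$. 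Either way the structure is entirely elementary once the resolvent identity on $\Phi$ is in place; no probabilistic input is needed.
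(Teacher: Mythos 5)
Your argument is correct and coincides with the paper's proof: both restrict $(H_\Theta-\mu)\psi=0$ to $\Phi$ via the geometric decomposition, invert $(H_\Phi-\mu)$ using $\dist(\mu,\sigma(H_\Phi))\geq\eta$, and then bound the boundary term. The paper simply invokes the operator norm bound $\|\Gamma_{\partial^\Theta\Phi}\|_{\ell^2\to\ell^2}\leq 2Nd$ together with $\chi_\Phi\Gamma_{\partial^\Theta\Phi}\psi=\chi_\Phi\Gamma_{\partial^\Theta\Phi}(\chi_{\partial^\Theta_{\text{ex}}\Phi}\psi)$, whereas your final ``cleaner'' paragraph re-derives that norm bound via Cauchy--Schwarz and the degree-$\leq 2Nd$ property of the boundary graph; these are the same estimate, and your earlier hesitations about a spurious extra factor were resolved correctly by that last computation.
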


Except for the weak (and necessary) non-resonance condition \eqref{eq:crude_spectral_separation}, Lemma \ref{lemma:crude_move_to_boundary} requires no information on $\Theta,\psi$, and $\mu$. The lemma never yields any decay by itself, but provides a cheap way to leave a bad region.

\begin{proof}
Since \( (H_\Theta - \mu) \psi = 0 \), it follows from the geometric decomposition \eqref{eq:geometric_decomposition} that 
\begin{equation*}
(H_\phi - \mu ) \chi_{\Phi} \psi = - \chi_{\Phi} \Gamma_{\partial^\Theta \Phi} \psi = -  \chi_{\Phi} \Gamma_{\partial^\Theta \Phi}  (\chi_{\partial^{\Theta}_{\text{ex}} \Phi} \psi).
\end{equation*}
Using this, we obtain 
\begin{align*}
\| \chi_\Phi \psi \|_{\ell^2} 
&= \| (H_\Phi-\mu)^{-1}  \chi_{\Phi} \Gamma_{\partial^\Theta \Phi}  (\chi_{\partial^{\Theta}_{\text{ex}} \Phi} \psi) \|_{\ell^2}  \\
&\leq \| (H_\Phi-\mu)^{-1} \|_{\ell^2\rightarrow \ell^2}  \|  \Gamma_{\partial^\Theta \Phi}  \|_{\ell^2\rightarrow \ell^2} \| \chi_{\partial^{\Theta}_{\text{ex}} \Phi} \psi \|_{\ell^2} \\
&\leq 2Nd \eta^{-1} \Big( \# \partial^\Theta_{\text{ex}} \Phi\Big)^{\frac{1}{2}} \max_{\mathbf{v} \in \partial^\Theta_{\text{ex}} \Phi} |\psi(\mathbf{v})|. \qedhere
\end{align*}
\end{proof}

\begin{lemma}\label{lemma:buffered}
Let \( \Lambda_L = \Lambda_L(\bx_0) \), where \( \bx_0 \in \bZ^{Nd} \), and let \( (\psi,\mu) \) be an eigenpair for \( H_{\Lambda_L} \). Let \( \Ups \subsetneq \Lambda_L \) be an m-buffered cube and suppose that 
\begin{equation*}
\dist(\mu,\sigma(H_\Ups))\geq \frac{1}{2} e^{-L^\beta} \qquad \text{and} \qquad \min_{\ba \in \cG_\Ups} \dist(\mu,\sigma(H_{\Lambda_\ell(\ba)})) \geq \frac{1}{2} e^{-L^\beta}. 
\end{equation*}
Further assume that 
\begin{equation*}
\ell \geq C(m,\tau,\gamma,\beta,d) (N \log(2+N))^{\frac{1}{\tau}}. 
\end{equation*}
Then, we have for \( m^\prime = m( 1- 3\ell^{\frac{1-\tau}{2}}) \) that 
\begin{equation}\label{eq:buffered_decay}
\max_{\by \in \Ups} |\psi(\by)| \leq e^{-\frac{m^\prime}{2} \ell} \max_{\ba \in \cG_\Ups} \max_{\bv \in \partial_{\ex}^{\Lambda_L} \Lambda_\ell(\ba)} |\psi(\bv)|. 
\end{equation}
\end{lemma}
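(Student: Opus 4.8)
The plan is to combine the decay estimate on a single localizing cube (Lemma \ref{lemma:local_good_cube}) with the crude escape estimate (Lemma \ref{lemma:crude_move_to_boundary}) via an iteration over the cubes in the buffer $\{\Lambda_\ell(\ba)\}_{\ba \in \cG_\Ups}$. The key geometric picture is that any point $\by \in \Ups$ is separated from the interior boundary $\partial^{\Lambda_L}_{\text{in}} \Ups$ by a distance that is at least comparable to $\ell$ (since $\Ups$ is a cube of radius $R \geq \ell$ intersected with $\Lambda_L$, actually we need $\by$ to be deep inside — more carefully, we iterate toward the boundary), and that the buffer condition $\partial^{\Lambda_L}_{\text{in}} \Ups \subseteq \bigcup_{\ba \in \cG_\Ups} \Lambda_\ell^{\Lambda_L,\ell}(\ba)$ means the interior boundary is covered by the cores of $m$-localizing cubes.

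First I would set up the iteration. Let $\by \in \Ups$ realize the maximum of $|\psi|$ on $\Ups$. Using Lemma \ref{lemma:crude_move_to_boundary} with $\Phi = \Ups$ and $\eta = \frac12 e^{-L^\beta}$, together with $\ell^2 \to \ell^\infty$ bounds, I get control of $|\psi(\by)|$ by $2Nd \cdot 2 e^{L^\beta} (\#\partial^{\Lambda_L}_{\text{ex}} \Ups)^{1/2} \max_{\bv \in \partial^{\Lambda_L}_{\text{ex}} \Ups} |\psi(\bv)|$; but this alone gives no decay, so instead the real mechanism is a walk: pick a point $\bw_0 \in \Ups$ where $|\psi|$ is maximized; if $\bw_0$ lies in the core $\Lambda_\ell^{\Lambda_L,\ell}(\ba)$ of some good cube (which happens once we are near $\partial_{\text{in}}^{\Lambda_L}\Ups$, by the buffer hypothesis), apply Lemma \ref{lemma:local_good_cube} with $\Theta = \Lambda_L$, $\Lambda_\ell = \Lambda_\ell(\ba)$ to gain a factor $e^{-m' d_S(\bw_0, \bv)}$ for some $\bv \in \partial^{\Lambda_L}_{\text{ex}} \Lambda_\ell(\ba)$. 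Each such step either moves the evaluation point a distance of order $\ell$ (when we cross a good cube, picking up $e^{-m'\ell}$-type decay since the distance from the core to the exterior boundary is $\geq \ell - \ell^{\tilde\tau}$), or moves us from the interior of $\Ups$ toward $\partial_{\text{in}}^{\Lambda_L}\Ups$ using the crude lemma at a bounded multiplicative cost.

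The cleanest route: I would show that from $\by \in \Ups$, a single application of Lemma \ref{lemma:local_good_cube} is actually enough. Since $\Ups = \Lambda_R(\bb) \cap \Lambda_L$ with $R \geq \ell$, the point $\by \in \Ups$... but $\by$ need not be in a good cube's core. So one first moves to $\partial^{\Lambda_L}_{\text{in}}\Ups$ by Lemma \ref{lemma:crude_move_to_boundary} (cost: a factor $C(N,d) e^{L^\beta}(\#\Lambda_L)^{1/2}$, no decay, but the exterior boundary of $\Ups$ within $\Lambda_L$ is contained near $\partial_{\text{in}}^{\Lambda_L}\Ups$ — wait, $\partial_{\text{ex}}^{\Lambda_L}\Ups$ consists of points just outside $\Ups$). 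Hmm — the issue is whether we want to land inside or outside $\Ups$. I think the correct iteration is: $\by \in \Ups$; the maximum of $|\psi|$ over $\Ups$ is attained at some $\bw$; $\bw$ either lies in a good core (apply \ref{lemma:local_good_cube}, which evaluates $|\psi|$ at an exterior-boundary point of $\Lambda_\ell(\ba)$, strictly closer to $\partial_{\text{in}}^{\Lambda_L}\Ups$ or landing in $\partial_{\text{ex}}^{\Lambda_L}\Lambda_\ell(\ba)$), or we are done because $\bw$ is within $\ell$ of $\partial_{\text{in}}^{\Lambda_L}\Ups$. Iterating at most $O(R/\ell) \leq O(L/\ell)$ times and tracking that each good-cube crossing contributes a factor $\leq C(N,d)^2 e^{L^\beta} e^{2m\ell^\tau} e^{-m'\ell}$ (from the explicit bound in the proof of Lemma \ref{lemma:local_good_cube}, before the final simplification), whose product telescopes, and using $L = \ell^\gamma$ with $\gamma < 2$ and $\tau < 1$ so that $e^{L^\beta} e^{2m\ell^\tau} = e^{\ell^{\gamma\beta} + 2m\ell^\tau} \ll e^{\frac{m'}{2}\ell}$ for $\ell$ large, yields \eqref{eq:buffered_decay}.

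The main obstacle I anticipate is the bookkeeping of the iteration: ensuring that each step genuinely makes progress toward $\partial_{\text{in}}^{\Lambda_L}\Ups$ (so the walk terminates after $O(L/\ell)$ steps rather than cycling), and that the point produced by Lemma \ref{lemma:local_good_cube} — which a priori only lands in $\partial^{\Lambda_L}_{\text{ex}}\Lambda_\ell(\ba)$, not necessarily in another good core — can be fed back into the iteration. Resolving this requires either a geometric argument that the good cores chain up to connect any interior point of $\Ups$ to $\partial_{\text{in}}^{\Lambda_L}\Ups$, or (more likely, following Elgart–Klein) reformulating the conclusion so that Lemma \ref{lemma:local_good_cube} is applied at the maximizer over $\Ups$ directly: if $\bw = \arg\max_{\Ups}|\psi|$ is not in any good core then $\bw$ is within distance $\ell$ of $\partial_{\text{in}}^{\Lambda_L}\Ups \subseteq \bigcup_{\ba}\Lambda_\ell^{\Lambda_L,\ell}(\ba)$; then a \emph{single} application of Lemma \ref{lemma:crude_move_to_boundary} on the good cube containing a neighbor of $\bw$ moves us to its exterior boundary with controlled cost — and here the $e^{-m'\ell}$ gain must be extracted from the fact that a good cube's core sits a distance $\geq \ell - \ell^{\tilde\tau} \geq \ell/2$ from its own exterior boundary, so even one crossing suffices once we observe $\max_\Ups |\psi| = |\psi(\bw)|$ with $\bw$ in (or adjacent to) a good core and that core is $\Lambda_\ell^{\Lambda_L,\ell}$-deep. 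I would therefore structure the proof as: (1) locate the maximizer $\bw$ over $\Ups$; (2) by the buffer hypothesis, $\bw$ or a near-neighbor lies in some $\Lambda_\ell^{\Lambda_L,\ell}(\ba)$ with $\ba \in \cG_\Ups$ — if not, $\bw$ is far from $\partial_{\text{in}}$, contradicting maximality once combined with Lemma \ref{lemma:local_good_cube} applied at an interior good core it must pass through; (3) apply Lemma \ref{lemma:local_good_cube} on $\Lambda_\ell(\ba)$ to get $|\psi(\bw)| \leq e^{-m'(\ell - \ell^{\tilde\tau})}\max_{\bv\in\partial^{\Lambda_L}_{\text{ex}}\Lambda_\ell(\ba)}|\psi(\bv)|$; (4) check $m'(\ell - \ell^{\tilde\tau}) \geq \frac{m'}{2}\ell$ for $\ell$ large, concluding \eqref{eq:buffered_decay}.
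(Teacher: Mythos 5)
Your proposal has a genuine gap, and it sits precisely in the step you yourself flag as uncertain. The final structure you propose — locate the maximizer $\bw = \arg\max_\Ups |\psi|$, argue that $\bw$ lies in (or near) a good core $\Lambda_\ell^{\Lambda_L,\ell}(\ba)$ with $\ba\in\cG_\Ups$, then apply Lemma~\ref{lemma:local_good_cube} once — fails at step (2). The maximizer of $|\psi|$ over $\Ups$ can be deep in the interior of $\Ups$, for instance inside the ``bad'' cube $\Lambda_\ell(\bb)$ at its center, which is exactly the cube that need not be $m$-localizing and which the buffer was introduced to shield. The buffer hypothesis only controls the interior boundary $\partial^{\Lambda_L}_{\text{in}}\Ups$; it says nothing about good cores near the maximizer. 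Your fallback reasoning (``if not, $\bw$ is far from $\partial_{\text{in}}$, contradicting maximality\ldots'') does not go through: there is no good core ``it must pass through,'' and no contradiction. The iteration you sketched earlier also stalls on the issue you correctly identified: Lemma~\ref{lemma:local_good_cube} dumps you on $\partial^{\Lambda_L}_{\text{ex}}\Lambda_\ell(\ba)$, which need not lie in another good core.

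The paper's argument is the one you briefly considered and then abandoned. It requires no iteration and does not touch the maximizer's location. First apply Lemma~\ref{lemma:crude_move_to_boundary} with $\Phi=\Ups$ and $\eta=\tfrac12 e^{-L^\beta}$: this bounds $\max_{\by\in\Ups}|\psi(\by)|$ by $4Nd\, e^{L^\beta}\,|\Lambda_L|^{1/2}\max_{\bu\in\partial^{\Lambda_L}_{\text{ex}}\Ups}|\psi(\bu)|$ — no decay, but a prefactor of size roughly $e^{L^\beta}|\Lambda_L|^{1/2}$. This is exactly the ``cheap escape'' remark following Lemma~\ref{lemma:crude_move_to_boundary}: it moves the evaluation point from \emph{anywhere} in $\Ups$ to the exterior boundary, independently of where the maximum sits. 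Second, by the buffer property any such $\bu$ lies in some core $\Lambda_\ell^{\Lambda_L,\ell}(\ba_\bu)$ with $\ba_\bu\in\cG_\Ups$, so a single application of Lemma~\ref{lemma:local_good_cube} produces the factor $e^{-m'\ell}$ (using $d_S(\bu,\bv)\geq\ell$ for $\bv\in\partial^{\Lambda_L}_{\text{ex}}\Lambda_\ell(\ba_\bu)$). Third, the arithmetic check $4Nd\,e^{L^\beta}|\Lambda_L|^{1/2}e^{-m'\ell/2}\leq 1$ follows from $\gamma\beta<1$ and the lower bound on $\ell$, absorbing the escape cost and leaving the claimed $e^{-m'\ell/2}$. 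Where you asked ``whether we want to land inside or outside $\Ups$'' is the crux: you land just outside, and the buffer property is exactly what guarantees that one step outside $\Ups$ puts you deep in a good cube.
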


This is the \( N\)-particle analog of \cite[Lemma 2.4]{EK19}. The precise form of the decay in \eqref{eq:buffered_decay} is inessential. It is only important that the pre-factor is bounded by $1/2$, say, which prevents repeated (or infinitely many) visits to the buffered cubes.

\begin{proof}
From Lemma \ref{lemma:crude_move_to_boundary},  we obtain that 
\begin{equation*}
\max_{\by \in \Ups} |\psi(\by)| \leq 4Nd~  e^{L^\beta} |\partial^{\Lambda_L}_{\ex} \Ups|^{\frac{1}{2}} \max_{\bu \in \partial^{\Lambda_L}_\ex \Ups} |\psi(\bu)| \leq 4Nd  e^{L^\beta} |\Lambda_L|^{\frac{1}{2}} \max_{\bu \in \partial^{\Lambda_L}_\ex \Ups} |\psi(\bu)|. 
\end{equation*}
From the definition of a buffered cube, we have that \( \bu \in \Lambda_\ell^{\Lambda_L,\ell}(\ba_\bu) \) for some \( \ba_\bu \in \cG_\Ups \). From the lower bound on $\ell$ and Lemma~\ref{lemma:local_good_cube}, it follows that 
\begin{equation*}
\max_{\by \in \Ups} |\psi(\by)| \leq  4Nd e^{L^\beta} |\Lambda_L|^{\frac{1}{2}} e^{-m^\prime \ell }  \max_{\bv \in \partial_{\ex}^{\Lambda_L} \Lambda_\ell(\ba)} |\psi(\bv)|.
\end{equation*}
It therefore remains to prove that 
\begin{equation*}
4Nd e^{L^\beta} |\Lambda_L|^{\frac{1}{2}} e^{-\frac{m^\prime}{2} \ell}\leq 1. 
\end{equation*}
Using the definition of $m^\prime$ and that $\ell\geq 6^{\frac{2}{1-\tau}}$, we obtain $m^\prime \geq m/2$.  Thus, it suffices to show 
that
\begin{equation*}
\frac{m}{4} \ell \geq \ell^{\gamma \beta} + \frac{1}{2} \log(N!) + \frac{Nd}{2} \log(2\ell^\gamma+1)+\log(4Nd). 
\end{equation*}
Since $\gamma \beta<1$, this follows from the lower bound on $\ell$. 
\end{proof}

\section{Initial Step}

In this section, we proceed with the initial step of the eigenfunction multi-scale analysis, and in particular, we establish a localization estimate for an initial scale, $\ell$. We record the main result of this section in the following.
\begin{proposition}\label{prop:initial_scale}
Let \( \ell \geq 1 \), let \( m > 0\), let $\delta>0$, and assume that the disorder parameter, $\lambda$, satisfies
\begin{equation*}
\lambda \geq 2Nd \| \rho\|_{\infty} (1+e^m) ~(N!)^2~ (2\ell+1)^{2Nd}~ \delta^{-1}. 
\end{equation*}
Then, we have 
\begin{equation}
\inf_{\mathbf{a} \in \bZ^{Nd}} \mathbb{P} \{\Lambda_{\ell}^{(N)}(\mathbf{a})\textup{ is $m$-localizing for $H$}\} \geq 1 - \delta.
\end{equation}
\end{proposition}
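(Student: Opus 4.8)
The plan is to exploit the large disorder. On $\Lambda_\ell := \Lambda_\ell^{(N)}(\ba)$ the restricted Hamiltonian $H_{\Lambda_\ell}$ is the sum of the diagonal operator $D$ with entries $D(\bx) := \lambda V(\bx) + U(\bx)$ and the off-diagonal piece $-\Delta^{(N)}$ restricted to $\Lambda_\ell$, which has norm at most $\|\Delta^{(N)}\| \le 2Nd$; the only randomness we will need is a level-repulsion statement for the diagonal entries. Fix $g := 2Nd(1+e^{m})$ and introduce the event
\[
\mathcal{E} := \big\{\, |D(\bx) - D(\by)| \ge g \ \text{ whenever } \bx,\by \in \Lambda_\ell \text{ with } d_S(\bx,\by) \ge 1 \,\big\}.
\]

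First I would prove $\bP(\mathcal{E}^{c}) \le \delta$, uniformly in $\ba$. Fix $\bx,\by \in \Lambda_\ell$ with $d_S(\bx,\by) \ge 1$; then $\by$ is not a permutation of $\bx$, so $N_{u_0}(\bx) \ne N_{u_0}(\by)$ for some site $u_0$, and we may write $D(\bx)-D(\by) = \lambda c\, \cV(u_0) + R$, where $c := N_{u_0}(\bx)-N_{u_0}(\by) \in \bZ\setminus\{0\}$ and $R$ is measurable with respect to $\{\cV(u)\}_{u \ne u_0}$. Conditioning on $R$ and using that $\cV(u_0)$ has a density bounded by $\|\rho\|_{\infty}$ (Assumption \ref{assumption:V}) gives $\bP(|D(\bx)-D(\by)| < g) \le 2g\|\rho\|_{\infty}/(\lambda|c|) \le 2g\|\rho\|_{\infty}/\lambda$. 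Since $\#\Lambda_\ell \le N!\,(2\ell+1)^{Nd}$, a union bound over the at most $\tfrac12 (N!)^{2}(2\ell+1)^{2Nd}$ relevant pairs yields $\bP(\mathcal{E}^{c}) \le (N!)^{2}(2\ell+1)^{2Nd}\, g\,\|\rho\|_{\infty}/\lambda$, which is $\le \delta$ precisely under the assumed lower bound on $\lambda$.

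Next I would show that $\mathcal{E}$ forces $\Lambda_\ell$ to be $m$-localizing. On $\mathcal{E}$ the level sets $\mathcal{O}_i := \{\bx \in \Lambda_\ell : D(\bx) = d_i\}$, as $d_1 < \dots < d_r$ range over the distinct values of $D$, are exactly the $S_N$-orbits meeting $\Lambda_\ell$, and $d_{i+1}-d_i \ge g > 4Nd \ge 2\|\Delta^{(N)}\|$. By self-adjointness $\sigma(H_{\Lambda_\ell}) \subseteq \bigcup_i [d_i - 2Nd,\, d_i + 2Nd]$, and these intervals are disjoint, so each eigenvalue $\theta$ lies in exactly one of them, say the $i(\theta)$-th. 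For a unit eigenfunction $\varphi$ of $H_{\Lambda_\ell}$ with eigenvalue $\theta$ and any $\by \notin \mathcal{O}_{i(\theta)}$ one has $|D(\by)-\theta| \ge g - 2Nd = 2Nd\, e^{m}$, so the eigenvalue equation $(D(\by)-\theta)\varphi(\by) = (\Delta^{(N)}_{\Lambda_\ell}\varphi)(\by)$ gives
\[
|\varphi(\by)| \le e^{-m}\, \max\big\{ |\varphi(\bz)| : \bz \in \Lambda_\ell,\ \|\bz-\by\|_{1} = 1 \big\}.
\]
Iterating this inequality along shortest paths in $\Lambda_\ell$ toward $\mathcal{O}_{i(\theta)}$ --- a discrete maximum-principle argument, comparing the quantities $M_k := \max\{|\varphi(\by)| : \dist_{\Lambda_\ell}(\by,\mathcal{O}_{i(\theta)}) \ge k\}$, which satisfy $M_k \le e^{-m} M_{k-1}$ --- yields $|\varphi(\by)| \le e^{-m\,\dist_{\Lambda_\ell}(\by,\mathcal{O}_{i(\theta)})}\,\|\varphi\|_{\infty}$ for all $\by$. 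Since graph distance in $\Lambda_\ell$ dominates the $\ell^{1}$-distance, hence the $\ell^{\infty}$-distance, which in turn equals $d_S(\by,\bx_\theta)$ for any fixed representative $\bx_\theta \in \mathcal{O}_{i(\theta)}$, and $\|\varphi\|_{\infty} \le \|\varphi\|_{2} = 1$, we obtain $|\varphi(\by)| \le e^{-m\,d_S(\by,\bx_\theta)}$ for every $\by$ --- in particular whenever $d_S(\by,\bx_\theta) \ge \ell^{\tau}$ --- so $\varphi$ is $(\bx_\theta,m)$-localizing. Applying this to each member of an orthonormal eigenbasis of $H_{\Lambda_\ell}$ shows that $\Lambda_\ell$ is $m$-localizing on $\mathcal{E}$, and combining this with the first step completes the proof.

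The argument is essentially elementary, and the one point that I expect needs care is the bookkeeping around the $S_N$-symmetry: because $V$, $U$, and $\Delta^{(N)}$ are permutation invariant, the diagonal $D$ repeats along each orbit, so the perturbative clusters are orbits rather than single configurations and the localization center $\bx_\theta$ witnessing Definition \ref{definition:localizing} is simply some point of the relevant orbit --- which is consistent precisely because decay in that definition is measured in $d_S$, i.e., relative to the orbit. The remaining ingredients (first-order perturbation theory to separate the clusters, and the one-line iteration of the eigenvalue equation to turn ``$\varphi$ is small off the orbit'' into exponential decay in $d_S$) are routine.
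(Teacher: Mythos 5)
Your proof is correct and follows essentially the same route as the paper: the deterministic part is the Gershgorin/first-order-perturbation argument of the paper's Lemma~\ref{lemma:initial_scale_analytic} (your clusters $\mathcal O_i$ and the inclusion $\sigma(H_{\Lambda_\ell})\subseteq\bigcup_i[d_i-2Nd,d_i+2Nd]$ are exactly the Gershgorin discs there, and your iteration of the eigenvalue equation with gain $e^{-m}$ is the same as \eqref{eq:initial_scale_single_step}), while the probabilistic part is the same conditioning on a site $u_0$ where the occupation numbers differ, followed by the same union bound over pairs in $\Lambda_\ell$. The only cosmetic differences are that you track decay in the graph distance to the orbit (which dominates the $\ell^1$-distance used in the paper and a fortiori the symmetrized distance) and that you package the $\frac12$ from counting unordered pairs against the factor $2$ from the two-sided interval, whereas the paper absorbs this silently; neither change affects the argument.
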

\begin{remark}
In our application of this proposition, we will in fact take $\delta$ to be $\ell^{-c}$, for some exponent $c \equiv c(N) >0$, and hence we see from the statement that by taking $\lambda$ sufficiently large, we can ensure $c \gg1$.
\end{remark}

The previous proposition should be compared to \cite[Prop 4.2]{EK16}.  In order to establish Proposition~\ref{prop:initial_scale}, we will prove Lemma~\ref{lemma:initial_scale_analytic}, which should be compared with  \cite[Lemma 4.4]{EK16}. Lemma ~\ref{lemma:initial_scale_analytic} states that under a certain separation condition on the potential, recorded in \eqref{eq:potential_separation} below, we have the required eigenfunction decay. The argument for this lemma will be purely analytic, and not rely on any probabilistic estimates. Proposition~\ref{prop:initial_scale} will then follow from establishing bounds for the probability of the event that the separation condition holds. The difference between the single and multi-particle setting lies in the permutation invariance of the Hamiltonian, which leads to a degenerate spectrum. 

We first present the analytic portion of the initial scale estimate. This part of the argument is based on Gershgorin's disc theorem, which we now recall.

\begin{lemma}[Gershgorin's disc theorem]
Let \( A = (a_{jk})_{j,k=1}^n  \in \mathbb{C}^{n\times n} \) be a complex matrix. For all \( 1 \leq j \leq n \), we define \( R_j := \sum_{1\leq k \leq n \colon k\neq j} |a_{jk}| \). Then, the eigenvalues of \( A \) are contained in 
\begin{equation*}
\bigcup_{j=1}^n \{ z \in \bC^n \colon |z-a_{jj}|\leq R_j \}. 
\end{equation*}
\end{lemma}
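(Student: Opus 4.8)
The plan is to run the standard eigenvector argument, which requires nothing beyond the definition of an eigenvalue and the triangle inequality. First I would fix an arbitrary eigenvalue $\lambda$ of $A$ together with a nonzero eigenvector $v = (v_1,\dots,v_n)^{\top} \in \bC^n$, so that $Av = \lambda v$. Since $v \neq 0$, I would pick an index $j \in \{1,\dots,n\}$ at which $|v_j| = \max_{1\le k \le n}|v_k| > 0$; this choice is the only slightly non-obvious ingredient, and it is precisely what will make the relevant ratios bounded by $1$.

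Next I would write out the $j$-th component of the identity $Av = \lambda v$, namely $\sum_{k=1}^{n} a_{jk} v_k = \lambda v_j$, and separate off the diagonal term to obtain $(\lambda - a_{jj}) v_j = \sum_{k \neq j} a_{jk} v_k$. Taking absolute values, applying the triangle inequality, and then dividing by $|v_j| > 0$ gives
\[
|\lambda - a_{jj}| \;\le\; \sum_{k \neq j} |a_{jk}|\,\frac{|v_k|}{|v_j|} \;\le\; \sum_{k \neq j} |a_{jk}| \;=\; R_j,
\]
where the last inequality uses $|v_k| \le |v_j|$ for every $k$ by the choice of $j$. Hence $\lambda$ lies in the $j$-th Gershgorin disc $\{z \in \bC : |z - a_{jj}| \le R_j\}$, and a fortiori in the displayed union. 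Since $\lambda$ was an arbitrary eigenvalue, this proves the claim.

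I do not anticipate any real obstacle: this is a classical fact and the argument above is complete up to the trivial manipulations indicated. The only conceptual point worth flagging in the write-up is the selection of the coordinate of maximal modulus, since it is what controls the off-diagonal contributions; everything else is the triangle inequality applied to a single scalar equation. I would therefore simply present the argument directly rather than structuring it into sub-steps.
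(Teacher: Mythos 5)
Your proof is correct and complete: fixing an eigenvector, selecting the coordinate of maximal modulus, and applying the triangle inequality to the $j$-th scalar equation is the standard textbook argument for Gershgorin's theorem. The paper itself cites this lemma as a classical fact and provides no proof, so there is nothing to compare against; your write-up would serve as a perfectly adequate standalone justification.
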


We now establish the analytic lemma. To simplify the notation, we write $\bx \not \sim_\pi \by$ for any $\bx,\by \in \bZ^{Nd}$ satisfying $\bx \neq \pi \by$ for all $\pi \in S_N$. 
\begin{lemma}\label{lemma:initial_scale_analytic}
Let \( \Theta \subseteq \bZ^{Nd} \)  be finite and symmetric, let  \( \eta > 4Nd \), and assume for any \( \bx, \by \in \Theta \) satisfying \( \bx \not \sim_\pi \by \) that 
\begin{equation}\label{eq:potential_separation}
|(\lambda V(\bx)+ U(\bx)) -( \lambda V(\by)+ U(\by)) | \geq \eta. 
\end{equation}
Then, for all normalized eigenfunctions \( \varphi \in \ell^2(\Theta) \) there exists an \( \bx \in \Theta \) such that 
\begin{equation}\label{eq:initial_scale_decay}
|\varphi(y)|\leq \Big( \frac{2Nd}{\eta-2Nd}\Big)^{\min_{\pi\in S_N} \| \by - \pi \bx \|_1} \qquad \text{for all } \by \in \Theta. 
\end{equation}
\end{lemma}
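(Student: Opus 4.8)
The plan is to prove the eigenfunction decay estimate \eqref{eq:initial_scale_decay} by applying Gershgorin's disc theorem to the matrix $H_\Theta$ written in the standard basis $\{\delta_\bx\}_{\bx \in \Theta}$, exploiting the fact that when the disorder $\lambda$ is effectively large (encoded here in the separation condition \eqref{eq:potential_separation}), the Hamiltonian is a small perturbation of the diagonal potential operator. First I would record that in the basis $\{\delta_\bx\}$, the off-diagonal entries of $H_\Theta$ come only from the $N$-particle Laplacian $\Delta^{(N)}$, so that for each $\bx \in \Theta$ the Gershgorin radius is $R_\bx = \#\{\by \in \Theta : \|\by - \bx\|_1 = 1\} \leq 2Nd$, while the diagonal entry is $a_{\bx\bx} = \lambda V(\bx) + U(\bx)$ (the Laplacian contributing nothing on the diagonal). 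Thus every eigenvalue $\mu$ of $H_\Theta$ satisfies $|\mu - (\lambda V(\bx_0) + U(\bx_0))| \leq 2Nd$ for at least one $\bx_0 \in \Theta$.

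Next I would argue that this $\bx_0$ is essentially unique up to permutation, and serves as the localization center. Indeed, if $\bx_0 \not\sim_\pi \bx_1$ and both satisfy the Gershgorin inclusion for the same $\mu$, then by \eqref{eq:potential_separation} we would get $\eta \leq |(\lambda V(\bx_0)+U(\bx_0)) - (\lambda V(\bx_1)+U(\bx_1))| \leq 4Nd$, contradicting $\eta > 4Nd$. So the "resonant" site is unique modulo $S_N$. Then, for the eigenfunction $\varphi$ with eigenvalue $\mu$, I would write the eigenvalue equation coordinatewise: for any $\by \in \Theta$,
\begin{equation*}
(\lambda V(\by) + U(\by) - \mu)\varphi(\by) = -\sum_{\bz \in \Theta : \|\bz - \by\|_1 = 1} \varphi(\bz).
\end{equation*}
For $\by \not\sim_\pi \bx_0$, the prefactor satisfies $|\lambda V(\by) + U(\by) - \mu| \geq \eta - 2Nd$ (since $\mu$ is within $2Nd$ of the resonant value, which differs from the $\by$-value by at least $\eta$), hence
\begin{equation*}
|\varphi(\by)| \leq \frac{1}{\eta - 2Nd}\sum_{\bz : \|\bz - \by\|_1 = 1} |\varphi(\bz)| \leq \frac{2Nd}{\eta - 2Nd}\max_{\bz : \|\bz - \by\|_1 = 1}|\varphi(\bz)|.
\end{equation*}

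Finally, I would iterate this one-step bound along a shortest $\ell^1$-path from $\by$ to the orbit of $\bx_0$. Set $k = \min_{\pi \in S_N}\|\by - \pi\bx_0\|_1$ and choose a path $\by = \bz_0, \bz_1, \ldots, \bz_k$ with $\|\bz_{i+1}-\bz_i\|_1 = 1$ reaching a permutation of $\bx_0$; every intermediate $\bz_i$ with $i < k$ still has symmetrized $\ell^1$-distance $\geq 1$ to the orbit of $\bx_0$, hence $\bz_i \not\sim_\pi \bx_0$, so the per-step contraction factor $q := 2Nd/(\eta - 2Nd) < 1$ applies at each of the first $k$ steps, giving $|\varphi(\by)| \leq q^k \max_{\bw \in \Theta}|\varphi(\bw)|$. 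Using $\|\varphi\|_{\ell^2} = 1$ we have $\max_\bw |\varphi(\bw)| \leq 1$, which yields \eqref{eq:initial_scale_decay}. The one subtlety to handle carefully is the permutation structure: one must check that the natural $\ell^1$-geodesic argument respects the symmetrized distance $\min_\pi \|\cdot - \pi\bx_0\|_1$, i.e. that fixing the optimal permutation $\pi_\ast$ for $\by$ and walking toward $\pi_\ast \bx_0$ keeps every intermediate point genuinely non-resonant — this is where I expect the main (though still routine) care is needed, and it is exactly why the hypothesis is phrased with $\bx \not\sim_\pi \by$ rather than $\bx \neq \by$.
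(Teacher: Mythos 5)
Your proposal matches the paper's proof essentially step for step: same Gershgorin-disc localization of $\mu$ near a diagonal entry $\lambda V(\bx)+U(\bx)$, same coordinatewise rewriting of the eigenvalue equation to get the one-step contraction by $2Nd/(\eta-2Nd)$ off the orbit $S_N\bx$, and same iteration to the orbit. The extra observation that the resonant site is unique modulo $S_N$ is correct but not needed, and your closing worry about intermediate points along a geodesic hitting the orbit is handled exactly as you suspect (an intermediate point equal to some $\pi'\bx$ would give a path from $\by$ to the orbit shorter than $\min_\pi\|\by-\pi\bx\|_1$, a contradiction), so the argument is sound.
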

\begin{proof}
For any \( x\in \Theta \), we define the Gershgorin disc \( \mathbb{D}_\bx \) by 
\begin{equation}
\bDx := \{ z \in \mathbb{C} \colon |z- (\lambda V(\bx) +  U(\bx))| \leq 2Nd \}. 
\end{equation}
From Gershgorin's disc theorem, we obtain that
\begin{equation}\label{eq:gershgorin}
\sigma( H_\Theta) \subseteq \bigcup_{x\in \Theta} \bDx. 
\end{equation}
Let \( \varphi \in \ell^2(\Theta) \) be a normalized eigenfunction of \( H_\Theta \) with eigenvalue \( \mu \). From \eqref{eq:gershgorin}, we see that \( \mu \in \bDx \) for some \( x \in \Theta \). From the assumption \eqref{eq:potential_separation}, it follows for all \( y \in \Theta \) satisfying \( y \not \sim_\pi x \) that 
\begin{equation*}
|\mu - (\lambda V(\by) +  U(\by))| \geq |(\lambda V(\bx)+ U(\bx)) -( \lambda V(\by)+ U(\by)) | - 2Nd \geq \eta- 2Nd. 
\end{equation*}
We obtain for all \( \by \in \Theta \) satisfying \( \by \not \sim_\pi \bx \) that 
\begin{equation}\label{eq:initial_scale_single_step}
\begin{aligned}
|\varphi (\by)|&= |\langle \delta_{\by}, \varphi \rangle| \\
&= | ( \mu - \lambda V(\by) -  U(\by))|^{-1} |\langle (H_\Theta - \lambda V(\by) -  U(\by)) \delta_\by , \varphi \rangle | \\
& \leq (\eta-2Nd)^{-1} \sum_{ \substack{\bz  \in \Theta\colon \\ \| \bz - \by\|_1 = 1  }} |\varphi(\bz)|  \\
&\leq \frac{2Nd}{\eta-2Nd} \max_{\substack{\bz \in \Theta\colon \\ \| \bz - \by \|_1 =1 }} |\varphi(\bz)|. 
\end{aligned}
\end{equation}
We obtain \eqref{eq:initial_scale_decay} by iterating \eqref{eq:initial_scale_single_step} until we reach the orbit \( S_N \bx = \{ \pi \bx \colon \pi \in S_N \}\). 
\end{proof}

Having established Lemma~\ref{lemma:initial_scale_analytic}, all that remains is establishing the aforementioned probabilistic bounds, which we turn to now.

\begin{proof}[Proof of Proposition \ref{prop:initial_scale}:] 
Let \( \eta := (1+e^m) 2Nd> 4Nd \). Using Lemma \ref{lemma:initial_scale_analytic}, it remains to prove that the separation condition \eqref{eq:potential_separation} is satisfied with probability at least \( 1- \delta\). We recall from \eqref{eq:number_operator} that the number operator \( N_u \colon \bZ^{Nd}\rightarrow \mathbb{N} \), where $u \in \bZ^d$, is given by 
\begin{equation}
N_u(\bx) = \# \{ 1 \leq j \leq N \colon x_j = u \}. 
\end{equation}
Using this notation, we can rewrite the random potential \( V(\bx) \) as 
\begin{equation}
V(\bx) = \sum_{u \in \bZ^d} \cV(u) N_u(\bx). 
\end{equation}
For any pair of particle-configurations \( \bx,\by \in \bZ^{Nd} \) satisfying \( x \not \sim_\pi \by \), it holds that \( N_w(\bx) \neq N_w(\by) \) for some \( w \in \bZ^d \). After splitting
\begin{equation*}
\lambda (V(\bx)-V(\by)) +  (U(\bx)-U(\by)) =: \lambda \cV(w) (N_w(\bx)-N_w(\by)) + A,
\end{equation*}
this yields
\begin{align*}
&\bP\Big( |(\lambda V(\bx)+ U(\bx)) -( \lambda V(\by)+ U(\by)) | < \eta \Big) \\
&= \bE \Big[ \bP\Big( |\lambda \cV(w) (N_w(\bx)-N_w(\by)) + A | < \eta \Big| \{ V(u)\}_{u\neq w} \Big) \Big] \\
&\leq \frac{\| \rho\|_\infty \eta}{\lambda (N_w(\bx)-N_w(\by))} \\ 
&\leq \lambda^{-1} \| \rho\|_\infty \eta. 
\end{align*}
By using a union bound, we obtain that
\begin{align*}
&\bP\Big( \exists \bx, \by \in \Lambda_\ell(\mathbf{a})\colon  \bx \not \sim_\pi \by ~ \text{and} ~  |(\lambda V(\bx)+ U(\bx)) -( \lambda V(\by)+ U(\by)) | < \eta \Big) \\
&\leq \lambda^{-1} \| \rho \|_\infty \eta~  (\# \Lambda_\ell(\mathbf{a}))^2 \\
&\leq  \lambda^{-1} 2Nd\| \rho\|_{\infty} (1+e^m)~ (N!)^2(2\ell+1)^{2Nd}. 
\end{align*}
Due to the assumption on \( \lambda\), this yields the desired estimate. 
\end{proof}

\section{Inductive Step}\label{section:inductive}

A complication of the current work compared to the single particle setting of \cite{EK16} is that in addition to inducting on the scales, we need to induct on the particle number. We first address the latter problem, with an approach which should be compared to \cite{CS09b}. We will use the following proposition to treat the case of partially interactive boxes, which will handle one of the scenarios we need to address in the proof of Theorem \ref{thm:ind_scale} below.

\begin{proposition}[Induction step in the particle number \( n \)]\label{proposition:induction_on_n}
Let \( N \geq 2 \), let \( (p(n))_{n=1,\hdots,N-1} \subseteq [1,\infty) \), let \( L \geq 2 \), and assume for all \( 1 \leq n \leq N-1 \) that
\begin{equation}
\inf_{\bx_0 \in \bZ^{nd}} \bP\Big( \Lambda_L^{(n)}(\bx_0) \text{ is } m\text{-localizing}\Big) \geq 1 - L^{-p(n)}. 
\end{equation}
If \( \Lambda_L^{(N)}(\bx) \) is partially interactive, we have that 
\begin{equation*}
 \bP\Big( \Lambda_L^{(N)}(\bx) \text{ is } m\text{-localizing}\Big) \geq 1 - L^{-\widetilde{p}(N)}, \qquad \text{where} \qquad \widetilde{p}(N):= \min_{1\leq n \leq N-1} p(n) - \frac{\log(2)}{\log(L)}. 
\end{equation*}
\end{proposition}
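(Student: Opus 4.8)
The plan is to exploit the definition of a partially interactive cube: if $\Lambda_L^{(N)}(\bx)$ is partially interactive, there exist $1 \leq N_1, N_2 < N$ with $N_1 + N_2 = N$ and disjoint sets $\cS_1, \cS_2 \subseteq \bZ^d$ with $\dist(\cS_1,\cS_2) \geq C_\cU$, such that every configuration $\by \in \Lambda_L^{(N)}(\bx)$ has exactly $N_1$ particles in $\cS_1$ and exactly $N_2$ in $\cS_2$. This means the cube factorizes: writing $\by = (\by^{(1)}, \by^{(2)})$ for the $N_1$ particles in $\cS_1$ and the $N_2$ in $\cS_2$ (after an appropriate permutation), the configuration space $\Lambda_L^{(N)}(\bx)$ is essentially a product $\Lambda_L^{(N_1)}(\bx^{(1)}) \times \Lambda_L^{(N_2)}(\bx^{(2)})$, up to symmetrization. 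Crucially, since $\dist(\cS_1,\cS_2) \geq C_\cU$, the interaction potential $U$ does not couple the two groups, so $H_{\Lambda_L^{(N)}(\bx)}$ splits as a tensor-type sum $H^{(1)}_{\Lambda_L(\bx^{(1)})} \otimes I + I \otimes H^{(2)}_{\Lambda_L(\bx^{(2)})}$, where $H^{(i)}$ is the $N_i$-particle Hamiltonian.

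First I would make this decomposition precise, being careful about the symmetrization: the permutation group $S_N$ does not act as $S_{N_1} \times S_{N_2}$ in general, but the relevant eigenbasis and the $d_S$-distance interact well enough because the separation $\dist(\cS_1,\cS_2) \geq C_\cU$ forces any permutation realizing $d_S(\by,\bz)$ between configurations in the cube to preserve the partition into the two groups (up to the internal symmetries). Next, I would build the eigenbasis of $H_{\Lambda_L^{(N)}(\bx)}$ as products $\varphi_\theta = \varphi^{(1)}_{\theta_1} \otimes \varphi^{(2)}_{\theta_2}$ with eigenvalues $\theta = \theta_1 + \theta_2$; given an orthonormal eigenbasis of each factor, the products form an orthonormal eigenbasis of the whole. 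Then I would invoke the induction hypothesis: on the event that both $\Lambda_L^{(N_1)}(\bx^{(1)})$ and $\Lambda_L^{(N_2)}(\bx^{(2)})$ are $m$-localizing (using $N_i \leq N-1$), each factor eigenfunction $\varphi^{(i)}_{\theta_i}$ is $m$-localizing around some center $\bx^{(i)}_{\theta_i}$, hence the product decays like $\exp(-m\, d_S(\cdot, \bx_{\theta_1}^{(1)})) \exp(-m\, d_S(\cdot, \bx_{\theta_2}^{(2)}))$ in each block, which I would check implies $(\bx_\theta, m)$-localization of $\varphi_\theta$ for $\bx_\theta = (\bx_{\theta_1}^{(1)}, \bx_{\theta_2}^{(2)})$ with respect to the full $d_S$; the point is that $d_S$ on the product (again using the $C_\cU$-separation of $\cS_1,\cS_2$) is controlled by the sum of the two block distances, and the $L^\tau$ threshold in Definition~\ref{definition:localizing} is respected since $d_S(\by,\bx_\theta) \geq L^\tau$ forces one of the two block distances to be at least $L^\tau/2$, which is still large enough that at least one exponential factor is below the localization bound — here I would need to be slightly careful and may have to lose a factor of $2$ in $m$, or observe that on the block that is not yet past its threshold the eigenfunction is still bounded by $1$, so only one good factor is needed. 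The probability estimate then follows from a union bound over the (at most one) choice forced by the partition:
\begin{equation*}
\bP\big(\Lambda_L^{(N)}(\bx) \text{ not } m\text{-localizing}\big) \leq L^{-p(N_1)} + L^{-p(N_2)} \leq 2 L^{-\min_{1\leq n\leq N-1} p(n)} = L^{-\widetilde p(N)}.
\end{equation*}

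The main obstacle I anticipate is handling the symmetrization correctly: the partition into $\cS_1, \cS_2$ is a property of the cube, not of $\bZ^{Nd}$, so one must verify that the product eigenbasis constructed on $\Lambda_L^{(N_1)} \times \Lambda_L^{(N_2)}$ genuinely descends to a valid orthonormal eigenbasis of $H$ restricted to the symmetrized cube $\Lambda_L^{(N)}(\bx)$, and that the localization centers $\bx_\theta$ lie in $\Lambda_L^{(N)}(\bx)$ and that $d_S$ distances are compared correctly across the symmetrization. A secondary subtlety is the precise bookkeeping of the $L^\tau$ threshold in the definition of $m$-localizing under the product structure — ensuring that the "interior region past threshold" in the product is covered by the union of the factor thresholds. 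Both of these are geometric rather than probabilistic, consistent with the remark after the proposition that the probabilistic content is just a union bound.
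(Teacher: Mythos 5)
Your proposal follows essentially the same route as the paper: factor the partially interactive cube's Hamiltonian into a tensor sum of an $N_1$-particle and an $N_2$-particle Hamiltonian (using that $\dist(\cS_1,\cS_2)\geq C_\cU$ kills both the inter-block interaction and, within the cube, the inter-block hopping), build the eigensystem from tensor products of factor eigensystems, and close with a union bound and the induction hypothesis for $N_1,N_2<N$. The one place where the paper is more careful than your sketch is the symmetrization, which it makes precise by decomposing $\ell^2(\Lambda_L^{(N)}(\bx))$ into the orthogonal direct sum of $H$-invariant sectors $\ell^2_{\cJ}$ labeled by which $N_1$ particle indices sit in $\cS_1$, observing that all sectors are conjugate under $S_N$, and thereby reducing the whole analysis to the canonical sector $\cJ_0$ where the tensor product structure is exact — this is exactly what your phrase ``up to symmetrization'' needs to mean. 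Your worry about the $L^\tau$ threshold resolves the way you suspected, with no loss in $m$: since $d_S(\by,\bx_\theta)\leq\max(d_{S_1},d_{S_2})$, the block with the larger symmetrized distance is past its own $L^\tau$ threshold and supplies $\exp(-m\max(d_{S_1},d_{S_2}))\leq\exp(-m\,d_S(\by,\bx_\theta))$, while the other factor is bounded by $1$ from $\|\varphi^{(i)}\|_{\ell^2}=1$ — the paper does not spell this out but it is the intended argument.
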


\begin{remark}\label{rem:exp}
In applications of this proposition, we will choose the disorder parameter $\lambda$ in Proposition \ref{prop:initial_scale} large enough to ensure that $p(n)$, and hence $\tilde{p}(N)$, is sufficiently large.
\end{remark}

\begin{remark}
The idea behind the proof is that a partially interacting \( N \)-particle cube can be split into non-interacting cubes with fewer particles. The \( N \)-particle eigensystem can then be obtained as a tensor product of the eigensystems with fewer particles. Unfortunately, the symmetrization of the cube makes the implementation of this idea slightly cumbersome. 
\end{remark}

\begin{proof}
Let \( N_1, N_2, \cS_1 \), and \( \cS_2 \) be as in Definition \ref{definition:partially_interacting}. Using the permutation invariance of the Hamiltonian \( H^{(N)} \) and the cube \( \Lambda_L^{(N)}(\bx) \), we may assume that \( x_1,\hdots,x_{N_1} \in \cS_1 \) and \( x_{N_1+1},\hdots,x_{N} \in \cS_2 \). For any \( \cJ \subseteq \{ 1, \hdots , N\} \) with \( \# \cJ = N_1 \), we define
\begin{equation}\label{eq:spatial_tensor_product}
(1_{\cS_1} \otimes_\cJ 1_{\cS_2} ) (\by) := \prod_{j\in \cJ} 1_{\cS_1}(y_j) \cdot \prod_{j\in \cJ^c} 1_{\cS_2}(y_j). 
\end{equation}
We note that different sets \( \cJ \) lead to disjoint supports in \eqref{eq:spatial_tensor_product}. Since $\Lambda_L^{(N)}(\bx)$ is partially interactive, it follows from Definition \ref{definition:partially_interacting} that 
\begin{equation*}
1= \sum_{\cJ} (1_{\cS_1} \otimes_\cJ 1_{\cS_2} ) (\by). 
\end{equation*}
This leads to the orthogonal decomposition
\begin{equation}
\ell^2(\Lambda_L(\bx)) = \bigoplus_{\cJ} \ell^2_\cJ(\Lambda_L(\bx)) := \bigoplus_{\cJ} \Big\{  \varphi \in \ell^2(\Lambda_L(\bx))\colon (1_{\cS_1} \otimes_\cJ 1_{\cS_2} ) \varphi = \varphi \Big \}.  
\end{equation}
The subspace \( \ell_\cJ^2(\boxx) \) contains wave-functions \( \varphi \) supported on particle configurations \( \by \in \boxx \) with \( y_j \in \cS_1 \) for all \( j \in \cJ \) and \( y_j \in \cS_2 \) for all \( j \in \cJ^c \). The operator \( H^{(N)}_{\boxx} \) leaves each subspace \(  \ell^2_\cJ(\Lambda_L(\bx)) \) invariant and we can decompose
\begin{equation}\label{eq:decomposition_HN}
H^{(N)}_{\boxx} = \bigoplus_{\cJ} H^{(\cJ)}_{\boxx}, \quad \text{where} \quad H^{(\cJ)}_{\boxx} = H^{(N)}_{\boxx} \Big|_{\ell_\cJ^2(\boxx)}. 
\end{equation}
The eigensystems of \( H^{(N)}_{\boxx} \) can then be obtained as a union of the eigensystems of \( H^{(\cJ)}_{\boxx} \). 

Let now \( \cJ_0:= \{ 1, \hdots , N_1\} \), and for each \( \cJ \), we fix a permutation \( \pi_{\cJ} \in S_N \) satisfying \( \pi_\cJ(\cJ) = \cJ_0 \). With this notation, we have that 
\begin{equation*}
\pi_\cJ \ell^2_\cJ(\boxx) = \ell^2_{\cJ_0}(\boxx) \qquad \text{and} \qquad \pi_\cJ \circ H^{(\cJ)}_{\boxx} \circ \pi_\cJ = H^{(\cJ_0)}_{\boxx}.
\end{equation*}
Thus, 
\[
 H^{(\cJ)}_{\boxx}  \textup{ is \(m\)-localizing} \quad \Longleftrightarrow \quad   H^{(\cJ_0)}_{\boxx} \textup{ is \( m \)-localizing}
 \]
Together with the decomposition \eqref{eq:decomposition_HN}, it follows that \( H^{(N)}_{\boxx} \) is \(m\)-localizing if and only if \( H^{(\cJ_0)}_{\boxx} \) is localizing. Consequently, recalling the definition of \( m\)-localizing for operators and boxes, see Definition \ref{definition:localizing}, it follows that
\begin{align}\label{eq:probabilistic_reduction_1}
 \bP\Big( \Lambda_L^{(N)}(\bx) \text{ is } m\text{-localizing}\Big)  &=  \bP\Big( H^{(N)}_{\boxx} \text{ is } m\text{-localizing}\Big)  \\
 &\geq  \bP\Big( H^{(\cJ_0)}_{\boxx} \text{ is } m\text{-localizing}\Big).  \nonumber
\end{align}
We will now establish bounds on the latter probability. Since \( \cJ_0 = \{ 1,\hdots,N_1\} \), we have from the Definition \ref{definition:partially_interacting} that 
\begin{equation*}
\ell_{\cJ_0}^2(\Lambda^{(N)}_{L}(\bx)) = \ell^2(\Lambda^{(N_1)}_L(\bx_{\cJ_0}) ) \otimes \ell^2(\Lambda^{(N_2)}_L(\bx_{\cJ_0^c}) )
\end{equation*}
and 
\begin{equation*}
H^{(\cJ_0)}_{\Lambda_L^{(N)}(\bx)} = \Big( H^{(N_1)}_{\Lambda_L^{(N_1)}(\bx_{\cJ_0})} \otimes \operatorname{Id}\Big) + \Big( \operatorname{Id} \otimes  H^{(N_2)}_{\Lambda_L^{(N_2)}(\bx_{\cJ_0^c})} \Big). 
\end{equation*}
Since the eigensystem for \( H^{(\cJ_0)}_{\Lambda_L^{(N)}(\bx)} \) can be written as a tensor product of the eigensystems of the two operators on the right, we can estimate
\begin{equation}\label{eq:probabilistic_reduction_2}
 \bP\Big( H^{(\cJ_0)}_{\boxx} \text{ is } m\text{-localizing}\Big)  \geq \bP\Big( H^{(N_1)}_{\Lambda_L^{(N_1)}(\bx_{\cJ_0})} ~~ \text{and} ~~ H^{(N_2)}_{\Lambda_L^{(N_2)}(\bx_{\cJ_0^c})} ~ \text{are } m\text{-localizing}\Big).
\end{equation}
Thus by combining \eqref{eq:probabilistic_reduction_1}, \eqref{eq:probabilistic_reduction_2}, and the induction hypothesis, we obtain that
\begin{equation*}
 \bP\Big( \Lambda_L^{(N)}(\bx) \text{ is } m\text{-localizing}\Big) \geq 1- L^{-p(N_1)}- L^{-p(N_2)},
\end{equation*}
which yields the desired estimate. 
\end{proof}

The following is the main inductive step, and constitutes the bulk of the work of this section.

\begin{theorem}[Induction step in the scale \( \ell \)] \label{thm:ind_scale}
Let $d=1$ and let \( (p(n))_{n=1,\hdots,N}\subseteq [1,\infty) \) be a decreasing sequence satisfying
\begin{align}\label{equ:pN_cond}
\Big( \frac{2}{\gamma}-1\Big)^{-1} (4Nd + 2)\leq p(N) \quad \text{and} \quad p(n) \leq \frac{1}{\gamma} \big( p(n-1) - 1 \big) - 2nd- 2 ~~ \text{ for all } 2\leq n \leq N. 
\end{align}
Let
\[
\ell_0 := \ell_0(N, (p(n))_{n=1,\hdots,N} ) = C(\rho,m,\tau,\gamma,\beta,d) \max_{n=1,\hdots,N} \Big( n  + p(n) \Big)^{\frac{1}{\gamma\beta}+\frac{2}{1-\tau \gamma}},
\]
and suppose that for some scale $\ell \geq \ell_0$ and all \( 1 \leq n \leq N \) that
\begin{align}\label{equ:base}
\inf_{\bx_0 \in \bZ^{nd}} \mathbb{P} \{\Lambda_{\ell}^{(n)}(\bx_0)\textup{ is $m$-localizing for $H$}\} \geq 1 - \ell^{-p(n)}  ,
\end{align}
Then, we have for $L = \ell^\gamma$ and all $1\leq n \leq N$ that 
\begin{align}\label{equ:induct}
\inf_{\bx_0 \in \bZ^{nd}} \mathbb{P} \{\Lambda_{L}^{(n)}(\bx_0)\textup{ is $M$-localizing for $H$}\} \geq 1 - L^{-p(n)},
\end{align}
where $M :=  m ( 1- 3\ell^{- \frac{1-\tau}{2}}) (1-250 N^2 \ell^{1-\tau\gamma})$. 
\end{theorem}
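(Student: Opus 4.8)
The plan is to prove \eqref{equ:induct} by induction on the particle number $n$, $1\le n\le N$. Fix $n$ and a center $\bx_0$, and split according to whether $\Lambda_L^{(n)}(\bx_0)$ is partially or fully interactive (Definition~\ref{definition:partially_interacting}). If it is partially interactive (vacuous for $n=1$), I would apply Proposition~\ref{proposition:induction_on_n} with $M$ in place of $m$: by the induction hypothesis on $n$, $\Lambda_L^{(n')}(\cdot)$ is $M$-localizing with probability at least $1-L^{-p(n')}$ for every $n'<n$, so Proposition~\ref{proposition:induction_on_n} yields that $\Lambda_L^{(n)}(\bx_0)$ is $M$-localizing with probability at least $1-L^{-\widetilde p(n)}$, where $\widetilde p(n)=p(n-1)-\log2/\log L$ (using that $(p(n))$ is decreasing); the second inequality in \eqref{equ:pN_cond}, together with $L\ge \ell_0\ge 2$, then forces $\widetilde p(n)\ge p(n)$. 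Hence it remains to treat a fully interactive cube $\Lambda_L=\Lambda_L^{(n)}(\bx_0)$ -- this is essentially the single-particle analysis of \cite{EK16,EK19}, and it is where all the multi-particle subtleties occur.

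\textbf{A good event for the fully interactive cube.} Consider the cover $C_{L,\ell}=C_{L,\ell}^{(n)}(\bx_0)$ with centers $\Xi_{L,\ell}$ (Definition~\ref{definition:cover}, Lemma~\ref{lemma:cover}), and call $\ba\in\Xi_{L,\ell}$ \emph{bad} if $\Lambda_\ell^{(n)}(\ba)$ is not $m$-localizing. I would work on the event $\cE$ that: (a) the family $\{\Lambda_\ell^{(n)}(\ba)\}_{\ba\in\Xi_{L,\ell}}$ is spectrally $L$-separated (Definition~\ref{definition:spectral_separation}); (b) no bad $\ba$ has $\Lambda_\ell^{(n)}(\ba)$ partially interactive; (c) there are no two bad $\ba,\ba'$ with $\Lambda_\ell^{(n)}(\ba),\Lambda_\ell^{(n)}(\ba')$ both fully interactive and $d_S(\ba,\ba')\ge 8n\ell$; and (d) a Wegner-type non-resonance estimate against $H_{\Lambda_L}$ holds for every cube of the form $\Lambda_R(\bb)\medcap\Lambda_L$ with $\bb\in\Lambda_L$ and $\ell\le R\le CN\ell$. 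On $\cE$, fix an eigenvalue $\mu$ of $H_{\Lambda_L}$. By (a) and the appropriate choice of constants, the $\mu$-resonant cubes (those with $\dist(\mu,\sigma(H_{\Lambda_\ell(\ba)}))<\tfrac12 e^{-L^\beta}$) all lie within $d_S$-distance $O(n\ell)$ of each other, and by (b) and (c) the bad cubes all lie in a single $d_S$-ball of radius $O(n\ell)\ll L^\tau\ll L$ (using $\tau\gamma>1$ and $\ell\ge\ell_0$). Enclosing the bad and $\mu$-resonant cubes in a cube $\Lambda_R(\bb)\medcap\Lambda_L$ with $R=O(n\ell)$, and declaring $\cG_\Upsilon$ to be the centers of the $\ell$-cubes of $C_{L,\ell}$ meeting its inner boundary -- all of which are, by construction, $m$-localizing and $\mu$-non-resonant -- produces a buffered cube $\Upsilon_\mu\subsetneq\Lambda_L$ (Definition~\ref{definition:buffered}) for which, by (d), $\mu$ is also non-resonant with $H_{\Upsilon_\mu}$; if the bad cluster and the $\mu$-resonant cluster are far apart, one simply uses two such buffered cubes.

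\textbf{Probability of the complement.} I would bound $\bP(\cE^c)$ by four contributions. For (a): Proposition~\ref{prop:wegner} with $L=\ell^\gamma$ and $\#\cJ\le (2L+1)^{nd}$ gives a bound of the form $C(\rho)^{nd}(2\ell^\gamma+1)^{2nd}n^{12nd}\ell^{6nd}e^{-\frac12\ell^{\gamma\beta}}$, which is $\le\tfrac14 L^{-p(n)}$ for $\ell\ge\ell_0$ since the exponential beats any polynomial in $\ell$. For (b): a partially interactive $\Lambda_\ell^{(n)}(\ba)$ fails to be $m$-localizing only if one of the two lower-particle cubes in its tensor decomposition (cf.\ the proof of Proposition~\ref{proposition:induction_on_n}) fails to be $m$-localizing, an event of probability at most $\ell^{-p(n-1)}$ by \eqref{equ:base} and monotonicity; there are at most $n(2L+1)^{(n-1)d}$ such lower cubes, and since the second inequality in \eqref{equ:pN_cond} gives $p(n-1)-\gamma p(n)\ge 1+2\gamma nd$, the resulting bound is $\le\tfrac14 L^{-p(n)}$. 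For (c): when $\Lambda_\ell^{(n)}(\ba),\Lambda_\ell^{(n)}(\ba')$ are fully interactive and $d_S(\ba,\ba')\ge 8n\ell$, Lemma~\ref{lemma:fully_interactive} gives that $\Pi\Lambda_\ell^{(n)}(\ba)$ and $\Pi\Lambda_\ell^{(n)}(\ba')$ are disjoint, so $H_{\Lambda_\ell(\ba)}$ and $H_{\Lambda_\ell(\ba')}$ are independent and the probability both are bad is $\le\ell^{-2p(n)}$; a union bound over pairs gives $\le(2L+1)^{2nd}\ell^{-2p(n)}$, which is $\le\tfrac14 L^{-p(n)}$ because the first inequality in \eqref{equ:pN_cond} and monotonicity give $(2-\gamma)p(n)\ge\gamma(4nd+2)$. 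For (d): freezing the potential on $\Pi\Upsilon$, the Hamiltonian $H_{\Lambda_L}$ retains enough independent randomness that a conditional Wegner estimate as in Section~2 bounds the relevant resonance probability by $\mathrm{poly}(L)\,e^{-L^\beta}$; summing over the $\mathrm{poly}(L)$ candidate cubes keeps it super-polynomially small. Adding these up gives $\bP(\cE^c)\le L^{-p(n)}$.

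\textbf{Localization on $\cE$, and the main obstacle.} On $\cE$, fix an eigenpair $(\psi,\mu)$ of $H_{\Lambda_L}$ and let $\bx$ be a point where $|\psi|$ is maximal. Every $\ell$-cube in $C_{L,\ell}$ that is neither bad nor $\mu$-resonant is $m$-localizing and non-resonant with $\mu$, so Lemma~\ref{lemma:local_good_cube} applies to it; the bad and $\mu$-resonant cubes are contained in $\Upsilon_\mu$, across which Lemma~\ref{lemma:buffered} controls $\psi$ by its values on the surrounding good cubes with prefactor $\le\tfrac12$. Following the iteration of Elgart--Klein \cite{EK16,EK19} -- applying Lemma~\ref{lemma:local_good_cube} along a $d_S$-geodesic from $\by$ to $\bx$, using Lemma~\ref{lemma:cover} so that every point lies in the $\ell$-interior of some cube of the cover, and ``jumping over'' $\Upsilon_\mu$ via Lemma~\ref{lemma:buffered} -- I would conclude $|\psi(\by)|\le e^{-M d_S(\by,\bx)}$ for all $\by$ with $d_S(\by,\bx)\ge L^\tau$, so that $\psi$ is $(\bx,M)$-localizing and $\Lambda_L$ is $M$-localizing. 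Here the rate $M=m(1-3\ell^{-(1-\tau)/2})(1-250N^2\ell^{1-\tau\gamma})$ reflects the degradation $m\mapsto m'=m(1-3\ell^{-(1-\tau)/2})$ in Lemmas~\ref{lemma:local_good_cube}--\ref{lemma:buffered} together with the further loss coming from the $O(n\ell)$-wide buffered regions along a geodesic whose length exceeds $L^\tau=\ell^{\tau\gamma}$. The hard parts -- and the genuine difference from the single-particle setting -- are, first, the combinatorics of confining all the bad cubes (item (c)) in the presence of probabilistic dependencies, which is exactly why partially and fully interactive sub-cubes must be handled separately and why Lemma~\ref{lemma:fully_interactive} and the multi-particle Wegner estimate (Proposition~\ref{prop:wegner}) are needed, and second, verifying the non-resonance hypotheses of Lemma~\ref{lemma:buffered} for the adaptively chosen buffered cube $\Upsilon_\mu$ (item (d)).
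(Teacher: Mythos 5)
Your conditions (a)--(c) and the probability bounds for them, the use of Lemma~\ref{lemma:fully_interactive} for independence of fully interactive $\ell$-cubes, the top-level reduction via Proposition~\ref{proposition:induction_on_n} (which the paper instead applies only to $\ell$-cubes in the cover), and the final iteration scheme are all in the spirit of the paper's proof. The genuine gap is condition (d) and the subsequent claim that $\mu$ is non-resonant with $H_{\Upsilon_\mu}$.

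First, a ``Wegner-type non-resonance estimate against $H_{\Lambda_L}$'' for sub-cubes $\Lambda_R(\bb)\medcap\Lambda_L\subset\Lambda_L$ is not a valid probabilistic statement. Proposition~\ref{prop:wegner} (and the weak separability machinery of Lemma~\ref{lemma:distance_weakly_separable}) applies to \emph{pairs of distant} sets, precisely because distance ensures a region $\cS$ on which the particle counts differ, which in turn ensures a sample mean one can vary. When $\Theta_1\subset\Theta_2$, no such $\cS$ exists, and indeed the conditional Wegner argument cannot get off the ground: every random variable $\cV(u)$, $u\in\Pi\Upsilon_\mu$, that enters $H_{\Upsilon_\mu}$ also enters $H_{\Lambda_L}$, so freezing $\cV$ on $\Pi\Upsilon_\mu$ does not make $\sigma(H_{\Upsilon_\mu})$ deterministic relative to the remaining randomness in a usable way, nor is there any mechanism forcing the two spectra apart with high probability. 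Second, even setting aside the probability estimate, the conclusion you want from (d) is typically false by construction: you built $\Upsilon_\mu$ to \emph{contain} the $\mu$-resonant $\ell$-cubes, so if $\mu$ is $\tfrac12 e^{-L^\beta}$-close to an eigenvalue $\nu$ of $H_{\Lambda_\ell(\ba)}$ whose eigenfunction $\varphi_\nu$ is localized well inside $\Upsilon_\mu$, then by the geometric decomposition \eqref{eq:geometric_decomposition} the vector $\varphi_\nu$ is an approximate eigenvector of $H_{\Upsilon_\mu}$ with approximate eigenvalue $\nu$, so $\mu$ is also close to $\sigma(H_{\Upsilon_\mu})$. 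There is no reason to expect non-resonance of $\mu$ with $H_{\Upsilon_\mu}$, and in the generic case it fails.

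The paper sidesteps both problems by never requiring non-resonance between $H_{\Lambda_L}$ and any sub-cube, and by never buffering the $\mu$-resonant cubes. Its non-resonance event $\cE_{\NR}$ demands pairwise spectral separation between \emph{distant} $\Theta_1,\Theta_2$, each of which is allowed to be either an $\ell$-cube $\Lambda_\ell(\ba_i)$ or a buffered-type cube $\Lambda_{10N\ell}(\ba_i)\medcap\Lambda_L$ (with $d_S(\ba_1,\ba_2)\geq 200N^2\ell$), so Proposition~\ref{prop:wegner} applies; and the buffered cube $\Upsilon$ is placed around the \emph{bad} center $\bb$ only. Then Step~4 of the paper's proof shows \emph{analytically}, by contradiction (if $\mu$ were non-resonant with every $\Theta\in\Cgb$, iterating Lemmas~\ref{lemma:local_good_cube}--\ref{lemma:buffered} would force $\|\psi\|_{\ell^2}<1$), that $\mu$ \emph{is} resonant with some $\Theta_\mu\in\Cgb$. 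The localization center is then chosen $\bx_\mu\in\Theta_\mu$, and the iteration in Step~5 only visits $\Theta$ with $d_S(\Theta,\Theta_\mu)\geq 160N^2\ell$; for those, non-resonance follows from the pairwise spectral separation in $\cE_{\NR}$. In particular, $\dist(\mu,\sigma(H_\Upsilon))\geq\tfrac12 e^{-L^\beta}$ is needed only when $d_S(\Upsilon,\Theta_\mu)$ is large, and that is exactly what $\cE_{\NR}$ provides; when $\Theta_\mu$ is $\Upsilon$ itself (or near it), the iteration terminates before reaching $\Upsilon$. Your choice of $\bx$ as a maximizer of $|\psi|$ is also slightly misplaced --- the stopping rule for the iteration is naturally phrased in terms of distance to $\Theta_\mu$, not to the maximizer, and establishing that the maximizer lies near $\Theta_\mu$ would already presuppose the decay you are trying to prove. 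To repair your argument, enlarge (a) to include spectral separation of the buffered-type cubes (as in $\cE_{\NR}$), drop (d), drop the two-buffer construction, and run the iteration from $\bx_\mu\in\Theta_\mu$ as in the paper.
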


\begin{remark}
We emphasize that in \eqref{equ:base}, $n$ is allowed to be $N$ because we are currently inducting on the length scale and not the particle number. We also mention that the constant $C=C(\rho,m,\tau,\gamma,\beta,d)$ can be chosen as decreasing in $m$.
\end{remark}

\begin{proof}
We only prove \eqref{equ:induct} for $n=N$. After minor notational changes, the same argument also applies for $1\leq n < N$.

Suppose \eqref{equ:base} holds for some scale $\ell$. Fix $\bx_0 \in \bZ^{Nd}$ and consider $\Lambda_L(\bx_0)$ for $L$ as in the statement of the theorem. We will prove that \eqref{equ:induct} holds in five steps. Throughout, we let \( C_{L,\ell} \) be an \(\ell\)-cover of \( \Lambda_L(\bx_0) \) as in Definition \ref{definition:cover} and let $\Xi_{L,\ell}$ be the corresponding centers.

\subsection*{Step 1: Definition of the good event \( \cE \)}  The good event \( \cE \) will lead to localization on \( \Lambda_L \) and will satisfy the probabilistic bound \( \bP(\cE)\geq 1 - L^{-p(N)} \) for some $p(N)$. We define \( \cE:= \cE_{\text{PI}} \medcap \cE_{\FI} \medcap \cE_{\NR} \), where the three events contain conditions on partially interactive cubes, conditions on fully interactive cubes, and a non-resonance condition, respectively.

We first define the good event for partially interactive cubes by 
\begin{equation}
\mathcal{E}_{\text{PI}} := \Big\{ \text{all partially interactive cubes in } C_{L,\ell} \text{ are } m\text{-localizing} \Big\}. 
\end{equation} 
Second, we define the good event for fully interactive cubes by 
\begin{equation}
\begin{aligned}
\cE_{\FI} :=& \Big \{ \text{For all } \ba_1,\ba_2 \in \Xi_{L,\ell} \text{ satisfying } d_S(\ba_1,\ba_2) \geq 8N\ell \text{ and such that } \Lambda_\ell(\ba_1) \text{ and } \Lambda_\ell(\ba_2)\\
& \text{ are fully interactive, either } \Lambda_\ell(\ba_1) \text{ or } \Lambda_\ell(\ba_2) \text{ is m-localizing}\Big\}. 
\end{aligned}
\end{equation}
Finally, we define the good event regarding non-resonances by 
\begin{equation}
\begin{aligned}
\cE_{\NR} := \Big\{& \text{For all } \ba_1,\ba_2 \in \Xi_{L,\ell} \text{ satisfying } d_S(\ba_1,\ba_2) \geq 200 N^2 \ell, \\
&\Theta_1= \Lambda_\ell(\ba_1)  \text{ or } \Theta_1=\Lambda_{10N\ell}(\ba_1) \medcap \Lambda_L, \text{and }\Theta_2= \Lambda_\ell(\ba_2)  \text{ or } \Theta_2=\Lambda_{10N\ell}(\ba_2) \medcap \Lambda_L, \\
&\text{it holds that } \dist( \sigma(H_{\Theta_1}),\sigma(H_{\Theta_2})) \geq 1/2 \cdot e^{-L^{-\beta}} \Big\}. 
\end{aligned}
\end{equation}
\subsection*{Step 2: Estimate of the probability of the bad set} ~
In this step, we separately estimate the probabilities of \( \cE_{\PI},\cE_{\FI}, \) and \( \cE_{\NR} \).

We first estimate the probability of \( \cE_{\PI} \). We mention that in the single-particle setting, i.e., $N=1$, there are no partially interactive cubes, and hence $\bP(\cE_{\PI}^c) =0$ holds trivially.  If $N\geq 2$, it follows from Lemma \ref{lemma:cover} and Proposition \ref{proposition:induction_on_n} that 
\begin{align*}
\bP(\cE_{\PI}^c) \leq (2L+1)^{Nd} \ell^{-\widetilde{p}(N)} \leq L^{2Nd - \frac{\widetilde{p}(N)}{\gamma}} \leq \frac{1}{3} L^{-p(N)},
\end{align*}
provided that
\begin{equation}\label{eq:p_cond1}
\widetilde{p}(N) \geq \gamma (p(N)+2Nd+2). 
\end{equation}

We now estimate the probability of \( \cE_{\FI} \). Let \( \ba_1,\ba_2 \in \Xi_{L,l} \) be as in the definition of \( \cE_{\FI} \), that is, such that \( d_S(\ba_1,\ba_2) \geq 8N\ell \) and such that \( \Lambda_\ell(\ba_1) \) and \( \Lambda_\ell(\ba_2) \) are fully interactive. 
By Lemma \ref{lemma:fully_interactive}, it follows that \( \Pi \Lambda_\ell(\ba_1) \) and \( \Pi \Lambda_\ell(\ba_2) \) are disjoint. As a result, the random operators \( H_{\Lambda_\ell(\ba_1)} \) and \( H_{\Lambda_\ell(\ba_2)} \) are probabilistically independent. From a union bound and the induction hypothesis \eqref{equ:base}, we obtain that 
\begin{align*}
\bP(\cE_{\FI}^c) &\leq \sum_{\substack{\ba_1,\ba_2 \in \Xi_{L,l}\colon \\
\Pi \Lambda_\ell (\ba_1), \Pi \Lambda_\ell(\ba_2) \\ \text{are disjoint} }} \bP\big( \{ \text{Both } \Lambda_\ell(\ba_1) \text{ and } \Lambda_\ell(\ba_2) \text{ are not } m\text{-localizing}\} \big) \\
&\leq \sum_{\substack{\ba_1,\ba_2 \in \Xi_{L,l}}} \bP\big( \{ \Lambda_\ell(\ba_1) \text{ is  not } m\text{-localizing}\}\big) \cdot \bP\big( \{ \Lambda_\ell(\ba_2) \text{ is  not } m\text{-localizing}\}\big) \\
&\leq (\# \Xi_{L,\ell})^2 \ell^{-2p(N)} \leq L^{4Nd - \frac{2}{\gamma} p(N)}  \leq \frac{1}{4} L^{-p(N)},
\end{align*}
provided that 
\begin{equation}\label{eq:p_cond2}
p(N) \geq \Big( \frac{2}{\gamma}-1\Big)^{-1} (4Nd + 2). 
\end{equation}
As mentioned in Remark \ref{rem:exp}, the disorder parameter $\lambda$ can be chosen so as to ensure that we can make $\tilde{p}(N)$ arbitrarily large, 
 and hence $p(N)$ can be guaranteed to satisfy the two conditions \eqref{eq:p_cond1} and \eqref{eq:p_cond2} simultaneously. 

It remains to estimate the probability of \( \cE_{\NR} \). Using the multi-particle Wegner estimate, i.e., Proposition \ref{prop:wegner}, we have that 
\begin{align*}
\bP( \cE_{\NR}^c) \leq C(\rho)^{Nd} (2 \# \Xi_{L,\ell})^{2} N^{12 Nd} \ell^{6Nd} e^{-\frac{1}{2} L^\beta} \leq C(\rho)^{Nd} N^{12Nd} L^{8Nd} e^{-\frac{1}{2} L^\beta} \leq \frac{1}{3} L^{-p(N)},
\end{align*}
provided that 
\begin{equation}\label{eq:prob_cond}
\frac{1}{2} L^{\beta} \geq p(N) \log(L) + Nd \Big( \log(C(\rho)) + 12 \log(N) + 8 \log(L)\Big) + \log(3). 
\end{equation}
The last condition \eqref{eq:prob_cond} follows from our lower bound on $\ell$. 

\subsection*{Step 3: Buffered cubes, good cubes, and spectral separation}
For the rest of this proof, we only work on the good event \( \cE \). There exists a \( \bb \in \Xi_{L,\ell} \) such that for all \( \ba \in \Xi_{L,\ell} \) satisfying \( d_S(\ba,\bb) \geq 8N\ell \), the cube \( \Lambda_\ell(\ba) \) is \( m\)-localizing. We define the set of centers for good cubes by 
\begin{equation}
\cG := \{ \ba \in \Xi_{L,\ell}\colon d_S(\ba,\bb)\geq 8N\ell\}.
\end{equation}
We also define the buffered subset \( \Ups \) and the associated set of good cubes  \( \cG_\Ups \) in the buffer by 
\begin{equation}
 \quad \Ups := \Lambda_{10N\ell}(\bb) \medcap \Lambda_L, \quad \text{and} \quad \cG_{\Ups} = \{ \ba \in \Xi_{L,\ell}\colon 8 N \ell \leq d_S(\ba,\bb) \leq 12 N\ell \}. 
\end{equation}
We now verify that \( \Ups \) and \( \cG_{\Ups} \) satisfy the conditions in Definition \ref{definition:buffered}. The property (i) is immediate. To prove (ii), let \( \by \in \partial^{\Lambda_L}_{\text{in}} \Ups \). By the covering lemma  \ref{lemma:cover}, there exists an \( \ba \in \Xi_{L,\ell} \) such that 
\begin{equation*}
\by \in \Lambda_\ell^{\Lambda_L,\ell} (\ba) \subseteq \Lambda_\ell(\ba). 
\end{equation*}
Since \( d_S(\by,\bb) = 10N\ell \), we obtain that \( 9N\ell \leq d_S(\ba,\bb) \leq 11 N\ell \), which implies \( \ba \in \cG_\Ups\).  \\

From the covering lemma \ref{lemma:cover}, we also obtain that 
\begin{equation}\label{step:eq_covering}
\Lambda_L = \Big( \bigcup_{\ba \in \cG} \Lambda_\ell^{\Lambda_L,\ell}(\ba) \Big) \medcup \Ups. 
\end{equation}
We then define 
\begin{equation}
\Cgb  := \{ \Lambda_\ell(\ba) \}_{a\in \cG} \medcup \{ \Ups \}. 
\end{equation}
We view \( \Cgb \) as a modification of the cover \( C_{L,\ell} \), which retains most (but not necessarily all) of the cubes and adds the buffered cube \( \Ups \). We will refer to elements of \( \Cgb \) by \( \Theta \), which allows us to uniformly treat \( \Lambda_\ell(\ba) \) and \( \Ups \) in some arguments below. From the definition of \( \cE_{\NR} \supseteq \cE \), it follows that \( \Cgb \) is spectrally \( L \)-separated, see Definition \ref{definition:spectral_separation}.
 In contrast to \cite{EK19}, we only require a single buffered cube \( \Ups \). This is an advantage of working with polynomial instead of exponential tails in the probabilistic estimates. 
\subsection*{Step 4: Proximity of the eigenvalues at the scales $L$ and $\ell$}
We let \( \mu \) be an eigenvalue of \( H_{\Lambda_L} \). In this step, we show that there exists a \( \Theta_{\mu} \in \Cgb \) such that 
\begin{equation}
\dist( \mu, \sigma(H_{\Theta_{\mu}}) ) < \frac{1}{2} e^{-L^\beta}. 
\end{equation}
Arguing by contradiction, we assume that \( \dist( \mu,\sigma(H_\Theta)) \geq 1/2 \cdot e^{-L^\beta} \) for all \( \Theta \in \Cgb \). We let \( \psi \) be a normalized eigenfunction of \( H_{\Lambda_L} \) with the eigenvalue \(\mu\). Due to the normalization, we have that \( \| \psi\|_{\ell^\infty} \leq \| \psi\|_{\ell^2} = 1\). We claim that this can be upgraded to 
\begin{equation}\label{step:eq_linf}
\| \psi \|_{\ell^\infty} \leq \exp(-\frac{m'}{2} \ell). 
\end{equation}
Once the claim \eqref{step:eq_linf} has been established, the lower bound on $\ell$ leads to 
\begin{equation*}
 \| \psi\|_{\ell^2}^2 \leq N! (2L+1)^{Nd} \exp(-\frac{m'}{2} \ell) < 1,
 \end{equation*}
which is a contradiction. 

To see the claim \eqref{step:eq_linf}, we let \( \by \in \Lambda_L \) be arbitrary. Using the covering property \eqref{step:eq_covering}, it follows that either \( \by \in \Lambda_\ell^{\Lambda_L,\ell}(\ba) \) for some \( \ba \in \cG \) or that \( \by \in \Ups \).  

If \( \by \in \Lambda_\ell^{\Lambda_L,\ell}(\ba) \), it follows from the lower bound on $\ell$, $\tau>\gamma \beta$, Lemma \ref{lemma:local_good_cube}, and the spectral separation \( d(\mu,\sigma(H_{\Lambda_\ell(\ba)}))\geq 1/2 \cdot e^{-L^\beta} \) that 
\begin{equation*}
|\psi(\by)|\leq \max_{\bv \in \partial^{\Lambda_L}_{\ex} \Lambda_\ell(\ba)} e^{-m^\prime d_S(\bv,\by)} |\psi(v)| \leq e^{-m^\prime \ell/2}
\end{equation*}
If \( \by \in \Ups \), we similarly obtain from \( \dist( \mu,\sigma(H_\Ups)) \geq 1/2\cdot e^{-L^\beta} \), \( \dist(\mu,\sigma(H_{\Lambda_\ell(\ba)})) \geq 1/2 \cdot e^{-L^\beta} \) for all \( \ba \in \cG_{\Ups}\subseteq \cG \), and Lemma \ref{lemma:buffered} that
\begin{equation*}
|\psi(\by)|\leq e^{-\frac{m^\prime}{2} \ell} \max_{\ba\in  \cG_{\Ups}} \max_{\bv \in \partial^{\Lambda_L}_\ex \Lambda_\ell(\ba)} |\psi(\bv)| \leq e^{-\frac{m'}{2} \ell}. 
\end{equation*}
This completes the proof of the claim \eqref{step:eq_linf} and hence this step.

\subsection*{Step 5: $M$-localization.}
Similar as in Step 4, we let \( (\psi_\mu,\mu) \) be a normalized eigenpair of \( H_{\Lambda_L} \). It remains to show that there exists a \( \bx_\mu \in \Lambda_L \) such that \( \psi_\mu \) is \( (\bx_\mu,M)\)-localized, i.e., 
\begin{equation}\label{step:eq_localizing}
|\psi_\mu(\by)|\leq e^{-M d_S(\by,\bx_\mu)}
\end{equation}
for all \( \by \in \Lambda_L \) satisfying \( d_S(\by,\bx_\mu) \geq L^\tau \). 
By Step 4, there exists a region \( \Theta_\mu \in \Cgb \) such that \( \dist( \mu,\sigma(H_{\Theta_\mu})) \leq 1/2 \cdot e^{-L^\beta} \). Since \( \Cgb \) is spectrally \( L \)-separated and each set in $\Cgb$ has a diameter of at most $20N\ell$, it follows that 
\begin{equation}\label{step:spectral_separation}
\dist(\mu,\sigma(H_\Theta)) \geq \frac{1}{2} e^{-L^\beta} \qquad \forall \Theta \in \Cgb \text{ s.t. } d_S(\Theta,\Theta_\mu) \geq 160 N^2 \ell. 
\end{equation}
We choose any particle-configuration \( \bx_\mu \in \Theta_\mu \) as our localization center. Now, let \( \by_0 = \by \in \Lambda_L \) satisfy \( d_S(\by_0,\bx_\mu) \geq 200 N^2 \ell\). Due to the lower bound on $\ell$, this assumption is (much) weaker than the assumption $ d_S(\by_0,\bx_\mu) \geq L^\tau$ for \eqref{step:eq_localizing}. By the covering property \eqref{step:eq_covering}, we have \( \by_0 \in \Lambda_\ell^{\Lambda_L,\ell}(\ba) \) for some \( \ba \in \cG \) or \( \by_0 \in \Ups \). We then set \( \Theta =\Lambda_\ell(\ba) \) or \( \Theta= \Ups \), respectively. Since \( \diam_S(\Theta) \leq 20 N \ell \), it follows that 
\begin{equation*}
d_S(\Theta,\Theta_\mu) \geq d_S(\by_0,\Theta_\mu) - \diam_S(\Theta) \geq 200 N^2 \ell - 20 N \ell \geq 160 N^2 \ell. 
\end{equation*}
Thus, it follows from the spectral separation of \( \Cgb \) that \( \dist(\mu,\sigma(H_\Theta)) \geq 1/2 \cdot e^{-L^\beta} \). Next, we apply the decay estimates for good and buffered cubes. 

If \( \Theta= \Lambda_\ell(\ba) \), and hence \( \by_0 \in \Lambda_\ell^{\Lambda_L,\ell}(\ba) \) it follows from Lemma \ref{lemma:local_good_cube} that 
\begin{equation}
|\psi(\by_0)|\leq e^{-m^\prime d_S(\by_0,\by_1)} |\psi(\by_1)|
\end{equation}
for some \( \by_1 \in \partial_\ex^{\Lambda_L} \Lambda_\ell(\ba) \). We call this scenario a good step. In particular, since \( \by_0 \in \Lambda_\ell^{\Lambda_L,\ell}(\ba) \),  it follows that \( d_S( \by_0,\by_1) \geq \ell \) and hence  \( |\psi(\by_0)|\leq 1/2\cdot|\psi(\by_1)| \).

If \( \Theta= \Ups \), and hence \( \by_0 \in \Ups \),  it follows from Lemma \ref{lemma:buffered} that 
\begin{equation}
|\psi(\by_0)| \leq e^{-\frac{m^\prime \ell}{2}} \max_{\ba \in \cG_\Ups} \max_{\bv \in \partial^{\Lambda_L}_{\ex} \Lambda_\ell(\ba)} |\psi(\bv)| \leq \frac{1}{2} |\psi(\by_1)|, 
\end{equation}
where \( \by_1 \in \widetilde{\Ups} := \{ v \in \Lambda_L \colon d_S(\bv,\Ups) \leq 2N\ell \}\). We call this scenario a bad step.

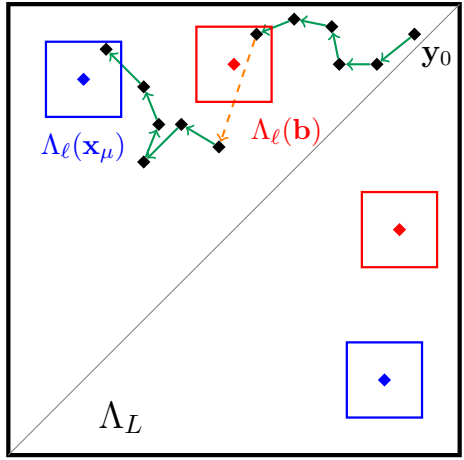
\begin{figure}\label{f:iter}
\tikzmath{\smallL= 0.5; \bigL =3; \oset=0.15; \h = 0.65; }
\begin{tikzpicture}

\draw[ultra thick] (-\bigL, -\bigL) rectangle (\bigL,\bigL);

\draw[gray] (-\bigL, -\bigL)  -- (\bigL, \bigL);


\node at (\bigL-0.3,\bigL-0.7) {\large $\by_0$};

\node(y0)[diamond, fill=black,scale=0.35] at (\bigL-0.6,\bigL-0.4){};
\node(y1)[diamond, fill=black,scale=0.35] at ($(y0)+(-\smallL,-0.4)$){};
\node(y2)[diamond, fill=black,scale=0.35] at ($(y1)+(-\smallL,0)$){};
\node(y3)[diamond, fill=black,scale=0.35] at ($(y2)+(-.1,\smallL)$){};
\node(y4)[diamond, fill=black,scale=0.35] at ($(y3)+(-\smallL,0.1)$){};
\node(y5)[diamond, fill=black,scale=0.35] at ($(y4)+(-\smallL,-0.2)$){}; 

\node(y6)[diamond, fill=black,scale=0.35] at ($(y5)+(-\smallL,-3*\smallL)$){};

\node(y7)[diamond, fill=black,scale=0.35] at ($(y6)+(-\smallL,0.3)$){};
\node(y8)[diamond, fill=black,scale=0.35] at ($(y7)+(-\smallL,-\smallL)$){};
\node(y9)[diamond, fill=black,scale=0.35] at ($(y8)+(0.2,\smallL)$){};
\node(y10)[diamond, fill=black,scale=0.35] at ($(y9)+(-0.2,\smallL)$){};
\node(y11)[diamond, fill=black,scale=0.35] at ($(y10)+(-\smallL,\smallL)$){};

\draw[ForestGreen,thick,->] (y0)--(y1);
\draw[ForestGreen,thick,->] (y1)--(y2);
\draw[ForestGreen,thick,->] (y2)--(y3);
\draw[ForestGreen,thick,->] (y3)--(y4);
\draw[ForestGreen,thick,->] (y4)--(y5);

\draw[ForestGreen,thick,->,orange,dashed] (y5)--(y6);

\draw[ForestGreen,thick,->] (y6)--(y7);
\draw[ForestGreen,thick,->] (y7)--(y8);
\draw[ForestGreen,thick,->] (y8)--(y9);
\draw[ForestGreen,thick,->] (y9)--(y10);
\draw[ForestGreen,thick,->] (y10)--(y11);


\node(bad)[diamond, fill=red,scale=0.35] at (0,\bigL-0.8){};
\redsquare{(bad)}{\smallL};

\node(bad2)[diamond, fill=red,scale=0.35] at (\bigL-0.8,0){};
\redsquare{(bad2)}{\smallL};

\node(res)[diamond,fill=blue,scale=0.35] at (-\bigL+1, \bigL-1) {};
\bluesquare{(res)}{\smallL};
\node(res2)[diamond,fill=blue,scale=0.35] at (\bigL-1, -\bigL+1) {};
\bluesquare{(res2)}{\smallL};

\node at (-\bigL+1.5,-\bigL+0.5) {\Large{$\Lambda_L$}};
\node at ($(res)+(0,-0.9)$) {\textcolor{blue}{$\Lambda_\ell(\mathbf{x}_\mu)$}};
\node at ($(bad)+(0.7,-0.9)$) {\textcolor{red}{$\Lambda_\ell(\mathbf{b})$}};
\end{tikzpicture}
\caption{\small{We illustrate the iteration scheme which generates the sequence $(\by_k)_{k=0}^K$. The bad cube $\Lambda_\ell(\bb)$, which is inside the buffered cube $\Ups$, is displayed in red. The cube $\Lambda_\ell(\bx_\mu)$ is displayed in blue. The good and bad steps in the iteration are displayed in green and orange, respectively.  Certain pre-factors, such as $200N^2$ in \eqref{item:y_iteration_2} below, are ignored in this figure for illustrative purposes. }}
\end{figure}

By iterating this procedure, see Figure \ref{f:iter}, we generate a final iteration index \( K \) and a  sequence of particle configurations \( (\by_k)_{k=0}^K \) satisfying the following properties: 
\begin{enumerate}[(i)]
\item \( \by_0 = \by \), 
\item \label{item:y_iteration_2} \( d_S(\by_k, \bx_\mu) \geq 200 N^2 \ell \) for all \( 0 \leq k \leq K-1 \) and \( d_S(\by_K,\bx_\mu)< 200 N^2 \ell \). 
\item If \( \by_k \not \in \Ups \) and \( 0 \leq k \leq K-1 \), then \( d_S(\by_k,\by_{k+1})\geq \ell \) and 
\begin{equation}\label{step:eq_good}
|\psi(\by_k)|\leq e^{-m^\prime d_S(\by_k,\by_{k+1})} |\psi(\by_{k+1})|.
\end{equation}
We then call the \( k \)-th iteration step good.
\item If \( \by_k \in \Ups \) and \( 0 \leq k \leq K-1 \), then \( \by_{k+1} \in \widetilde{\Ups} \) and 
\begin{equation}\label{step:eq_bad}
|\psi(\by_k)|\leq \frac{1}{2} |\psi(\by_{k+1})|.
\end{equation}
\end{enumerate}
We remark that the final index \( K \) is well-defined, i.e., the iteration stops after finitely many steps, since both \eqref{step:eq_good} and \eqref{step:eq_bad} gain at least a factor of \( 1/2 \).

Since all bad steps occur in \( \widetilde{\Ups} \), we obtain that 
\begin{equation}
\sum_{\substack{0\leq k \leq K-1: \\ k\text{-th step is good}}} d_S(\by_k, \by_{k+1}) \geq d_S(\bx_\mu,\by)- d_S(\bx,\by_K) - \diam(\widetilde{\Ups}) \geq d_S(\bx_\mu,\by) -250 N^2 \ell. 
\end{equation}
Using the a-priori estimate \( |\psi(\by_K)|\leq 1\), we obtain that 
\begin{equation}\label{step:eq_iterated}
|\psi(\by) |\leq \exp\Big( - m^\prime \sum_{\substack{0\leq k \leq K-1: \\ k\text{-th step is good}}} d_S(\by_k, \by_{k+1}) \Big) \leq \exp\Big(- m^\prime (d_S(\bx_\mu,\by) -250 N^2 \ell) \Big). 
\end{equation}
If \( d_S(\by,\bx_\mu) \geq L^\tau \), then \eqref{step:eq_iterated} and our definition of \( M \) implies that
\begin{equation}
|\psi(\by) |\leq \exp( - (m^\prime-M) L^\tau + 250 m^\prime N^2  \ell) \exp(- M d_S(\by,\bx_\mu)) =  \exp(- M d_S(\by,\bx_\mu)) . 
\end{equation}
This completes the proof of \eqref{step:eq_localizing} and hence the proof of the theorem. 
\end{proof}

\section{Proof of main theorem}
We now prove the  main result, Theorem \ref{theorem:AL}.

\begin{proof}
The proof proceeds through an induction on the length scales. We will not induct on the number of particles here, which was previously done in Proposition \ref{proposition:induction_on_n} and the proof of Theorem \ref{thm:ind_scale}. Before we start the induction, however, we need to choose a sequence of decay parameters $(p(n))_{n=1}^{N}$, length scales $(L_k)_{k=0}^\infty$, and (inverse) localization lengths $(m_k)_{k=0}^\infty$. Our choices are motivated by the conditions in Theorem \ref{thm:ind_scale}.

The decay parameters $(p(n))_{n=1}^N$ are defined through a backwards recursion: We set 
\begin{equation*}
p(N) := \max\Big( p, \Big( \frac{2}{\gamma}-1\Big)^{-1} (4Nd+2)\Big) 
\end{equation*}
and for all $2\leq n\leq N$ we define
\begin{equation*}
p(n-1) = \gamma \big( p(n) + 2n d +2 \big) +1. 
\end{equation*}
By iterating the definition, we obtain the upper bound
\begin{equation*}
\max_{1\leq n \leq N} p(n) \leq  20 \max\Big( (\gamma-1)^{-1}, \Big( \frac{2}{\gamma}-1 \Big)^{-1} \Big) \gamma^N \max(p,Nd) \leq p^\ast/2. 
\end{equation*}
Turning to the length scales $(L_k)_{k=0}^{\infty}$, we recall that the initial length scale $L_0$ is part of the statement of Theorem \ref{thm:ind_scale}. Due to our assumption, the initial length scale satisfies
\begin{equation}\label{eq:p_main_1}
L_0 \geq C(\rho,m,\tau,\gamma,\beta,d) \max_{1\leq n \leq N}  \Big( n + p(n) \Big)^{\frac{1}{\gamma \beta} + \frac{2}{1-\gamma}},
\end{equation}
where the constant $C$ is as in Theorem \ref{thm:ind_scale}. The remaining length scales $L_k$, where $k\geq 1$, are then defined through the (forward) recursion $L_{k}= L_{k-1}^\gamma$. 

We now turn to the localization lengths $(m_k)_{k=0}^\infty$. With $m>0$ as in the statement of the theorem, we define 
\begin{equation*}
m_0 = 2m \qquad \text{and} \qquad m_{k} = m_{k-1} \big( 1 - 3 L_{k-1}^{-\frac{1-\tau}{2}} \big) \big( 1 - 250 N^2 L_{k-1}^{1-\tau \gamma}\big) \quad \text{for all } k \geq 1. 
\end{equation*} 
Due to our choice of $m_0$, the length scales $(L_k)_{k=0}^\infty$, and the (large) constant $C=C(\rho,m,\tau,\gamma,\beta,d)$, all three factors in the recursion formula are positive and $m_k \geq m$ for all $k\geq 0$. \\

We now prove by a induction on $k\geq 0$ that 
\begin{equation}\label{eq:p_main_2}
\inf_{\bx_0 \in \bZ^{nd}} \mathbb{P} \{\Lambda_{L}^{(n)}(\bx_0)\textup{ is $m_k$-localizing for $H$}\} \geq 1 - L_k^{-p(n)} \qquad \text{for all } 1 \leq n \leq N.  
\end{equation}

\emph{Base case (in the length scale): $k=0$.}
Due to our assumption on the disorder parameter $\lambda$, we have for all $1\leq n \leq N$ that 
\begin{equation*}
\lambda \geq 2n d \| \rho\|_\infty (1+e^{m_0}) (n!)^2 (2L_0+1)^{2nd} L_0^{p(n)}. 
\end{equation*}
Thus, \eqref{eq:p_main_2} with $k=0$ follows from Proposition \ref{prop:initial_scale}. 

\emph{Induction step (in the length scale): $k-1\rightarrow k$.}   Due to our choice of $(p(n))_{n=1,\hdots,N}$, the condition \eqref{equ:pN_cond} in Theorem \ref{thm:ind_scale} is satisfied. The induction step then follows from the lower bound \eqref{eq:p_main_1} and Theorem \ref{thm:ind_scale} with $(\ell,L,m,M)$ replaced by $(L_{k-1},L_k,m_{k-1},m_k)$. \\

This completes the proof of our claim \eqref{eq:p_main_2}. The claim \eqref{eq:p_main_2} almost yields \eqref{eq:AL} in Theorem \ref{theorem:AL}, except that the length scale $L$ is currently restricted to the discrete sequence $(L_k)_{k=0}^\infty$. This restriction can essentially be removed as in Section \cite[Section 4.3]{EK16}.  Since \cite[Section 4.3]{EK16} uses exponential instead of polynomial tails, we mention that $p$ in the derivation of \eqref{eq:p_main_2} has to be replaced by $p+4Nd$, but omit all other details.
\end{proof}

\begin{appendix}

\section{Non-decreasing rearrangement and properties of covers}
Before we prove the covering properties, we provide a more convenient representation of the symmetrized distance. 
\begin{definition}[Non-decreasing rearrangement]\label{definition:non_decreasing}
Let \( \bx \in \mathbb{Z}^N \). We define \( \widehat{\bx} \in \bZ^N \) as the unique vector satisfying \( \widehat{\bx}_j \leq \widehat{\bx}_{j+1} \) for all \( j=1,\hdots,N-1 \) and \( \widehat{\bx} = \pi \bx \) for some \( \pi \in S_N \). We call \( \widehat{\bx} \) the non-decreasing rearrangement of \( \bx \). Furthermore, we call \( \bx \in \bZ^N  \) non-decreasing if \( \bx = \widehat{\bx} \). 
\end{definition}
Definition \ref{definition:non_decreasing} heavily depends on the natural order on \( \bZ\) and has no (exact) analogue in \( \bZ^d \) for \( d \geq 2 \). The next lemma characterizes the symmetrized distance of \( \bx \) and \( \by \) in terms of their non-decreasing rearrangements. 

\begin{lemma}\label{lemma:rearrangement}
Let \( N \geq 2 \) and let \( \bx,\by \in \bZ^N \). Then, it holds that 
\begin{equation}\label{eq:symmetrized_non_decreasing}
d_S(\bx,\by) = \| \widehat{\bx} - \widehat{\by} \|_{\infty}. 
\end{equation}
\end{lemma}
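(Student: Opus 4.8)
The plan is to prove the two inequalities $d_S(\bx,\by) \leq \|\widehat{\bx} - \widehat{\by}\|_\infty$ and $d_S(\bx,\by) \geq \|\widehat{\bx} - \widehat{\by}\|_\infty$ separately. The first is immediate from the definition: since $\widehat{\bx} = \sigma \bx$ and $\widehat{\by} = \rho \by$ for some $\sigma, \rho \in S_N$, we have $\widehat{\bx} - \widehat{\by} = \sigma\bx - \rho\by = \sigma(\bx - \sigma^{-1}\rho\,\by)$, and since permuting coordinates does not change the $\ell^\infty$-norm, $\|\widehat{\bx} - \widehat{\by}\|_\infty = \|\bx - \sigma^{-1}\rho\,\by\|_\infty \geq \min_{\pi \in S_N} \|\bx - \pi\by\|_\infty = d_S(\bx,\by)$.

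The substantive direction is $d_S(\bx,\by) \geq \|\widehat{\bx} - \widehat{\by}\|_\infty$, i.e., the non-decreasing rearrangement is the \emph{optimal} pairing. It suffices to show that for \emph{every} $\pi \in S_N$, $\|\bx - \pi\by\|_\infty \geq \|\widehat{\bx} - \widehat{\by}\|_\infty$, and since $\ell^\infty$-norms are permutation-invariant it is enough to show $\|\widehat{\bx} - \bz\|_\infty \geq \|\widehat{\bx} - \widehat{\by}\|_\infty$ for any vector $\bz$ that is a permutation of $\by$ (equivalently, of $\widehat{\by}$). The cleanest route is a rearrangement/exchange argument: I would show that among all permutations $\bz$ of $\widehat{\by}$, the $\ell^\infty$-distance to the fixed non-decreasing vector $\widehat{\bx}$ is minimized when $\bz$ is also non-decreasing, i.e., $\bz = \widehat{\by}$. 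To see this, suppose $\bz$ is not sorted, so there exist indices $i < j$ with $z_i > z_j$. Swapping $z_i$ and $z_j$ does not increase $\max(|\widehat{\bx}_i - z_i|, |\widehat{\bx}_j - z_j|)$: writing $a = \widehat{\bx}_i \leq \widehat{\bx}_j = b$ and $s = z_j < z_i = t$, one checks the elementary inequality $\max(|a - t|, |b - s|) \geq \max(|a - s|, |b - t|)$ by a short case analysis on the positions of $a,b,s,t$ on the real line. Since the other coordinates are untouched, the overall $\ell^\infty$-distance does not increase. Finitely many such adjacent-out-of-order swaps (a bubble-sort argument) transform $\bz$ into $\widehat{\by}$ without ever increasing the distance, giving $\|\widehat{\bx} - \bz\|_\infty \geq \|\widehat{\bx} - \widehat{\by}\|_\infty$.

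The main obstacle is the elementary two-variable inequality $\max(|a-t|,|b-s|) \geq \max(|a-s|,|b-t|)$ for $a \leq b$ and $s \leq t$ — it is not hard but requires a careful case split (or one can observe that sorting minimizes $\max_k |u_k - v_k|$ over simultaneous orderings, a known fact that also follows from the Hardy–Littlewood–Pólya theory of rearrangements). Once that lemma is in place, the swap/induction argument is routine. I would also remark on the degenerate reductions: the case where some coordinates of $\bx$ or $\by$ coincide causes no trouble since the rearrangement is still well-defined and the swap argument only uses strict inequality $z_i > z_j$ to have something to exchange. This completes both inequalities and hence \eqref{eq:symmetrized_non_decreasing}.
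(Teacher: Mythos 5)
Your proof is correct. Both your argument and the paper's rest on the same key two-variable inequality: if $a \leq b$ and $s \leq t$, then $\max(|a-t|,|b-s|) \geq \max(|a-s|,|b-t|)$, i.e., a single transposition that restores order cannot increase the $\ell^\infty$-distance to a sorted vector. Where you differ is in how this is leveraged: the paper wraps it in a formal induction on $N$, with the base case $N=2$ being exactly this inequality, and the induction step splitting into $\pi(N)=N$ (handled by the induction hypothesis) and $\pi(N)\neq N$ (handled by a chain of transpositions $\tau_k$ that "insertion-sort" the largest entry into its final position). You instead run a direct bubble-sort argument on the whole vector $\bz$, monotonically decreasing the number of inversions and appealing to termination. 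The two presentations are essentially equivalent in substance; your version avoids the explicit induction on $N$ at the modest cost of a (brief) termination argument, while the paper's version makes the recursion structure explicit. Either is a clean and complete proof.
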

\begin{proof}
Due to the permutation invariance of the identity \eqref{eq:symmetrized_non_decreasing}, we may assume that \( \bx \) and \( \by \) are non-decreasing. Then, the identity \eqref{eq:symmetrized_non_decreasing} is equivalent to 
\begin{equation}\label{eq:symmetrized_non_decreasing_2}
\| \bx - \pi \by \|_\infty \geq \| \bx - \by \|_\infty \qquad \forall \pi \in S_N. 
\end{equation}
We proceed by induction on the particle number \( N \geq 2 \). \\

\noindent\emph{Base case: \(  N= 2 \).} Since \eqref{eq:symmetrized_non_decreasing_2} clearly holds if \( \pi \) equals the identity, the base case reduces to 
\begin{equation*}
\max( |x_1 - y_2|, |x_2 - y_1|) \geq \max( |x_1 - y_1|, |x_2 - y_2|). 
\end{equation*}
Using the symmetry in \( \bx \) and \( \by \), we can further reduce \eqref{eq:symmetrized_non_decreasing_2} to 
\begin{equation}\label{eq:symmetrized_non_decreasing_p1}
\max( |x_1 - y_2|, |x_2 - y_1|) \geq \max( x_1 - y_1, x_2 - y_2). 
\end{equation}
Using that \( \bx \) and \( \by \) are non-decreasing,  \eqref{eq:symmetrized_non_decreasing_p1} follows directly from 
\begin{equation*}
x_1 - y_1 \leq x_2 - y_1 \qquad \text{and} \qquad x_2 - y_2 \leq x_2 - y_1. 
\end{equation*}
\emph{Induction step: \( N-1\rightarrow N \).} We further split the induction step into two cases. 

\noindent \emph{Case 1: \( \pi(N) = N \).} This case easily follows from the induction hypothesis. Indeed, let \( \widetilde{\bx} = (x_1,\hdots,x_{N-1}) \), \( \widetilde{\by} = ( y_1,\hdots,y_{N-1} ) \), and let \( \widetilde{\pi}\in S_{N-1} \) be the restriction of \( \pi \) to \( \{1,\hdots,N-1\} \). Using the induction hypothesis, we obtain that 
\begin{equation*}
\| \bx - \pi \by \|_\infty = \max( \| \widetilde{\bx} - \widetilde{\pi} \widetilde{\by} \|_\infty, |x_N - y_N|) 
\geq  \max( \| \widetilde{\bx} - \widetilde{\by} \|_\infty, |x_N - y_N|)
= \| \bx - \by \|_\infty. 
\end{equation*}
\emph{Case 2: \( \pi(N) \neq N \).} In this case, the largest entry of \( \pi \by \) is not in the \( N\)-th coordinate. The idea is to iterate the base case (\(N=2\)) to move the largest entry into the \( N\)-th coordinate, which then allows us to use Case 1. To be more precise, let \( j=\pi(N) \). For any \( j < k \leq N \), we let \( \tau_k \in S_N \) be the transposition of \( k-1 \) and \( k \). We set \( \pi_j := \pi \) and \( \pi_k := \tau_k \circ \pi_{k-1} \) for any \( j < k \leq N \). From \( \pi(N) = j \), it follows that \( \pi_k(N) = k \). Since \( \by \) is non-decreasing, we easily see from the definitions that 
\begin{equation*}
(\pi_{k-1} \by)_{k-1} = (\pi_j \by)_j = (\pi \by)_j = y_N \geq (\pi_{k-1} \by)_{k}. 
\end{equation*}
Applying the base case to the \( k-1\)-th and \( k \)-th coordinate, this implies for all \( j < k \leq N \) that 
\begin{equation}\label{eq:symmetrized_non_decreasing_p2}
\| \bx - \pi_{k-1} \by \|_\infty \geq \| \bx - (\tau_k \circ \pi_{k-1}) \by \|_\infty = \| \bx - \pi_{k} \by \|_\infty. 
\end{equation}
By iterating \eqref{eq:symmetrized_non_decreasing_p2}, we obtain that 
\begin{equation}\label{eq:symmetrized_non_decreasing_p3}
\| \bx - \pi \by \|_\infty  \geq \| \bx - \pi_N \by \|_\infty. 
\end{equation}
Since \( \pi_N(N) = N \), \eqref{eq:symmetrized_non_decreasing_p3} and Case 1 imply the desired estimate \eqref{eq:symmetrized_non_decreasing_2}. 
\end{proof}

After these preparations, we can now prove the desired properties of covers. 
\begin{proof}[Proof of Lemma \ref{lemma:cover}:]
We first prove \eqref{item:cover_1}. Without accounting for the permutation invariance of the centers, we only obtain \( \# C_{L,\ell} \leq \# \Lambda_L^{(N)}(\bx_0) \leq N! (2L+1)^{Nd} \). However, since \( \Lambda_l^{(N)}(\ba_1) = \Lambda_l^{(N)}(\ba_2) \) if (and only if) \(\ba_1 =\pi \ba_2 \) for some \( \pi \in S_N \), it holds that 
\begin{equation}
\# C_{L,\ell} \leq \Big \{ a \in \bZ^d \colon \| \ba - \bx_0 \|_{\infty} \leq L \Big \} \leq (2L+1)^{Nd}. 
\end{equation}
It remains to prove \eqref{item:cover_2}. We let \( \bx \in \Lambda_L(\bb) \). By the permutation invariance of \eqref{eq:cover}, both in terms of the center and the set itself, we can assume that \( \bx \) and \( \bb \) are non-decreasing. If \( d_S(\bb,\bx) =\| \bb - \bx \|_\infty \leq L-\ell \), we can simply choose \( \ba:= \bx \in \Xi_{L,\ell} \). Unfortunately, if \(  L-\ell < \| \bb - \bx \|_\infty \leq L \), then \( \bx \not \in \Xi_{L,\ell} \) and this choice is not admissible. Instead, we then choose \( \ba \) as the truncation of \( \bx \) with respect to \( \bb \), i.e., we define
\begin{equation*}
a_j := b_j + \operatorname{sign}(x_j-b_j) \cdot \min( |x_j-b_j|, L-\ell).
\end{equation*}
An equivalent expression is given by
\begin{equation*}
a_j = \max( b_j - (L-\ell), \min( x_j, b_j + (L-\ell))). 
\end{equation*}
Since both \( \bx \) and \( \bb \) are non-decreasing, \( \ba \) is also non-decreasing. We now want to show that \( \bx \in \Lambda_\ell^{\Lambda_L(\bb),\ell}(\ba) \), which is equivalent to proving that 
\begin{equation}\label{eq:cover_p1}
\by \in \Lambda_L(\bb) \backslash \Lambda_\ell(\ba) \quad \Longrightarrow \quad d_S(\by,\bx) > \ell. 
\end{equation}
Using Lemma \ref{lemma:rearrangement}, we have for any \( \by \in \Lambda_L(\bb) \backslash \Lambda_\ell(\ba) \) that 
\begin{equation}\label{eq:cover_p2}
d_S(\bx,\by) = \| \bx - \widehat{\by} \|_\infty, \quad d_S(\bb,\widehat{\by})= \| \bb - \widehat{\by} \|_\infty \leq L, \quad \text{and} \quad d_S(\ba,\by)= \| \ba - \widehat{\by} \|_\infty > \ell. 
\end{equation}
We may therefore replace \( \by \) in \eqref{eq:cover_p1} by \( \widehat{\by} \), and thus assume that \( \by \) is non-decreasing. After this reduction, it remains to prove for all (non-decreasing) \( \by \in \bZ^N \) that 
\begin{equation}\label{eq:cover_p3}
\| \by - \bb \|_\infty \leq L, ~ \| \by - \ba \|_\infty > \ell \quad \Longrightarrow \quad \| \by - \bx \|_\infty > \ell. 
\end{equation}
To prove \eqref{eq:cover_p3}, let \( 1 \leq j \leq N \) be such that \( |y_j - a_j|> \ell \). If \( |x_j - b_j|\leq L-\ell \), then \( a_j = x_j \) and hence \( |y_j - x_j|> \ell \). If \( |x_j - b_j | > L-\ell \), we assume without loss of generality that \( x_j > b_j + L-\ell \). Then, \( a_j = b_j + L-\ell \). Since \( |y_j -a_j | > \ell \) and \( y_j \leq b_j + L \), it follows that \( y_j < a_j - \ell \). This implies
\begin{equation*}
x_j - y_j > x_j - (a_j - \ell) \geq \ell. 
\end{equation*}
A visualization of the last step is provided in Figure \ref{figure:proof_step}. 
\end{proof}
\begin{remark}
The most severe difficulty in extending the covering lemma to spatial dimensions \( d \geq 2 \) lies in the analog of \eqref{eq:cover_p2}. In one spatial dimension, the symmetrized distance of \( \by \) and \( \ba,\bb \), or \( \bx \) is realized by the \emph{same} permutation, namely  the non-decreasing rearrangement. Unfortunately, this does not seem to hold in dimensions \( d \geq 2 \) and the different symmetrized distance may be realized by different permutations. This dramatically increases the geometric complexity behind the covering lemma. 
\end{remark}

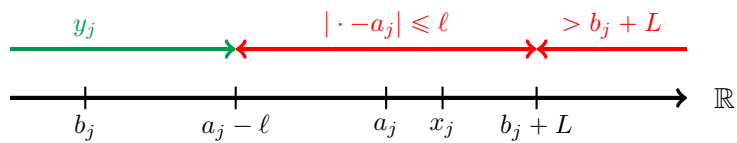
\begin{figure}[h]
\tikzmath{\smallL= 2; \bigL =6; \oset=0.15; \h = 0.65; }
\begin{tikzpicture}

\draw[ultra thick,->] (-1,0)--(\bigL+2,0);

\draw[thick] (0,-\oset) -- (0,\oset);
\draw[thick] (\bigL,-\oset) -- (\bigL,\oset);
\draw[thick] (\bigL-\smallL,-\oset) -- (\bigL-\smallL,\oset);
\draw[thick] (\bigL-\smallL+0.75,-\oset) -- (\bigL-\smallL+0.75,\oset);
\draw[thick] (\bigL-2*\smallL,-\oset) -- (\bigL-2*\smallL,\oset);

\node at (\bigL+2.5,0) {$\mathbb{R}$};
\node at (0,-.4) {\small{$b_j$}}; 
\node at (\bigL,-.4) {\small{$b_j+L$}}; 
\node at (\bigL-\smallL,-.4) {\small{$a_j$}}; 
\node at (\bigL-2*\smallL,-.4) {\small{$a_j-\ell$}}; 
\node at  (\bigL-\smallL+0.75,-0.4) {\small{$x_j$}};

\draw[ultra thick,<->,red] (\bigL- 2 *\smallL,\h) -- (\bigL,\h);
\draw[ultra thick,<-,red] (\bigL,\h) -- (\bigL+2,\h);
\draw[ultra thick,->,color=ForestGreen] (-1,\h) -- (\bigL-2*\smallL,\h);

\node[above] at (0,\h) {\textcolor{ForestGreen}{\small{$y_j$}}};
\node[above] at (\bigL-\smallL,\h) {\textcolor{red}{\small{$|\cdot - a_j| \leq \ell $}}};
\node[above] at (\bigL+1,\h) {\textcolor{red}{\small{$ > b_j +L  $}}};
\end{tikzpicture}
\caption{\small{This figure illustrates the last step in the proof of Lemma \ref{lemma:cover}. The two conditions $|y_j-a_j| > \ell$ and $|y_j-b_j|\leq L$ } imply that $y_j$ cannot lie inside the red areas. The remaining possible values of $y_j$, which are displayed in green, are at a distance greater than $\ell$ from $x_j$. }
\label{figure:proof_step}
\end{figure}

\end{appendix}

\bibliography{BM_library_intro}
\bibliographystyle{alpha}

\end{document}